\documentclass[aps,pra,11pt,onecolumn,nofootinbib,tightenlines,superscriptaddress]{revtex4-2}

\usepackage{mathrsfs}
\usepackage{afterpage}
\usepackage{graphicx}
\usepackage{comment}
\usepackage{dsfont}
\usepackage{amsmath}
\usepackage{amsfonts}
\usepackage{amssymb}
\usepackage{braket}
\usepackage{amsthm}
\usepackage[colorlinks=true,urlcolor=blue,citecolor=blue,linkcolor=blue]{hyperref}
\usepackage{enumitem}

\usepackage{lipsum}
\usepackage{subcaption}
\usepackage{mwe}

\usepackage{sidecap}
\usepackage{natbib}
\usepackage{appendix}
\usepackage{multirow}
\usepackage{array}
\usepackage{physics}
\usepackage{mathtools}
\usepackage{xcolor}

\usepackage{tikz}
\usepackage{graphics}

\usepackage{caption}
\usepackage[paperwidth=225mm,paperheight=307mm,centering,hmargin=2.45cm,vmargin=2.5cm]{geometry}

\newtheorem{theorem}{Theorem}
\newtheorem{lemma}[theorem]{Lemma}
\newtheorem{proposition}[theorem]{Proposition}
\newtheorem{corollary}[theorem]{Corollary}
\newtheorem{definition}[theorem]{Definition}
\newtheorem{remark}{Remark}

\DeclareMathOperator{\supp}{supp}

\newcommand{\D}{\mathbb{D}}
\newcommand{\rleq}{\trianglelefteq}
\newcommand{\rgeq}{\trianglerighteq}
\newcommand{\lgeq}{\trianglelefteq}

\begin{document}

\title{\Large Maximal R\'enyi Relative Entropy for $\alpha>2$} 

\author{Roberto Rubboli}
\email{ror@math.ku.dk}
\affiliation{Department of Mathematical Sciences, University of Copenhagen, Universitetsparken 5, 2100 Denmark}

\begin{abstract}
Quantum relative entropies play a fundamental role in quantum information theory. In the classical setting, Rényi relative entropies constitute, up to linear combinations, the most general class of relative entropies, naturally motivating the search for their minimal and maximal quantum extensions. The minimal extension is known to be the reverse sandwiched Rényi relative entropy for \(\alpha\in[0,1/2)\) and the sandwiched Rényi relative entropy for \(\alpha\geq 1/2\). In contrast, the maximal extension had previously been identified only for \(\alpha\in[0,2]\), where it is given by the geometric Rényi relative entropy. In this work, we complete this characterization by proving that for \(\alpha>2\), the maximal extension is given by the \(\alpha\)-\(z\) Rényi relative entropy with \(z=\alpha-1\). As an application, we determine when an energy-incoherent state can be transformed into an energy-coherent state by a Gibbs-preserving operation assisted by an uncorrelated catalyst, thereby fully characterizing the coherence-generating power of this class of operations in the catalytic setting.
\end{abstract}

\maketitle

\tableofcontents

\section{Introduction}
Quantum relative entropies quantify the distinguishability between quantum states, playing a foundational role throughout quantum information theory~\cite{tomamichel2015quantum,khatri2020principles}. 
Yet, despite their importance, the structure of the class of all quantum relative entropies remains only partially understood.

In the classical setting, any relative entropy, that is, a functional on pairs of probability distributions that satisfies the data-processing inequality under stochastic maps and is additive under tensor products, can be written as a linear combination of the classical Rényi relative entropies~\cite{renyi1961divergence} (see~\cite{mu2021blackwell} and Section~\ref{sec: relative entropies} for a precise statement)
\begin{align}
     D_\alpha(p\| q)
=
\frac{1}{\alpha-1}
\log \sum_x p(x)^\alpha q(x)^{1-\alpha}.
 \end{align}
In the quantum setting, by contrast, a complete characterization remains open. This difficulty is reflected in the existence of several inequivalent quantum generalizations of the classical Rényi relative entropies.

However, several results are known about the minimal and maximal elements of this class. In particular, any quantum relative entropy \(\D_\alpha\) that reduces to the classical Rényi relative entropy \(D_\alpha\) on commuting states must satisfy the lower bound~\cite{hiai1991proper,mosonyi2015quantum,hayashi_2016-1,mosonyi2024geometric}
\begin{equation}
    \D_\alpha(\rho\|\sigma) \geq  \begin{cases}
    D_{\alpha,1-\alpha}(\rho\|\sigma), & \alpha \in [0,1/2]\\
         \widetilde{D}_{\alpha}(\rho\|\sigma), & \alpha \in (1/2,\infty]\,.
    \end{cases} 
\end{equation}
Here, $\widetilde{D}_{\alpha}$ is the sandwiched R\' enyi relative entropy~\cite{wilde2014strong,muller2013quantum}
\begin{align}
\label{eq:sandwiched-renyi}
    \widetilde{D}_{\alpha}(\rho\|\sigma) = \frac{1}{\alpha - 1} \log \Tr \left[ \left( \sigma^{\frac{1-\alpha}{2\alpha}} \rho \sigma^{\frac{1-\alpha}{2\alpha}} \right)^{\alpha} \right] \,,
\end{align}
while \(D_{\alpha,1-\alpha}\) is the reverse sandwiched relative entropy, which can be defined in terms of the sandwiched relative entropy with the arguments exchanged. Explicitly, 
\begin{align}
    D _ {\alpha,1-\alpha}(\rho\|\sigma)=\frac{\alpha}{1-\alpha}\, \widetilde{D}_{1-\alpha}(\sigma\|\rho) \,.
\end{align}
This quantity belongs to the family of \(\alpha\)-\(z\) Rényi relative entropies~\cite{audenaert13_alphaz} and is obtained by setting \(z=1-\alpha\).
Hence, the sandwiched and reverse sandwiched relative entropies correspond to the minimal quantum extension of the classical Rényi relative entropies.

Partial results concerning the maximal extension are also available. In particular, any quantum relative entropy \(\D_\alpha\) that reduces to the classical R\'enyi relative entropy \(D_\alpha\) on commuting states satisfies~\cite{matsumoto2013new}
\begin{equation}
    \label{eq: maximal gap}\D_\alpha(\rho\|\sigma) \leq  \begin{cases}
    \widehat{D}_{\alpha}(\rho\|\sigma), & \alpha \in [0,2]\\
         ? & \alpha \in (2,\infty]\,.
    \end{cases} 
\end{equation}
Here, $\widehat{D}_{\alpha}$ denotes the geometric relative entropy, defined as~\cite{matsumoto2013new,matsumoto2018maxdivergence}
\begin{align}
\label{def: geometric relative entropy}
    \widehat{D}_\alpha(\rho\|\sigma) = \frac{1}{\alpha-1}\log{\Tr\left[\sigma\left(\sigma^{-1/2}\rho\sigma^{-1/2}\right)^\alpha\right]} \,.
\end{align}
Notably, this quantity is defined through the trace of the \(\alpha\)-weighted geometric mean of the two matrices~\cite{kubo1980means}, from which many of its structural properties follow. Thus, in the range $\alpha\in[0,2]$, the geometric Rényi relative entropy constitutes the maximal quantum Rényi relative entropy. The characterization remains incomplete, however, as the maximal quantum extension for \(\alpha>2\) has yet to be identified.

\bigskip

In this work, we resolve this question and prove that for \(\alpha >2\), the maximal relative entropy is given by 
\begin{align}
\label{def: alpha-z for z=alpha-1}
D_{\alpha,\alpha-1}(\rho \| \sigma)=
\frac{1}{\alpha-1}\log{\Tr[\big(\rho^\frac{\alpha}{2(\alpha-1)}\sigma^{-1}\rho^\frac{\alpha}{2(\alpha-1)}\big)^{\alpha-1}]}\,.
\end{align}
In particular, our main result is the following.
\begin{theorem}
\label{thm: main theorem}
Let $\rho$ and $\sigma$ be quantum states. Then, every quantum relative entropy $\D_\alpha$ that reduces to the classical Rényi relative entropy $D_\alpha$ on classical states satisfies 
\begin{equation}
    \D_\alpha(\rho\|\sigma) \leq  \begin{cases}
    \widehat{D}_{\alpha}(\rho\|\sigma), & \alpha \in [0,2]\\
         D_{\alpha,\alpha-1}(\rho\|\sigma), & \alpha \in (2,\infty]\,.
    \end{cases} 
\end{equation}
\end{theorem}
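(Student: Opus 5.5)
The plan is to use the reverse-test (Matsumoto-type) argument. Suppose there is a classical pair $(p,q)$ on a finite alphabet and a quantum channel $\Lambda$ with $\Lambda(\mathrm{diag}(p))=\rho$ and $\Lambda(\mathrm{diag}(q))=\sigma$. Then data processing and the classical reduction of $\D_\alpha$ give
\begin{align}
\D_\alpha(\rho\|\sigma)\le \D_\alpha(\mathrm{diag}(p)\|\mathrm{diag}(q)) = D_\alpha(p\|q).
\end{align}
So $\D_\alpha(\rho\|\sigma)$ is bounded by the infimum of $D_\alpha(p\|q)$ over all reverse tests. For $\alpha\in[0,2]$ the case is already settled: the reverse test built from the eigendecomposition of $\rho^{-1/2}\sigma\rho^{-1/2}$ gives $D_\alpha(p\|q)=\widehat D_\alpha(\rho\|\sigma)$, which is the known result~\cite{matsumoto2013new}. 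For $\alpha>2$, the theorem only needs an upper bound, so it suffices to exhibit reverse tests, possibly on $n$ copies and asymptotically, whose classical value is at most $D_{\alpha,\alpha-1}(\rho\|\sigma)$. I will first reduce to $\sigma$ of full rank; the case $\supp\rho\not\subseteq\supp\sigma$ is trivial since the right-hand side is $+\infty$. General $\sigma$ then follows by taking $\sigma_\epsilon=(1-\epsilon)\sigma+\epsilon\,\mathds{1}/d$ and using continuity of $D_{\alpha,\alpha-1}$ in $\sigma$, together with the monotonicity of $\D_\alpha$ under the mixing channel.

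The key structural point concerns which reverse tests can do better than the geometric one. If $\Lambda$ is prepare-type with $d$ outcomes, $\Lambda(|x\rangle\langle x|)=v_xv_x^*/\|v_x\|^2$, and the $v_x$ are linearly independent, then writing $\rho=VPV^*$ and $\sigma=VQV^*$ forces $q_x/p_x$ to be the eigenvalues of $\rho^{-1/2}\sigma\rho^{-1/2}$. This only reproduces $\widehat D_\alpha$. I will therefore use overcomplete families, with more outcomes than the dimension, or mixed output states. The natural candidate comes from the spectral decomposition
\begin{align}
X:=\rho^{\beta}\sigma^{-1}\rho^{\beta}=\sum_x \lambda_x\, e_xe_x^*,\qquad \beta=\tfrac{\alpha}{2(\alpha-1)}.
\end{align}
I will choose the preparation vectors as $v_x\propto \sigma\rho^{-\beta}e_x$, together with complementary vectors that absorb the mismatch. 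The weights $p_x,q_x$ are then tuned so that $\sum_x p_x^\alpha q_x^{1-\alpha}$ collapses to $\Tr X^{\alpha-1}$. The heuristic reason to expect this is that, for $\alpha>2$, the exponent $z=\alpha-1\ge\alpha/2$ lies exactly on the boundary of the known data-processing region of the $\alpha$-$z$ family. This suggests that $D_{\alpha,\alpha-1}$ is itself the value of an optimal (possibly asymptotic) reverse test.

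If an exact single-copy construction fails, I will pass to a regularized version. Apply the argument to $\rho^{\otimes n},\sigma^{\otimes n}$, first pinching $\sigma^{\otimes n}$ or using type-class decompositions to make the relevant operators approximately commute. Then build approximate reverse tests with error vanishing in $n$, and conclude using additivity of $\D_\alpha$ and of $D_{\alpha,\alpha-1}$, with the polynomial pinching overhead disappearing in the limit. The case $\alpha=\infty$ will follow by taking $\alpha\to\infty$, provided $\D_\infty$ is interpreted as the corresponding limit.

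The main obstacle is constructing the reverse test itself. The preparation must reproduce $\rho$ and $\sigma$ exactly or approximately, and its classical Rényi sum must equal $\Tr[(\rho^{\beta}\sigma^{-1}\rho^{\beta})^{\alpha-1}]$. Where exact equality is out of reach, the construction must at least be asymptotically tight, which requires checking that the error terms, raised to the power $\alpha>2$, remain controlled. My first attempt will be a variational route. I will write $D_{\alpha,\alpha-1}$ through its known variational formula as a supremum over test operators. I will then dualize the reverse-test optimization, which is a convex problem in $(p,q,\Lambda)$, and check that the two optimal values coincide for $\alpha\ge2$.
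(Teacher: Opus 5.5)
\textbf{There is a genuine gap: the preparation map for $\alpha>2$ is never constructed.} Your reduction is the right one. An exact preparation $(p,q,\Lambda)$ of $(\rho^{\otimes n},\sigma^{\otimes n})$ gives $n\,\D_\alpha(\rho\|\sigma)\le D_\alpha(p\|q)$, so for $\alpha>2$ it suffices to bound the regularized prepared divergence by $D_{\alpha,\alpha-1}$. But you never supply the preparation, and that is the entire new content of the theorem. Your candidate $v_x\propto\sigma\rho^{-\beta}e_x$ does give $\sigma=\sum_x\lambda_x v_xv_x^*$. However, $\rho$ is not diagonal in this frame, the ``complementary vectors'' are left unspecified, and no cost is ever computed. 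The dualization route is likewise only announced.

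The missing idea is as follows. Write $Q^{\mathbb P}_\alpha(\rho\|\sigma)=\inf\{\sum_x t_x^\alpha\Tr X_x:\ X_x\ge0,\ \sum_xX_x=\sigma,\ \sum_xt_xX_x=\rho\}$. In this form an outcome with $t_0=0$ costs nothing for $\alpha>1$. So it is enough to find $\tilde\sigma\le\sigma$ that commutes with $\rho$: prepare the commuting pair $(\rho,\tilde\sigma)$ in a joint eigenbasis and put $\sigma-\tilde\sigma$ on the free outcome. This is exactly the mixed, overcomplete ensemble you anticipated. Such a $\tilde\sigma$ comes from pinching the \emph{inverse} in the eigenbasis of $\rho$. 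With $m=|\mathrm{spec}(\rho)|$, the pinching inequality gives $\sigma^{-1}\le m\,\mathcal{P}_\rho(\sigma^{-1})$. Operator antimonotonicity of inversion then gives $\tilde\sigma:=(m\,\mathcal{P}_\rho(\sigma^{-1}))^{-1}\le\sigma$. Pinching $\sigma$ itself goes the wrong way, since $\mathcal{P}_\rho(\sigma)\ge\sigma/m$ dominates $\sigma$ rather than sitting below it. The cost is
\begin{align*}
\Tr\bigl[\rho^\alpha\tilde\sigma^{1-\alpha}\bigr]
= m^{\alpha-1}\Tr\Bigl[\mathcal{P}_\rho\bigl(\rho^\beta\sigma^{-1}\rho^\beta\bigr)^{\alpha-1}\Bigr]
\le m^{\alpha-1}\,Q_{\alpha,\alpha-1}(\rho\|\sigma).
\end{align*}
The equality uses $P_j\rho^\beta\sigma^{-1}\rho^\beta P_j=r_j^{2\beta}P_j\sigma^{-1}P_j$ together with $2\beta(\alpha-1)=\alpha$. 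The inequality is Schatten-$(\alpha-1)$ contractivity of pinching, which needs precisely $\alpha-1\ge1$. On $n$ copies $m\le(n+1)^{d-1}$, so the overhead $(d-1)\log(n+1)/n$ vanishes.

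\textbf{Three of your auxiliary steps also fail as written.}

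First, the fallback to approximate reverse tests does not work. Since $\D_\alpha$ is only assumed to satisfy data processing, additivity and the classical reduction, it need not be continuous. Hence ``approximate reverse tests with error vanishing in $n$'' give no bound on $\D_\alpha(\rho\|\sigma)$. The preparation must output $\rho^{\otimes n},\sigma^{\otimes n}$ exactly, and all slack must sit in the classical cost, as with the factor $m^{\alpha-1}$ above.

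Second, the reduction via $\sigma_\epsilon$ fails for the same reason. The mixing channel also moves $\rho$, and data processing yields $\D_\alpha(\rho_\epsilon\|\sigma_\epsilon)\le\D_\alpha(\rho\|\sigma)$, which is the wrong direction. Instead, when $\supp(\rho)\subseteq\supp(\sigma)$, simply carry out the construction on $\supp(\sigma)$.

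Third, $\alpha=\infty$ is not a limit statement, because $\D_\infty$ is an arbitrary extension of $D_{\max}$ rather than a member of a family. You need a direct preparation instead: take $\tau_1=\rho$, $\tau_2=(c\sigma-\rho)/(c-1)$, $p=(1,0)$ and $q=(1/c,1-1/c)$, with $c=\exp D_{\max}(\rho\|\sigma)>1$ for $\rho\neq\sigma$. This gives $D_{\max}(p\|q)=D_{\max}(\rho\|\sigma)$.
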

This resolves a conjecture previously formulated by the author~\cite{rubboli2025thesis}.
The relative entropy $D_{\alpha,\alpha-1}$ belongs to the family of $\alpha$-$z$ Rényi relative entropies~\cite{audenaert13_alphaz} (see Section~\ref{sec: closed forms} for more details). Our result fills the remaining gap in~\eqref{eq: maximal gap}, yielding a complete characterization of the minimal and maximal quantum extensions of the classical Rényi relative entropies. The overall picture is illustrated in Figure~\ref{fig:extremal_entropies}. 
As a key intermediate step, we prove that, for every \(\alpha>2\), the regularized prepared Rényi relative entropy coincides with \(D_{\alpha,\alpha-1}\) (see Propositions~\ref{prop: converse} and~\ref{prop: achievability}). Combined with the previously known expression for \(\alpha\in[0,2]\), this yields a closed-form characterization valid for all \(\alpha\geq 0\), and provides a unified definition encompassing the geometric Rényi relative entropy for \(\alpha\in[0,2]\) and the \(\alpha\)-\(z\) Rényi relative entropy with \(z=\alpha-1\) for \(\alpha>2\).
To establish this result, we explicitly construct an asymptotically optimal preparation map that differs from the map introduced by Matsumoto~\cite{matsumoto2018maxdivergence}, whose optimality is restricted to the range \(\alpha\in[0,2]\) (see Remark~\ref{rem: optimal preparation map} for an explicit description of the preparation map).

\bigskip

\textbf{Applications.} As an application, we characterize the optimal asymptotic rate for the conversion of a pair of classical states \((p,q)\) into a pair of quantum states \((\rho,\sigma)\). More precisely, \(R((p,q)\to(\rho,\sigma))\) is the supremal rate \(R\) for which, for all sufficiently large \(n\), there exists a quantum channel \(\mathcal{E}_n\) implementing
\begin{align}
    p^{\otimes n}
    \xrightarrow{\ \mathcal{E}_n\ }
    \rho^{\otimes \lfloor Rn\rfloor},
    \quad
    q^{\otimes n}
    \xrightarrow{\ \mathcal{E}_n\ }
    \sigma^{\otimes \lfloor Rn\rfloor}.
\end{align}
We show that this rate is given by
\begin{equation}
    R\bigl((p,q)\to(\rho,\sigma)\bigr)
    =
    \min_{\D\in\mathcal{D}}
    \frac{\D(p\|q)}{\D(\rho\|\sigma)}\,,
\end{equation}
where \(\mathcal{D}\) consists of both argument orderings of
\(\widehat{D}_{\alpha}\) for \(\alpha\in[1/2,2]\) and of
\(D_{\alpha,\alpha-1}\) for \(\alpha\in[2,\infty]\) (see Corollary~\ref{cor: rate}).

Consequently, we derive necessary and sufficient conditions for transforming an energy-incoherent state into an energy-coherent one under catalytically assisted Gibbs-preserving operations. Unlike the more physically motivated class of thermal operations~\cite{janzing2000thermodynamic,horodecki2013fundamental,brandao2013resource}, Gibbs-preserving operations can generate coherence between distinct energy levels~\cite{faist2015gibbs}. Our result thus provides a complete characterization of their coherence-generating power in the catalytic setting.
Explicitly, in Theorem~\ref{thm:catalytic-gibbs-preserving} we prove that an
incoherent state \(p\) can be transformed into a coherent state \(\rho\)
with the assistance of a catalyst \(\nu\), returned unchanged and
uncorrelated with the system, if and only if
\begin{align}
    \D(p\|\gamma) \geq \D(\rho\|\gamma)
    \qquad \text{for all } \D\in\mathcal{D}.
\end{align}

\begin{figure}[htbp]
\centering
\begin{tikzpicture}

    \node at (0,0) {\includegraphics[width=0.6\textwidth]{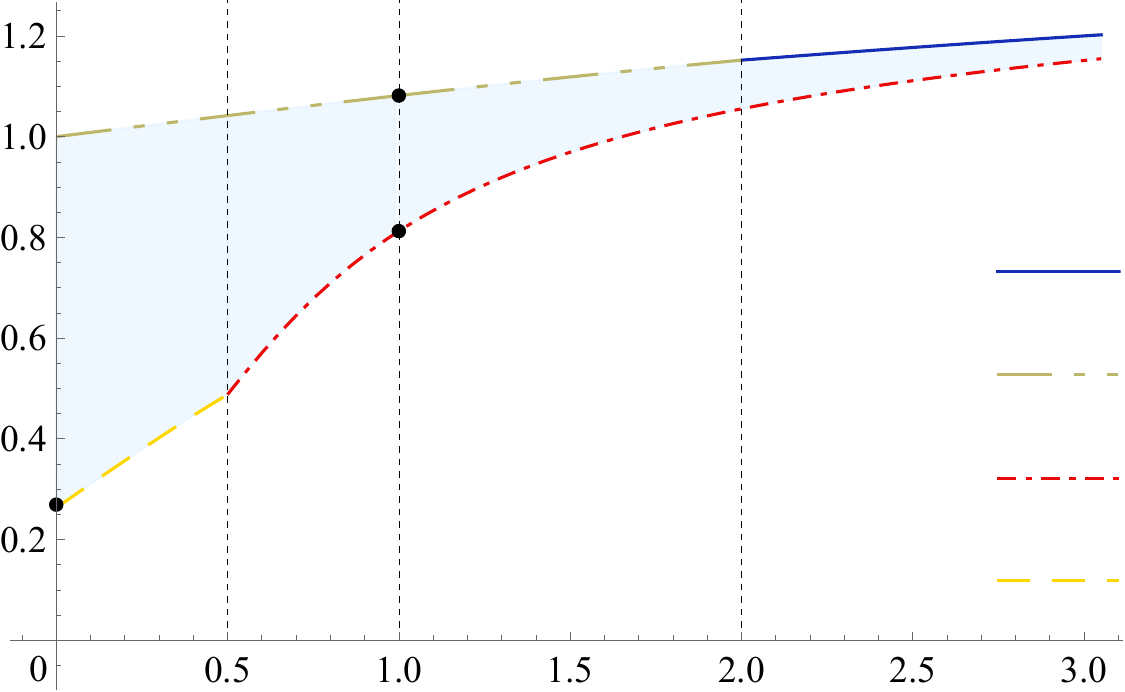}};

    \node at (-4.25,-1.75) {\large $D_{\min}$};
    \node at (-1.35,0.85) {\large $D$};
    \node at (-1.35,2.7) {\large $\widehat{D}$};
    \node at (4,3.15) {\large$D_{\max}$};

    \draw[->] (4.5,3.15) -- (5,3.18)
    node[midway, right, sloped] { $\;\,+\infty$};

    \node at (3.4,-2.2) {\large $D_{\alpha,1-\alpha}$};
    \node at (3.8,-0.25) {\large $\widehat{D}_{\alpha}$};
    \node at (3.8,-1.2) {\large $\widetilde{D}_\alpha$};
     \node at (3.4,0.7) {\large $D_{\alpha,\alpha-1}$};
    \node at (5.5,-3) {\large $\alpha$};

\end{tikzpicture}

\caption{
Minimal and maximal relative entropies evaluated for the states
\[
\rho =
\frac{1}{6}
\begin{pmatrix}
1 & 1 & 0 \\
1 & 3 & 2 \\
0 & 2 & 2
\end{pmatrix},
\qquad
\sigma =
\frac{1}{8}
\begin{pmatrix}
5 & 1 & 0 \\
1 & 2 & 1 \\
0 & 1 & 1
\end{pmatrix}.
\]
In this figure, all logarithms are taken to base \(2\).
For fixed $\alpha$, the minimal relative entropy is the reverse sandwiched relative entropy \(D_{\alpha,1-\alpha}\) for \(\alpha\in[0,1/2]\), and the sandwiched relative entropy for \(\alpha\ge 1/2\). The maximal relative entropy is the geometric relative entropy for \(\alpha\in[0,2]\). For \(\alpha>2\), this maximal quantity was previously unknown; we show here that it coincides with \(D_{\alpha,\alpha-1}\). All (normalized) quantum relative entropies are contained within the light blue region.  Across all \(\alpha\), the smallest and largest relative entropies are \(D_{\min}\) and \(D_{\max}\), respectively. $D$ and $\hat{D}$ represent the Umegaki and the Belavkin-Staszewski relative entropies, respectively.
}
\label{fig:extremal_entropies}
\end{figure}

\section{Preliminaries}
\subsection{Notation}
\label{sec: notation}
We denote by $\mathcal{P}(A)$ the set of positive semidefinite operators on a Hilbert space $A$. Moreover, we denote by $\mathcal{S}(A)$ the set of quantum states, i.e., the subset of $\mathcal{P}(A)$ with unit trace.
The Schatten $p$-norm is defined as $\|A\|_p = \big(\text{Tr}[(AA^\dagger)^\frac{p}{2} ]\big)^\frac{1}{p}$ for $p \geq 1$. For any Hermitian operator $L$, we denote by $\mathcal P_L$ the pinching map associated with $L$, defined for any linear operator $K$ as
\begin{align}
    \mathcal P_L(K)
    = \sum_{\lambda \in \operatorname{spec}(L)}
    P_{L,\lambda}\, K \, P_{L,\lambda},
\end{align}
where $P_{L,\lambda}$ denotes the orthogonal projector onto the eigenspace of $L$ corresponding to the eigenvalue $\lambda$. We denote by $|\mathrm{spec}(L)|$ the size of the spectrum of $L$, i.e., the number of distinct eigenvalues.
The pinching inequality states that for a positive operator $M$~\cite{hayashi2002optimal} (see also~\cite{tomamichel2015quantum})  
\begin{align}
\label{eq: pinching inequality}
\mathcal{P}_{L}(M) \geq \frac{M}{|\mathrm{spec}(L)|} 
\end{align}
We identify classical states with probability distributions by representing them as diagonal density operators. Accordingly, a pair of quantum states is called classical if the states commute, or equivalently, if they are simultaneously diagonalizable.

\subsection{Classical and quantum relative entropies}
\label{sec: relative entropies}

Let us first consider general functionals $\D$ of the form
\begin{align}
    \D : \ \bigcup_{A} \, \big\{ \mathcal{S}(A) \times \mathcal{S}(A) \big\} \to \mathbb R \cup \{+ \infty\} \,. \label{eq:D}
\end{align}
A functional $\D$
is called \textit{quantum relative entropy} if it satisfies the following properties~\cite{gour2020extensions}:
\begin{enumerate}
    \item \textbf{Data-processing inequality (DPI)}: For every quantum channel 
    $\mathcal{E}$, that is, every completely positive trace-preserving map,
    \begin{align}
        \D(\rho\| \sigma) \geq  \D(\mathcal{E}(\rho)\| \mathcal{E}(\sigma))\,.
    \end{align}
    \item \textbf{Additivity}:  $ \D(\rho_1\otimes \rho_2\| \sigma_1\otimes \sigma_2)=  \D(\rho_1\| \sigma_1)+ \D( \rho_2\| \sigma_2)$.
\end{enumerate}
An additional normalization condition is sometimes imposed. In particular, one often requires (see,
e.g.,~\cite{gour2020extensions})
\begin{align}
    \D(\ketbra{0}{0}\|I/2)=1.
\end{align}
In the present work, however, we do not impose this condition unless explicitly stated otherwise.

\bigskip

A classical relative entropy is a functional on pairs of probability distributions that is monotone under stochastic maps and additive under tensor products. Every such functional that is finite on pairs of probability distributions with coinciding supports can be represented, for any such pair \(p\) and \(q\), as a generalized linear combination of classical Rényi relative entropies~\cite{mu2021blackwell}. More precisely, it takes the form
\begin{align}
    \int_{\alpha\in[1/2,\infty]}\mathrm{d} \mu(\alpha)D_\alpha(p\|q) + \int_{\alpha\in[1/2,\infty]}\mathrm{d} \nu(\alpha)D_\alpha(q\|p),
\end{align}
where $\mu$ and $\nu$ are positive measures on the extended positive reals. For  $\alpha\in(0,1)\cup(1,\infty)$, the classical Rényi relative entropy is defined as~\cite{renyi1961divergence}
\begin{align}
    D_\alpha(p\| q) = \frac{1}{\alpha-1} \log \sum_x p(x)^\alpha q(x)^{1-\alpha}
\end{align}
Its values at $\alpha=0,1,\infty$ are defined by taking the corresponding limits.\footnote{The characterization above does not explicitly include \(D_{\min}\), as this relative entropy is trivial on pairs of states with coinciding supports. Pairs with non-coinciding supports require a more refined treatment. One possible approach is to extend the analysis directly to non-full-rank states, along the lines of~\cite{verhagen2025matrix}; another is to consider additional continuity assumptions, as in~\cite{gour2026quantum}. Finally, we note that the functionals $(p,q)\mapsto D_{\alpha}(q\|p)$ fail to satisfy the normalization axiom.} In particular, this result implies that the Rényi relative entropies constitute the extremal elements of the set of classical relative entropies, in the sense that every relative entropy can be expressed as a linear combination of them. When the probability distributions have different supports, the Rényi relative entropies are defined to be $+\infty$ when $p$ and $q$ are orthogonal for $\alpha  < 1$, or when the support of $q$ does not contain that of $p$ for $\alpha >1$. We denote the trace term by $Q_\alpha(p\|q)=\exp((\alpha-1)D_\alpha(p\|q))$.

\subsection{Measured and prepared relative entropies}
\label{sec: measured and prepared}

In the quantum setting, a natural approach to defining relative entropies is to extend the classical ones to quantum states via measurement or preparation maps. These operational constructions lift any classical relative entropy to a quantum divergence satisfying the DPI. Whenever the regularized limit exists, the resulting functional is weakly additive; in all cases considered here, it is fully additive and therefore defines a quantum relative entropy.
This framework traces back to the work of Donald~\cite{donald1986relent} and was further developed in~\cite{matsumoto2010relative,matsumoto2018maxdivergence}; see also~\cite{gour2020extensions} for a broader account of such extension methods.
Since, as discussed above, the Rényi relative entropies provide, through generalized linear combinations, the most general form of classical relative entropy, we henceforth focus on their measured and prepared extensions to the quantum setting.

The measured Rényi relative entropy is defined as~\cite{fuchs_1996,berta2017variational}
\begin{align}
    D^{\mathbb{M}}_\alpha(\rho\|\sigma) = \max_{\mathcal{M}} D_\alpha (\mathcal{M}(\rho) \| \mathcal{M}(\sigma) ),
\end{align}
where the maximization is taken over all measurements $\mathcal{M}$, i.e., quantum-to-classical channels. We denote the trace term by $Q^{\mathbb{M}}_\alpha(\rho\|\sigma)=\exp((\alpha-1)D^{\mathbb{M}}_\alpha(\rho\|\sigma))$.
The prepared relative entropy is defined as
\begin{align}
        D^{\mathbb{P}}_\alpha(\rho\|\sigma) = \inf_{p, q, \mathcal{F}} D_\alpha(p\|q)\, ,
    \end{align}
    where the infimum is taken over probability distributions $p, q$ of the same arbitrary finite dimension and preparation maps, i.e., classical-to-quantum channels, $\mathcal{F}$ such that $\mathcal{F}(p) = \rho$ and $\mathcal{F}(q) = \sigma$. We denote the trace term by $Q^{\mathbb{P}}_\alpha(\rho\|\sigma)=\exp((\alpha-1)D^{\mathbb{P}}_\alpha(\rho\|\sigma))$. At $\alpha=\infty$, we denote the corresponding prepared divergence by  $D_{\max}^{\mathbb{P}}$.
The DPI inequality applied to the measurement and preparation maps implies that any relative entropy $\D_\alpha$ that reduces to $D_{\alpha}$ on classical states satisfies 
    \begin{align}
        D_\alpha^{\mathbb{M}}(\rho\|\sigma)\leq \D_\alpha(\rho\|\sigma) \leq  D^{\mathbb{P}}_\alpha(\rho\|\sigma)\,.
    \end{align}
In general, however, the measured and prepared extensions need not be additive. One may therefore consider their regularizations in order to obtain additive quantities. In particular, since \(\D_\alpha\) is additive, it holds that, whenever the limit exists,
\begin{align}
\label{eq: sandwiched between regularized}
        \lim_{n\to\infty}\frac{1}{n}D_\alpha^{\mathbb{M}}(\rho^{\otimes n}\|\sigma^{\otimes n})\leq \D_\alpha(\rho\|\sigma) \leq  \lim_{n\to\infty}\frac{1}{n}D_\alpha^{\mathbb{P}}(\rho^{\otimes n}\|\sigma^{\otimes n})\,.
    \end{align}
The regularized quantities on the left and right satisfy the data-processing inequality and have the correct classical restriction by construction. Consequently, additivity is the only nontrivial property that remains to be established for them to define quantum relative entropies. Once this property is verified, the bounds above show that they constitute, respectively, the minimal and maximal quantum extensions of the classical R\'enyi relative entropy \(D_\alpha\).
As discussed in detail below, closed-form expressions are known for these quantities in all but one parameter regime, and their additivity follows readily from the corresponding formulas. The only remaining case is the maximal extension for \(\alpha>2\). One of the main results of this paper is a closed-form expression in this regime, which establishes the additivity of the corresponding regularized quantity.

\subsection{Closed-form expressions for the regularized measured and prepared relative entropies}
\label{sec: closed forms}
Closed-form expressions for the measured and prepared Rényi relative entropies are known in several ranges of the parameter \(\alpha\). In particular, the regularized measured Rényi relative entropy is given by~\cite{hiai1991proper,mosonyi2015quantum,hayashi_2016-1,mosonyi2024geometric}
\begin{align}
\label{eq: Reverse sandwiched from regularization}
     \lim_{n \rightarrow \infty} \frac{1}{n}D^{\mathbb{M}}_\alpha(\rho^{\otimes n}\|\sigma^{\otimes n}) = \begin{cases}
    D_{\alpha,1-\alpha}(\rho\|\sigma), & \alpha \in [0,1/2]\\
         \widetilde{D}_{\alpha}(\rho\|\sigma), & \alpha \in (1/2,\infty]\,.
    \end{cases} 
\end{align} 
Here, the sandwiched relative entropy is defined for $\alpha\in (0,1)\cup (1,\infty)$ as~\cite{muller2013quantum, wilde2014strong}
\begin{equation}
\label{definition sandwiched}
\widetilde{D}_{\alpha}(\rho \| \sigma) =
\frac{1}{\alpha-1}\log{\Tr\big[\big(\sigma^{\frac{1-\alpha}{2\alpha}}\rho \sigma^{\frac{1-\alpha}{2\alpha}}\big)^{\alpha}\big]} \,.
\end{equation}
If the states do not have full support,
the quantity evaluates to \( \infty \) if \( \rho \) and \( \sigma \) are orthogonal for \( \alpha < 1 \), or if the support of \( \sigma \) fails to contain that of \( \rho \) for $\alpha>1$.
The reverse sandwiched  is defined for $\alpha\in (0,1/2]$ as~\cite{audenaert13_alphaz}
\begin{equation}
    \label{def: reverse sandwiched}
 D _ {\alpha,1-\alpha}(\rho\|\sigma)=\frac{\alpha}{1-\alpha}\, \widetilde{D}_{1-\alpha}(\sigma\|\rho)=
\frac{1}{\alpha-1}\log{\Tr[\big(\rho^\frac{\alpha}{2(1-\alpha)}\sigma\rho^\frac{\alpha}{2(1-\alpha)}\big)^{1-\alpha}]}.
\end{equation}
If the states do not have full support,
the quantity evaluates to \( \infty \) if \( \rho \) and \( \sigma \) are orthogonal.

For $\alpha=0$, the reverse sandwiched is defined via the limit $\alpha\to 0$, where it converges to the 
min-relative entropy~\cite{renner2008security,datta2009min}
\begin{align}
    D_{\min}(\rho\|\sigma) &= -\log \Tr[ P_\rho \sigma ]  \,,
\end{align}
where $P_\rho$ is the projector onto the support of $\rho$. In the limit $\alpha\to \infty$, the sandwiched relative entropy converges to the max-relative entropy, given by~\cite{renner2008security,datta2009min}
\begin{align}
     D_{\max}(\rho\|\sigma) &= \inf \{ \lambda \in \mathbb{R} : \rho \leq \exp(\lambda) \sigma \} \,.
\end{align}
The reverse sandwiched relative entropies coincide with the \(\alpha\)-\(z\) Rényi relative entropies at \(z=1-\alpha\)~\cite{audenaert13_alphaz}. 
Notably, in the limit $\alpha\to 1$, the sandwiched relative entropy converges to the Umegaki relative entropy
\begin{align}
    D(\rho\|\sigma) = \Tr[\rho(\log{\rho}-\log{\sigma})] \,.
\end{align}

\bigskip 

For $\alpha\in[0,2]$, the prepared R\' enyi relative entropy is equal to 
\begin{align}
    D_\alpha^{\mathbb{P}}
    \left(
        \rho\|\sigma
    \right) = \widehat{D}_\alpha(\rho\|\sigma) \,.
\end{align}
Here, the geometric Rényi relative entropy is defined for $\alpha\in (0,1)\cup (1,2]$ as~\cite{matsumoto2018maxdivergence}
\begin{align}
    \widehat{D}_\alpha(\rho\|\sigma) = \frac{1}{\alpha-1}\log{\Tr\left[\sigma\left(\sigma^{-1/2}\rho\sigma^{-1/2}\right)^\alpha\right]} \,.
\end{align}
If the states do not have full support, for $\alpha <1 $, $\rho$ must be replaced by its absolutely continuous part with respect to $\sigma$. Moreover, the quantity is infinite if the absolutely continuous part of $\rho$ with respect to $\sigma$ vanishes for $\alpha<1$, or if the support of \( \sigma \) fails to contain that of \( \rho \) for $\alpha>1$.  
At $\alpha=0$ and $\alpha=1$, the geometric relative entropy is defined by the
corresponding limits. We refer to~\cite[Example III.36]{mosonyi2024geometric} for details.
For $\alpha\rightarrow 1$, it converges to the Belavkin--Staszewski relative entropy~\cite{belavkin1982c}
\begin{align}
    \widehat{D}(\rho\|\sigma) = \Tr\left[\rho \log{\left(\rho^{1/2}\sigma^{-1}\rho^{{1/2}}\right)}\right] .
\end{align}
It follows that, for \(\alpha \in [0,2]\), no regularization is required, since the prepared relative entropy admits an additive closed-form expression. In particular, for \(\alpha>2\), the geometric R\'enyi divergence no longer
provides the closed-form expression for the prepared divergence. This can also be understood through the notion of maximal
\(f\)-divergences~\cite{matsumoto2013new,hiai2017different}. The prepared
\(f\)-divergence, defined analogously to the prepared R\'enyi relative
entropy by replacing the R\'enyi relative entropy with the more general
\(f\)-divergence, admits an explicit closed-form expression when \(f\)
is operator convex. When \(f\) is chosen to be a power function, this
expression yields, up to a suitable rewriting, the geometric R\'enyi
divergences. However, the power function \(f(t)=t^\alpha\) ceases to be
operator convex for \(\alpha>2\).

We now introduce our main quantity of interest, which we later show to correspond to the regularized prepared relative entropy for $\alpha>2$. The \(\alpha\)-\(z\) Rényi relative entropy with \(z=\alpha-1\) is defined for \(\alpha \in [2,\infty)\) as~\cite{audenaert13_alphaz}
\begin{equation}
    \label{def: Dalphaalpha-1 sandwiched}
 D_{\alpha,\alpha-1}(\rho \| \sigma)=
\frac{1}{\alpha-1}\log{\Tr[\big(\rho^\frac{\alpha}{2(\alpha-1)}\sigma^{-1}\rho^\frac{\alpha}{2(\alpha-1)}\big)^{\alpha-1}]}\,.
\end{equation}
If the states do not have full support,
the quantity evaluates to \( \infty \) if the support of \( \sigma \) fails to contain that of \( \rho \). For $\alpha =\infty$, we define this quantity via the corresponding limit, where it converges to the max-relative entropy (see, e.g.,~\cite[Lemma 23]{rubboli2024mixed}). Note that for $\alpha=2$ it equals both the geometric relative entropy and the Petz relative entropy~\cite{petz1986quasi}.
We also denote the trace term by $Q_{\alpha,\alpha-1}(\rho\|\sigma)=\exp((\alpha-1)D_{\alpha,\alpha-1}(\rho\|\sigma))$.
The quantity $D_{\alpha,\alpha-1}$ satisfies the data-processing inequality if and only if $\alpha \geq 2$~\cite{zhang2020wigner}. It coincides with the $\alpha$-$z$ Rényi relative entropies with $z=\alpha-1$~\cite{audenaert13_alphaz}.

Finally, we mention that across the entire range of $\alpha$, the min- and max-relative entropies are, respectively, the smallest and largest normalized quantum relative entropies; see, e.g.,~\cite{gour2020extensions}. Hence, for any such relative entropy \(\D\),
\begin{align}
D_{\min}(\rho\|\sigma)
\leq
\D(\rho\|\sigma)
\leq
D_{\max}(\rho\|\sigma).
\end{align}
The resulting hierarchy is illustrated in Fig.~\ref{fig:extremal_entropies}.

\section{Converse bound}
The proof of the main result in Theorem~\ref{thm: main theorem} is divided into two parts. First, we establish a lower bound on the regularized prepared R\'enyi relative entropy. In the subsequent section, we derive the matching upper bound, which constitutes the achievability part. Together, these bounds show that the regularized prepared R\'enyi relative entropy coincides with \(D_{\alpha,\alpha-1}\). Since the latter quantity is additive under tensor products, the discussion in Section~\ref{sec: relative entropies} then implies our main result, namely, that every quantum relative entropy extending the classical R\'enyi relative entropy \(D_\alpha\) is upper bounded by \(D_{\alpha,\alpha-1}\) for \(\alpha>2\).

The proof of the lower bound follows directly from the data-processing inequality for \(D_{\alpha,\alpha-1}\), which holds for \(\alpha>2\)~\cite{zhang2020wigner}.
\begin{proposition}
\label{prop: converse}
Let $\rho$ and $\sigma$ be quantum states. Then, for any $\alpha\geq 2$, it holds that
\begin{align}
    \lim_{n\to\infty}
    \frac1n
    D_\alpha^{\mathbb{P}}
    \left(
        \rho^{\otimes n}\|\sigma^{\otimes n}
    \right)
    \geq 
    D_{\alpha,\alpha-1}(\rho\|\sigma)
\end{align}
\end{proposition}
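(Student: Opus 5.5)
The plan is to deduce the bound from the data-processing inequality for $D_{\alpha,\alpha-1}$ (valid for $\alpha\geq 2$~\cite{zhang2020wigner}), together with its additivity under tensor products and its agreement with the classical Rényi relative entropy on commuting states.

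\textbf{Single-copy bound.} Fix $n$, distributions $p,q$ and a preparation map $\mathcal{F}$ with $\mathcal{F}(p)=\rho^{\otimes n}$ and $\mathcal{F}(q)=\sigma^{\otimes n}$. View $p,q$ as diagonal (commuting) density operators. Since $\mathcal{F}$ is a quantum channel and $\alpha\geq 2$, data processing gives
\begin{align}
D_\alpha(p\|q)=D_{\alpha,\alpha-1}(p\|q)\geq D_{\alpha,\alpha-1}(\rho^{\otimes n}\|\sigma^{\otimes n}).
\end{align}
The first equality holds because, for commuting arguments, $\Tr[(p^{\frac{\alpha}{2(\alpha-1)}}q^{-1}p^{\frac{\alpha}{2(\alpha-1)}})^{\alpha-1}]=\sum_x p(x)^\alpha q(x)^{1-\alpha}$. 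If $\operatorname{supp} p\not\subseteq\operatorname{supp} q$, the left-hand side is $+\infty$ and there is nothing to prove.

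\textbf{Additivity.} Next I use $(A\otimes B)^s=A^s\otimes B^s$ and the multiplicativity of the trace over tensor products. This gives $Q_{\alpha,\alpha-1}(\rho^{\otimes n}\|\sigma^{\otimes n})=Q_{\alpha,\alpha-1}(\rho\|\sigma)^n$, hence
\begin{align}
D_{\alpha,\alpha-1}(\rho^{\otimes n}\|\sigma^{\otimes n})=nD_{\alpha,\alpha-1}(\rho\|\sigma).
\end{align}
The support conditions are also compatible with tensor products: $\operatorname{supp}\rho^{\otimes n}\subseteq\operatorname{supp}\sigma^{\otimes n}$ exactly when $\operatorname{supp}\rho\subseteq\operatorname{supp}\sigma$, so the infinite case is handled consistently.

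\textbf{Passing to the limit.} Taking the infimum over all admissible $(p,q,\mathcal{F})$ yields $\frac1n D^{\mathbb{P}}_\alpha(\rho^{\otimes n}\|\sigma^{\otimes n})\geq D_{\alpha,\alpha-1}(\rho\|\sigma)$ for every $n$. This lower bound holds for every $n$, so it also holds for the $\liminf$, and therefore for the limit. Existence of the limit is not needed here; it can be obtained independently from subadditivity of $n\mapsto D^{\mathbb{P}}_\alpha(\rho^{\otimes n}\|\sigma^{\otimes n})$ (tensoring preparation maps) and Fekete's lemma.

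\textbf{Case $\alpha=\infty$.} Here $D_{\alpha,\alpha-1}=D_{\max}$, and the same argument goes through using the data-processing inequality and additivity of $D_{\max}$.

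The only substantive ingredient is the data-processing inequality of~\cite{zhang2020wigner}, which I assume. The remaining checks are the support conventions and the $\alpha=\infty$ case, and both are routine.
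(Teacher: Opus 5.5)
Your proposal is correct and follows essentially the same route as the paper: apply the data-processing inequality for $D_{\alpha,\alpha-1}$ ($\alpha\ge 2$) to an arbitrary preparation of $(\rho^{\otimes n},\sigma^{\otimes n})$, use that $D_{\alpha,\alpha-1}$ reduces to $D_\alpha$ on the classical pair and is additive, then take the infimum over preparations and pass to the limit. The one difference is in the case $\operatorname{supp}(\rho)\not\subseteq\operatorname{supp}(\sigma)$. The paper shows explicitly that every preparation must then contain an outcome with $p_x>0=q_x$, whereas you absorb this into the support convention of the data-processing inequality; that convention is justified because $p\le c\,q$ implies $\mathcal{F}(p)\le c\,\mathcal{F}(q)$.
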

\begin{proof}

Let us consider first the case where $\operatorname{supp}(\rho)\nsubseteq \operatorname{supp}(\sigma)$.
Then there exists a vector \(|v\rangle\) such that $\langle v|\sigma|v\rangle=0$ and $
    \langle v|\rho|v\rangle>0 $. A preparation map for \((\rho,\sigma)\) consists of two probability
distributions \(p,q\) and a classical-to-quantum
channel \(\mathcal{F}\) such that $\mathcal{F}(p)=\rho$ and $\mathcal{F}(q)=\sigma$.
Since \(\mathcal{F}\) is a classical-to-quantum channel, there exist density operators
\(\tau_x=\mathcal{F}(e_x)\) such that, by linearity,
\begin{align}
    \rho=\sum_x p_x\tau_x,
    \qquad
    \sigma=\sum_x q_x\tau_x .
\end{align}
Since $0=\langle v|\sigma|v\rangle
      =\sum_x q_x \langle v|\tau_x|v\rangle$
and all terms in the sum are nonnegative, we must have
\begin{align}
    q_x>0
    \quad\Longrightarrow\quad
    \langle v|\tau_x|v\rangle=0 .
\end{align}
On the other hand, $0<\langle v|\rho|v\rangle
      =\sum_x p_x \langle v|\tau_x|v\rangle$.
Hence there must exist some \(x\) such that
\begin{align}
    p_x>0,
    \qquad
    \langle v|\tau_x|v\rangle>0 .
\end{align}
For this \(x\), necessarily \(q_x=0\). Therefore, every preparation map contains an index with $p_x>0$ and $ q_x=0$.
Since \(\alpha>1\), the classical Rényi relative entropy is infinite. Since this holds for all preparation maps, the prepared relative entropy is infinite.

Now consider the case where $\operatorname{supp}(\rho)\subseteq \operatorname{supp}(\sigma)$.
The lower bound follows from the data-processing inequality satisfied by $D_{\alpha,\alpha-1}$ for $\alpha\geq 2$. Let us consider a preparation map $\mathcal{F}_n$ and probability distributions $r_n$ and $s_n$ such that
\begin{align}
    \mathcal{F}_n(r_n)=\rho^{\otimes n},\quad \mathcal{F}_n(s_n)=
    \sigma^{\otimes n} \,.
\end{align}
By the data-processing inequality,
\begin{align}
    n D_{\alpha,\alpha-1}
    (
        \rho\| \sigma) = D_{\alpha,\alpha-1}
    \left(
        \rho^{\otimes n}\|\sigma^{\otimes n}
    \right)
    \le
    D_{\alpha,\alpha-1}(r_n\| s_n) = D_{\alpha}(r_n\| s_n) .
\end{align}
Taking the infimum over all preparation maps and distributions gives
\begin{align}
    n D_{\alpha,\alpha-1}(
        \rho\| \sigma)
    \le
    D_\alpha^{\mathbb{P}}
    \left(
        \rho^{\otimes n}\|\sigma^{\otimes n}
    \right).
\end{align}
Multiplying by $1/n$ and taking the limit $n\to \infty$ proves the lower bound. 
\end{proof}

\section{Achievability}
In this section, we derive an upper bound on the regularized prepared R\'enyi relative entropy in terms of \(D_{\alpha,\alpha-1}\). In particular, we construct an explicit preparation map that upper bounds the prepared R\'enyi relative entropy by \(D_{\alpha,\alpha-1}\), up to an additive correction given by the logarithm of the size of the spectrum of \(\rho\). When applied to tensor powers, this correction grows only logarithmically in the number of copies and therefore vanishes upon regularization, yielding the desired upper bound.

We begin by expressing \(Q_\alpha^{\mathbb{P}}(\rho\|\sigma)\) as a variational problem over positive semidefinite operators. Since \(Q_\alpha^{\mathbb{P}}(\rho\|\sigma)=\exp((\alpha-1)D_\alpha^{\mathbb{P}}(\rho\|\sigma))\), this formulation equivalently provides a variational characterization of the prepared R\'enyi relative entropy. This is the specialization of Matsumoto's reverse-test formulation of maximal \(f\)-divergences to the power function \(f_\alpha(t)=t^\alpha\)~\cite[Section~4.2, Eq.~(4.2)]{matsumoto2013new}. 

\begin{lemma}
\label{lem: optimization problem}
Let \(\rho,\sigma\) be quantum states such that \(\supp(\rho) \subseteq\supp (\sigma)\).
For any $\alpha \in (1,\infty)$, it holds that
\begin{align}
    Q_\alpha^{\mathbb{P}}(\rho\| \sigma)
    =
    \inf_{\substack{X_x\ge0,\ t_x\ge0\\
    \sum_x X_x=\sigma\\
    \sum_x t_xX_x=\rho}}
    \sum_x t_x^\alpha \Tr X_x.
\end{align}
\end{lemma}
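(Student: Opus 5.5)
The plan is to prove the two inequalities separately, by translating between preparation maps $(p,q,\mathcal{F})$ and decompositions $\{X_x,t_x\}$ of $\sigma$ that reproduce $\rho$. Throughout we use $Q_\alpha(p\|q)=\sum_x p_x^\alpha q_x^{1-\alpha}$ and the fact that, for $\alpha>1$, $D^{\mathbb P}_\alpha$ is an increasing function of $Q^{\mathbb P}_\alpha$. Hence the infimum of $D_\alpha$ over preparations corresponds to the infimum of $Q_\alpha$.

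\textbf{($\geq$).} Take any preparation with $\mathcal{F}(e_x)=\tau_x$, so that $\rho=\sum_x p_x\tau_x$ and $\sigma=\sum_x q_x\tau_x$. We may assume $D_\alpha(p\|q)<\infty$, so $q_x=0$ forces $p_x=0$. Indices with $q_x=0$ then contribute nothing to $\rho$, to $\sigma$, or to $Q_\alpha$, and we discard them. For the remaining indices set $X_x=q_x\tau_x\ge0$ and $t_x=p_x/q_x\ge0$. Then $\sum_x X_x=\sigma$ and $\sum_x t_xX_x=\sum_x p_x\tau_x=\rho$. Moreover $\sum_x t_x^\alpha\Tr X_x=\sum_x p_x^\alpha q_x^{1-\alpha}=Q_\alpha(p\|q)$. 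Taking the infimum over preparations shows that $Q^{\mathbb P}_\alpha$ is at least the right-hand side.

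\textbf{($\leq$).} Take a feasible family $\{X_x,t_x\}$; we may assume it is finite, since the objective presumably ranges over finite index sets as in the definition. Drop every $x$ with $X_x=0$. For the rest define $q_x=\Tr X_x>0$, $\tau_x=X_x/q_x$ and $p_x=t_xq_x$.

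\begin{itemize}
\item $\sum_x q_x=\Tr\sigma=1$ and $\sum_x p_x=\Tr\rho=1$, so $p$ and $q$ are probability distributions.
\item The channel $\mathcal{F}(e_x)=\tau_x$ is a valid classical-to-quantum channel.
\item It satisfies $\mathcal{F}(q)=\sigma$ and $\mathcal{F}(p)=\rho$.
\item Its value is $Q_\alpha(p\|q)=\sum_x t_x^\alpha q_x=\sum_x t_x^\alpha\Tr X_x$.
\end{itemize}

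Hence $Q^{\mathbb P}_\alpha$ is at most every feasible objective value, and therefore at most the infimum.

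The only real subtlety is the bookkeeping of zero entries. On the $(\geq)$ side, indices with $q_x=0<p_x$ give infinite $D_\alpha$, so they can be ignored. The hypothesis $\supp(\rho)\subseteq\supp(\sigma)$ is used only to guarantee that the feasible set is nonempty, so both sides are finite: a finite-valued preparation exists, for example by the construction in Remark/Prop.~\ref{prop: converse}'s setting or by a spectral decomposition argument. Beyond this, we check that $Q_\alpha\mapsto\frac{1}{\alpha-1}\log Q_\alpha$ is monotone increasing for $\alpha>1$, so the infima correspond.
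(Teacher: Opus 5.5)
Your proposal is correct and matches the paper's proof: it uses the same dictionary $X_x=q_x\tau_x$, $t_x=p_x/q_x$ for ($\geq$) and $q_x=\Tr X_x$, $p_x=t_x\Tr X_x$, $\tau_x=X_x/\Tr X_x$ for ($\leq$), and discards zero entries in the same way. One small quibble about your closing aside, which does not affect the argument: the support hypothesis is not actually needed for the identity, since without it both sides are $+\infty$; also, Proposition~\ref{prop: converse} does not construct a finite-cost preparation, whereas the explicit map in Lemma~\ref{lem: upper bound} does.
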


\begin{proof}
We first show that the left-hand side is larger than the right-hand side.
Let \((p,q,\mathcal{F})\) be a preparation map. Since
\(\mathcal{F}\) is a classical-to-quantum channel, its action is determined by its values on the classical basis vectors. We denote these output states by
\begin{align}
    \tau_x = \mathcal{F}(e_x) \,,
\end{align}
where \(e_x\) is the classical probability vector with a one in position \(x\) and zeros elsewhere. We then have that
\begin{align}
     \rho=\mathcal{F}(p)=\sum_x p_x\tau_x,
    \qquad
    \sigma=\mathcal{F}(q)=\sum_x q_x\tau_x.
\end{align}
Define $X_x=q_x\tau_x$.
If \(q_x>0\), define $ t_x=\frac{p_x}{q_x}$.
Terms with \(q_x=0\) and \(p_x=0\) may be discarded. If \(q_x=0\) but
\(p_x>0\), then the classical quantity
\(Q_\alpha(p\| q)\) is infinite for \(\alpha>1\), so such terms are
irrelevant for the infimum. We then have that
\begin{align}
    \sum_xX_x=\sigma,
    \qquad
    \sum_xt_xX_x=\rho.
\end{align}
Moreover,
\begin{align}
    Q_\alpha(p\| q)
    =
    \sum_x p_x^\alpha q_x^{1-\alpha}
    =
    \sum_x
    \left(\frac{p_x}{q_x}\right)^\alpha
    q_x
    =
    \sum_x t_x^\alpha \Tr X_x.
\end{align}
Thus, every preparation map gives a feasible point of the optimization problem, and hence the inequality holds.

Let us now prove the converse, namely that the left-hand side is less than the right-hand side. Suppose \(X_x,t_x\) are feasible. Define
\begin{align}
    q_x=\Tr X_x,
    \qquad
    p_x=t_x\Tr X_x,
\end{align}
and, when \(\Tr X_x>0\),
\begin{align}
     \tau_x=\frac{X_x}{\Tr X_x}.
\end{align}
Zero-trace terms may be discarded. Since \(\rho,\sigma\) are states,
\begin{align}
     \sum_xq_x=\Tr \sigma=1,
    \qquad
    \sum_xp_x=\Tr \rho=1.
\end{align}
Thus \(p,q\) are probability distributions. The preparation channel
\(\mathcal{F}(e_x)=\tau_x\) satisfies
\begin{align}
     &\mathcal{F}(q)=\sum_xq_x\tau_x=\sum_xX_x=\sigma,\\
     & \mathcal{F}(p)=\sum_xp_x\tau_x
    =
    \sum_xt_xX_x=\rho.
\end{align}
The classical cost is
\begin{align}
    Q_\alpha(p\| q)
    =
    \sum_xp_x^\alpha q_x^{1-\alpha}
    =
    \sum_xt_x^\alpha\Tr X_x.
\end{align}
Therefore, we obtain the desired inequality.
\end{proof}

We next derive an upper bound on the prepared R\'enyi relative entropy in terms of \(D_{\alpha,\alpha-1}\) by constructing an explicit preparation map.
\begin{lemma}
\label{lem: upper bound}
Let \(\rho,\sigma\) be quantum states. Then, for every \(\alpha \in [2,\infty)\),
\begin{align}
     D_\alpha^{\mathbb{P}}(\rho\| \sigma)
    \le
    D_{\alpha,\alpha-1}(\rho\| \sigma)+\log  |\textup{spec}(\rho)|.
\end{align}
In addition, $ D_{\max}^{\mathbb{P}}(\rho\| \sigma)=  D_{\max}(\rho\| \sigma)$.
\end{lemma}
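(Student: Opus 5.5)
The plan is to use the variational characterization of Lemma~\ref{lem: optimization problem} and exhibit a feasible family \(\{X_x,t_x\}\) with cost at most \(|\mathrm{spec}(\rho)|^{\alpha-1}Q_{\alpha,\alpha-1}(\rho\|\sigma)\). Taking \(\frac{1}{\alpha-1}\log\) of this bound gives the claim. If \(\supp(\rho)\nsubseteq\supp(\sigma)\), both sides are \(+\infty\) and there is nothing to prove. Otherwise I will work on \(\supp(\sigma)\), where \(\sigma\) is invertible, and interpret negative powers of \(\rho\) on its support.

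\textbf{Step 1: reduce to a one-parameter-per-eigenspace ansatz.} Write \(\rho=\sum_\lambda \lambda P_\lambda\) with \(\lambda>0\), and let \(m=|\mathrm{spec}(\rho)|\). Feasibility only requires \(\sum_{x\neq 0}X_x\le\sigma\). Indeed, the slack \(X_0=\sigma-\sum_{x\neq0}X_x\ge0\) can be added with \(t_0=0\), which costs nothing. For each \(\lambda\) choose \(X_\lambda=(\lambda/t_\lambda)P_\lambda\). Set \(D=\sum_\lambda(\lambda/t_\lambda)P_\lambda\), an operator commuting with \(\rho\). Then \(\sum_\lambda t_\lambda X_\lambda=\rho\), and the constraint becomes \(D\le\sigma\), equivalently \(\sigma^{-1}\le D^{-1}\) on the support. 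The cost is
\begin{align}
\sum_\lambda t_\lambda^{\alpha}\Tr X_\lambda=\sum_\lambda t_\lambda^{\alpha-1}\lambda\Tr P_\lambda=\Tr\big[\rho^{\alpha}D^{-(\alpha-1)}\big].
\end{align}

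\textbf{Step 2: choose \(D\) via pinching.} Set \(\beta=\frac{\alpha}{2(\alpha-1)}\) and \(A=\rho^{\beta}\sigma^{-1}\rho^{\beta}\). The pinching inequality \eqref{eq: pinching inequality} gives \(A\le m\,\mathcal P_\rho(A)\). Since \(\mathcal P_\rho(A)\) commutes with \(\rho\), I define
\begin{align}
D^{-1}=m\,\rho^{-\beta}\mathcal P_\rho(A)\rho^{-\beta}.
\end{align}
Conjugating the pinching inequality by \(\rho^{-\beta}\) yields \(\sigma^{-1}\le D^{-1}\) on \(\supp(\rho)\). The constraint also has to hold on the full support of \(\sigma\), not just on \(\supp(\rho)\); this is the technical point I expect to need care. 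Since each \(X_\lambda\) is supported in \(\supp(\rho)\), the needed statement is \(\sum_\lambda X_\lambda\le\sigma\). This is equivalent to \(\Pi D\Pi\le\sigma\), where \(\Pi\) is the projector onto \(\supp(\rho)\). By a Schur complement argument, that condition is equivalent to \(D^{-1}\ge(\Pi\sigma^{-1}\Pi)\) restricted to \(\supp(\rho)\), which is exactly what the conjugated pinching inequality gives. I would verify this step carefully; it is where the argument is most likely to hide a subtlety.

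\textbf{Step 3: evaluate the cost.} Since \(2\beta(\alpha-1)=\alpha\) and everything commutes with \(\rho\),
\begin{align}
\Tr[\rho^\alpha D^{-(\alpha-1)}]=m^{\alpha-1}\Tr\big[\mathcal P_\rho(A)^{\alpha-1}\big]\le m^{\alpha-1}\Tr[A^{\alpha-1}]=m^{\alpha-1}Q_{\alpha,\alpha-1}(\rho\|\sigma).
\end{align}
The inequality holds because \(\mathcal P_\rho\) is a unital, trace-preserving map (so \(\mathcal P_\rho(A)\) is majorized by \(A\)) and \(t\mapsto t^{\alpha-1}\) is convex for \(\alpha\ge2\). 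Lemma~\ref{lem: optimization problem} then gives \(Q^{\mathbb P}_\alpha\le m^{\alpha-1}Q_{\alpha,\alpha-1}\), which is the claimed bound.

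\textbf{The \(\alpha=\infty\) statement.} For \(D_{\max}^{\mathbb P}\ge D_{\max}\), apply the data-processing inequality for \(D_{\max}\) to any preparation map, as in Proposition~\ref{prop: converse}; note that \(D_{\max}\) reduces classically to \(\max_x\log(p_x/q_x)\). For the converse, let \(\lambda=D_{\max}(\rho\|\sigma)\) and use the two-outcome preparation \(X_1=e^{-\lambda}\rho\), \(t_1=e^{\lambda}\), \(X_0=\sigma-e^{-\lambda}\rho\ge0\), \(t_0=0\). The corresponding classical pair is \(p=(1,0)\), \(q=(e^{-\lambda},1-e^{-\lambda})\), whose classical max-divergence is exactly \(\lambda\).
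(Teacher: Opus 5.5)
Your strategy matches the paper's: the variational form of Lemma~\ref{lem: optimization problem}, pinching with respect to $\rho$, a block-diagonal feasible point plus a zero-cost slack outcome, contractivity of $\Tr[(\cdot)^{\alpha-1}]$ under pinching, and the same two-outcome preparation at $\alpha=\infty$. However, Steps 1 and 2 do not fit together as written. In Step 1 every $X_\lambda$ is a multiple of $P_\lambda$, so $D$ is a scalar on each eigenspace of $\rho$. The operator you choose in Step 2 satisfies $D^{-1}=m\sum_{\lambda>0}P_\lambda\sigma^{-1}P_\lambda$ on $\supp(\rho)$, which is generally not scalar on the blocks, so it is not admissible in your ansatz.

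This matters. For $\rho=I/d$ the ansatz forces $D=d_1 I\le\sigma$. The best cost is then $d^{1-\alpha}\lambda_{\min}(\sigma)^{1-\alpha}$, which strictly exceeds $Q_{\alpha,\alpha-1}(I/d\|\sigma)=d^{-\alpha}\Tr\sigma^{1-\alpha}$ unless $\sigma\propto I$.

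The repair is what the paper does. Diagonalize $D$ inside each eigenspace, $D=\sum_{\lambda,k}d_{\lambda k}Q_{\lambda k}$ with $Q_{\lambda k}\le P_\lambda$, and use one outcome per pair: $X_{\lambda k}=d_{\lambda k}Q_{\lambda k}$ and $t_{\lambda k}=\lambda/d_{\lambda k}$. Then $\sum t_{\lambda k}X_{\lambda k}=\rho$ and $\sum X_{\lambda k}=\Pi D\Pi$. The cost is still $\Tr[\rho^\alpha D^{1-\alpha}]$, so Steps 2 and 3 go through verbatim. Step 3's majorization argument and the $\alpha=\infty$ part are fine as they stand.

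The support issue you flagged is handled correctly. With $\sigma>0$ on $\supp(\sigma)$ and $D>0$ on $\Pi\mathcal{H}$, the Schur-complement (shorted-operator) identity gives $\Pi D\Pi\le\sigma$ if and only if $D^{-1}\ge\Pi\sigma^{-1}\Pi$ on $\Pi\mathcal{H}$. That is exactly what your conjugated pinching inequality provides.

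The paper avoids this lemma by pinching $\sigma^{-1}$ itself on all of $\supp(\sigma)$, with the zero eigenspace of $\rho$ as one of the blocks (where $r_j=0$, hence $t=0$). Then $\tilde\sigma=(m\,\mathcal{P}(\sigma^{-1}))^{-1}\le\sigma$ follows directly from operator antimonotonicity of the inverse. Only afterwards does it pinch $\rho^{\beta}\sigma^{-1}\rho^{\beta}$ to compare with $Q_{\alpha,\alpha-1}$.

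Your route costs the extra Schur-complement step. In exchange, since $A$ is supported on $\supp(\rho)$, it gives the constant $|\mathrm{spec}(\rho)\setminus\{0\}|$ in place of $|\mathrm{spec}(\rho)|$ when $\supp(\rho)\subsetneq\supp(\sigma)$. That improvement is irrelevant after regularization.
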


\begin{proof}
In the case $\operatorname{supp}(\rho)\nsubseteq \operatorname{supp}(\sigma)$ an analogous argument to the one in the proof of Proposition~\ref{prop: converse} shows that both quantities are infinite. Let us consider now the case $\operatorname{supp}(\rho)\subseteq \operatorname{supp}(\sigma)$.
If \(\supp(\rho)\subseteq\supp(\sigma)\) but \(\sigma\) is not full-rank, we restrict the whole argument to \(\supp(\sigma)\). On this
subspace, \(\sigma\) is strictly positive and all inverses below are ordinary
inverses.

We equivalently prove that
\begin{align}
    Q_\alpha^{\mathbb{P}}(\rho\| \sigma)
    \le
    |\textup{spec}(\rho)|^{\alpha-1}
    Q_{\alpha,\alpha-1}(\rho\| \sigma).
\end{align}
Let us denote by $m=|\text{spec}(\rho)|$ and $P_j$ the orthogonal projectors onto the eigenspace of $\rho$ corresponding to its distinct $j$-th eigenvalue $r_j$. We denote by \(\mathcal P\) the pinching map with respect to \(\rho\), namely 
\begin{align}
    \mathcal{P}(X)=\sum_{j=1}^m P_jX P_j.
\end{align}
By the pinching inequality~\eqref{eq: pinching inequality} applied to $\sigma^{-1}$, we obtain that $\sigma^{-1}\le m\,\mathcal{P}(\sigma^{-1})$.
Since the inverse is operator antimonotone, we obtain that
\begin{align}
    \bigl(m\,\mathcal{P}(\sigma^{-1})\bigr)^{-1}\le \sigma.
\end{align}
Define
\begin{align}
    \tilde{\sigma}=\bigl(m\,\mathcal{P}(\sigma^{-1})\bigr)^{-1}.
\end{align}
The above bounds imply that $0\le \tilde{\sigma}\le \sigma$.

Because \(\tilde{\sigma}\) is block diagonal with respect to the projections \(P_j\), write
\begin{align}
    \tilde{\sigma}=\bigoplus_{j=1}^m \tilde{\sigma}_j,
    \qquad
    \tilde{\sigma}_j=P_j \tilde{\sigma} P_j.
\end{align}
Also, define $\hat{\sigma}_j= P_j \sigma^{-1}P_j$. Then $\tilde{\sigma}_j
    =
    \frac1m \hat{\sigma}_j^{-1}$, where the inverse is understood to be taken on the support of \(\widehat{\sigma}_j\). We then diagonalize each \(\tilde{\sigma}_j\)
    \begin{align}
         \tilde{\sigma}_j=\sum_k c_{jk}Q_{jk},
    \end{align}
where $c_{jk}>0$, $Q_{jk}\le P_j$ and $\sum_kQ_{jk}=P_j$. We then construct a feasible point for the optimization problem of the prepared relative entropy in Lemma~\ref{lem: optimization problem}  as follows. We set the index $x=(j,k)$ and
\begin{align}
    X_{jk}=c_{jk}Q_{jk},
    \qquad
    t_{jk}=\frac{r_j}{c_{jk}},
\end{align}
where $r_j$ is the $j$-th eigenvalue of $\rho$, 
and we add one residual outcome
\begin{align}
    X_0=\sigma-\tilde{\sigma},
    \qquad
    t_0=0.
\end{align}
Since $\sigma \geq \tilde{\sigma}$, we have that $X_0\ge0$. 

We now check the constraints to show that it is a feasible point.
First,
\begin{align}
    \sum_{j,k}X_{jk}+X_0
    =
    \sum_{j,k}c_{jk}Q_{jk}+\sigma-\tilde{\sigma}
    =
    \tilde{\sigma}+\sigma-\tilde{\sigma}
    =
    \sigma.
\end{align}
Second,
\begin{align}
    \sum_{j,k}t_{jk}X_{jk}+t_0X_0
    =
    \sum_{j,k}
    \frac{r_j}{c_{jk}}
    c_{jk}Q_{jk}
    +
    0\cdot(\sigma-\tilde{\sigma}) = \sum_j r_j\sum_kQ_{jk} = \sum_j r_jP_j
    =
    \rho.
\end{align}
Hence, the constructed variables are feasible for the optimization problem.
The cost of this feasible point is
\begin{align}
    \sum_{j,k} t_{jk}^{\alpha}\Tr X_{jk}
    +
    t_0^\alpha\Tr X_0=
    \sum_{j,k} t_{jk}^{\alpha}\Tr X_{jk}
    =
    \sum_{j,k}
    \left(\frac{r_j}{c_{jk}}\right)^\alpha
    \Tr(c_{jk}Q_{jk})
    =
    \sum_j r_j^\alpha \Tr \tilde{\sigma}_j^{1-\alpha}.
\end{align}
Using \(\tilde{\sigma}_j=\frac1m\hat{\sigma}_j^{-1}\), we get
\begin{align}
     \tilde{\sigma}_j^{1-\alpha}
    =
    \left(\frac1m\hat{\sigma}_j^{-1}\right)^{1-\alpha}
    =
    m^{\alpha-1}\hat{\sigma}_j^{\alpha-1}.
\end{align}
Therefore, by feasibility
\begin{align}
    \label{eq: feasibility prepared}Q_\alpha^{\mathbb{P}}(\rho\| \sigma)
    \le
    m^{\alpha-1}
    \sum_j r_j^\alpha
    \Tr \hat{\sigma}_j^{\alpha-1}.
\end{align}
It remains to compare the above sum with \(Q_{\alpha,\alpha-1}(\rho\| \sigma)\).

Since $\alpha-1 \geq 1$, Schatten contractivity under pinching in Lemma~\ref{lem: Schatten contractivity} gives
\begin{align}
Q_{\alpha,\alpha-1}(\rho \| \sigma) \geq \Tr[\Big(\mathcal{P}\big(\rho^\frac{\alpha}{2(\alpha-1)}\sigma^{-1}\rho^\frac{\alpha}{2(\alpha-1)}\big)\Big)^{\alpha-1}]\,.
\end{align}
Now
\begin{align}
    P_j \rho^\frac{\alpha}{2(\alpha-1)}\sigma^{-1}\rho^\frac{\alpha}{2(\alpha-1)} P_j= r_j^\frac{\alpha}{\alpha-1} P_j\sigma^{-1}P_j
    =
    r_j^\frac{\alpha}{\alpha-1} \hat{\sigma}_j.
\end{align}
Hence, $\mathcal{P}(\rho^\frac{\alpha}{2(\alpha-1)}\sigma^{-1}\rho^\frac{\alpha}{2(\alpha-1)})
    =
    \bigoplus_j r_j^\frac{\alpha}{\alpha-1} \hat{\sigma}_j$.
Thus,
\begin{align}
    \Tr[\Big(\mathcal{P}\big(\rho^\frac{\alpha}{2(\alpha-1)}\sigma^{-1}\rho^\frac{\alpha}{2(\alpha-1)}\big)\Big)^{\alpha-1}]=  \sum_j
    \Tr
    [(r_j^\frac{\alpha}{\alpha-1}\hat{\sigma}_j)^{\alpha-1}] 
    =
    \sum_j
    r_j^\alpha\Tr \hat{\sigma}_j^{\alpha-1}.
\end{align}
Combining the previous inequalities, we obtain that
\begin{align}
\label{eq: lower bound Q}
    Q_{\alpha,\alpha-1}(\rho\| \sigma)
    \ge
    \sum_j
    r_j^\alpha\Tr \hat{\sigma}_j^{\alpha-1}.
\end{align}
Finally,~\eqref{eq: feasibility prepared} and~\eqref{eq: lower bound Q} imply
\begin{align}
    Q_\alpha^{\mathbb{P}}(\rho\| \sigma)
    \le
    m^{\alpha-1}
    Q_{\alpha,\alpha-1}(\rho\| \sigma).
\end{align}

Finally, let us consider the case $\alpha=\infty$. Set $c= \exp\bigl(D_{\max}(\rho\|\sigma)\bigr)= \min\bigl\{c'>0:\rho\leq c'\sigma\bigr\}>1$. Consider the distributions
\begin{align}
    p
    =
    (1,0),
    \quad
    q
    &=
    \left(\frac{1}{c},1-\frac{1}{c}\right),
\end{align}
and the preparation states
\begin{align}
    \tau_1
    =
    \rho,
    \quad
    \tau_2
    =
    \frac{c\sigma-\rho}{c-1}.
\end{align}
The operator \(\tau_2\) is a state, since \(c\sigma-\rho\geq0\) and
\(\Tr[c\sigma-\rho]=c-1\). Moreover,
\begin{align}
    \mathcal{F}(p)
    =
    \tau_1
    =
    \rho,
    \quad 
    \mathcal{F}(q)
    =
    \frac{1}{c}\tau_1
    +
    \left(1-\frac{1}{c}\right)\tau_2
    =
    \sigma,
\end{align}
and $ D_{\max}(p\|q)=\log{c}$.
Together with the data-processing inequality, this proves the 
identity
\begin{align}
    D_{\max}^{\mathbb{P}}(\rho\|\sigma)
    =
    D_{\max}(\rho\|\sigma).
\end{align}
\end{proof}

\begin{remark}
\label{rem: optimal preparation map}
The above result is constructive: it yields an explicit preparation map for $\rho$ and $\sigma$ whose cost matches $D_{\alpha,\alpha-1}$ up to a spectral-size correction. As shown below, this correction vanishes asymptotically, rendering the preparation map asymptotically optimal.
The map is constructed as follows.

Let $r_j$ be the eigenvalues of $\rho$ with corresponding eigenprojectors $P_j$, and define the operators
\begin{align}
\widetilde{\sigma} = \frac{\mathcal{P}_\rho(\sigma^{-1})^{-1}}{|\textup{spec}(\rho)|}, \qquad \widetilde{\sigma}_j = P_j\widetilde{\sigma}P_j.
\end{align}
We diagonalize each block $\widetilde{\sigma}_j$ on its support as
\begin{align}
\widetilde{\sigma}_j = \sum_{k=1}^{k_{\max}(j)} c_{jk} Q_{jk},
\end{align}
where $Q_{jk}$ are the spectral projectors that satisfy $\sum_{k=1}^{k_{\max}(j)} Q_{jk} = P_j$, and $k_{\max}(j)$ is the spectral size of $\widetilde{\sigma}_j$ on its support.

The asymptotically optimal preparation map $F_{\rho,\sigma}$ is a classical-to-quantum channel over the alphabet $\mathcal{X} = \{(j,k) : 1 \leq j \leq m,\ 1 \leq k \leq k_{\max}(j)\} \cup \{0\}$. Its action on the standard basis vectors is given by $F_{\rho,\sigma}(e_{jk}) = \tau_{jk}$ and $F_{\rho,\sigma}(e_0) = \tau_0$, preparing the respective states
\begin{align}
\tau_{jk} = \frac{Q_{jk}}{\operatorname{Tr}[Q_{jk}]}, \qquad \tau_0 = \frac{\sigma-\widetilde{\sigma}}{\operatorname{Tr}[\sigma-\widetilde{\sigma}]}.
\end{align}
In the case where $\sigma=\tilde{\sigma}$ the outcome $``0"$ must be omitted. 
Finally, we define the corresponding classical distributions by
\begin{align}
    q_0 = \operatorname{Tr}[\sigma-\tilde{\sigma}], \quad 
p_0 = 0, \quad q_{jk} = c_{jk}\operatorname{Tr}[Q_{jk}],
\quad
p_{jk} = r_j\operatorname{Tr}[Q_{jk}].
\end{align}

\end{remark}

Next, we derive an upper bound on the regularized prepared R\'enyi relative entropy in terms of \(D_{\alpha,\alpha-1}\). This bound follows from the above lemma, together with the fact that the additive correction grows only logarithmically with the number of copies and therefore vanishes upon regularization.
\begin{proposition}
\label{prop: achievability}
Let $\rho$ and $\sigma$ be quantum states. Then, for any $\alpha\in [2,\infty)$, it holds that
\begin{align}
    \lim_{n\to\infty}
    \frac1n
    D_\alpha^{\mathbb{P}}
    \left(
        \rho^{\otimes n}\|\sigma^{\otimes n}
    \right)
    \leq 
    D_{\alpha,\alpha-1}(\rho\|\sigma)
\end{align}
\end{proposition}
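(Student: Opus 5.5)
The plan is to apply Lemma~\ref{lem: upper bound} to the pair $(\rho^{\otimes n},\sigma^{\otimes n})$ and then use the additivity of $D_{\alpha,\alpha-1}$ under tensor products. Lemma~\ref{lem: upper bound} gives
\begin{align}
    D_\alpha^{\mathbb{P}}(\rho^{\otimes n}\|\sigma^{\otimes n})
    \le
    D_{\alpha,\alpha-1}(\rho^{\otimes n}\|\sigma^{\otimes n})
    +\log|\mathrm{spec}(\rho^{\otimes n})|.
\end{align}
If $\supp(\rho)\nsubseteq\supp(\sigma)$, then $D_{\alpha,\alpha-1}(\rho\|\sigma)=+\infty$ and there is nothing to prove. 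We therefore assume $\supp(\rho)\subseteq\supp(\sigma)$, in which case all quantities are finite.

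For additivity, note that powers of tensor products factorize:
\begin{align}
    (\rho^{\otimes n})^{\frac{\alpha}{2(\alpha-1)}}(\sigma^{\otimes n})^{-1}(\rho^{\otimes n})^{\frac{\alpha}{2(\alpha-1)}}
    =
    \bigl(\rho^{\frac{\alpha}{2(\alpha-1)}}\sigma^{-1}\rho^{\frac{\alpha}{2(\alpha-1)}}\bigr)^{\otimes n}.
\end{align}
Since the trace of a power of a tensor product is the product of the traces, $Q_{\alpha,\alpha-1}(\rho^{\otimes n}\|\sigma^{\otimes n})=Q_{\alpha,\alpha-1}(\rho\|\sigma)^n$. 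Hence $D_{\alpha,\alpha-1}(\rho^{\otimes n}\|\sigma^{\otimes n})=n D_{\alpha,\alpha-1}(\rho\|\sigma)$. If $\sigma$ is not full rank, the inverse is taken on the support, and this support restriction also tensorizes.

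The key step is to bound the spectral size. Let $d$ be the dimension. The eigenvalues of $\rho^{\otimes n}$ are the products $\prod_j r_j^{k_j}$ over types $(k_1,\dots,k_m)$ with $\sum_j k_j=n$. The number of such types is at most $\binom{n+m-1}{m-1}\le (n+1)^{m-1}\le (n+1)^{d}$. Consequently $\log|\mathrm{spec}(\rho^{\otimes n})|\le d\log(n+1)$.

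Dividing by $n$ then gives
\begin{align}
    \frac1n D_\alpha^{\mathbb{P}}(\rho^{\otimes n}\|\sigma^{\otimes n})\le D_{\alpha,\alpha-1}(\rho\|\sigma)+\frac{d\log(n+1)}{n}.
\end{align}
Taking $\limsup_{n\to\infty}$ yields the claimed inequality. The only subtlety is that the limit in the statement should exist. This follows by combining the bound above with Proposition~\ref{prop: converse}: the $\liminf$ is at least $D_{\alpha,\alpha-1}$ by that proposition, so $\liminf$ and $\limsup$ coincide. Alternatively, one can read the statement with $\limsup$, or note that the sequence is subadditive because $D_\alpha^{\mathbb{P}}$ is subadditive under tensor products of preparation maps, and apply Fekete's lemma. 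None of these steps looks hard; the main care needed is the type-counting bound and the handling of non-full-rank supports.
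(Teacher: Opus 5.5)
Your proposal is correct and follows the paper's proof: apply Lemma~\ref{lem: upper bound} to $(\rho^{\otimes n},\sigma^{\otimes n})$, use additivity of $D_{\alpha,\alpha-1}$, and bound $\log|\mathrm{spec}(\rho^{\otimes n})|$ polynomially in $n$ by type counting; your exponent $d$ in place of the paper's $d-1$ makes no difference. You also justify that the limit exists, either by combining your $\limsup$ bound with the per-$n$ inequality $nD_{\alpha,\alpha-1}(\rho\|\sigma)\le D_\alpha^{\mathbb{P}}(\rho^{\otimes n}\|\sigma^{\otimes n})$ from the proof of Proposition~\ref{prop: converse}, or via subadditivity and Fekete's lemma; the paper does not address this point.
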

\begin{proof}
In the case $\operatorname{supp}(\rho)\nsubseteq \operatorname{supp}(\sigma)$ an analogous argument to the one in the proof of Proposition~\ref{prop: converse} shows that both quantities are infinite. Let us consider now the case $\operatorname{supp}(\rho)\subseteq \operatorname{supp}(\sigma)$.
    The upper bound is a consequence of Lemma~\ref{lem: upper bound}. We apply the result to $\rho^{\otimes n}$ and $\sigma^{\otimes n}$. We have
    \begin{align}
        D_\alpha^{\mathbb{P}}
    \left(
        \rho^{\otimes n}\|\sigma^{\otimes n}
    \right)
    &\le
    D_{\alpha,\alpha-1}
    \left(
        \rho^{\otimes n}\|\sigma^{\otimes n}
    \right)
    +
    \log  |\textup{spec}(\rho^{\otimes n})|\leq 
    n D_{\alpha,\alpha-1}
    \left(
        \rho \|\sigma
    \right)
    +
    (d-1)\log (n+1)\,,
    \end{align}
where we used that $D_{\alpha,\alpha-1}$ is additive and that $\log|\textup{spec}(\rho^{\otimes n})| \leq (d-1)\log (n+1)$ and $d$ is the dimension of $\rho$. Multiplying by $1/n$ and taking the limit $n\to \infty$  and noting that the term $((d-1)\log{(n+1)})/n$ goes to zero proves the upper bound.
\end{proof}

The following corollary follows directly from Eq.~\eqref{eq: sandwiched between regularized} and the accompanying discussion in Section~\ref{sec: measured and prepared}.
\begin{corollary}
\label{cor: minimality}
Let $\rho$ and $\sigma$ be quantum states. Then, every quantum relative entropy $\D_\alpha$ that reduces to the classical Rényi relative entropy $D_\alpha$ on classical states satisfies 
\begin{equation}
    \D_\alpha(\rho\|\sigma) \leq  \begin{cases}
    \widehat{D}_{\alpha}(\rho\|\sigma), & \alpha \in [0,2]\\
         D_{\alpha,\alpha-1}(\rho\|\sigma), & \alpha \in (2,\infty]\,.
    \end{cases} 
\end{equation}
\end{corollary}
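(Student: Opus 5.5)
The corollary follows by combining the two-sided bound~\eqref{eq: sandwiched between regularized} with the closed-form identification of the regularized prepared R\'enyi relative entropy. Fix $\alpha$ and a quantum relative entropy $\D_\alpha$ that agrees with $D_\alpha$ on classical states.

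First I will recall the prepared-map argument. For any $n$ and any preparation $(r_n,s_n,\mathcal F_n)$ of $(\rho^{\otimes n},\sigma^{\otimes n})$, the data-processing inequality for $\D_\alpha$, additivity, and the classical reduction give
\begin{align}
n\,\D_\alpha(\rho\|\sigma)=\D_\alpha(\rho^{\otimes n}\|\sigma^{\otimes n})\le \D_\alpha(r_n\|s_n)=D_\alpha(r_n\|s_n).
\end{align}
Taking the infimum over preparations yields $\D_\alpha(\rho\|\sigma)\le \frac1n D^{\mathbb P}_\alpha(\rho^{\otimes n}\|\sigma^{\otimes n})$ for every $n$. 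I will then pass to the $\liminf$ (or the limit, once its existence is known), which avoids any question of existence at this stage.

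\textbf{Range $\alpha\in[0,2]$.} Here the cited result $D^{\mathbb P}_\alpha=\widehat D_\alpha$, with $\widehat D_\alpha$ additive, already gives the bound at $n=1$.

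\textbf{Range $\alpha\in(2,\infty)$.} Proposition~\ref{prop: converse} and Proposition~\ref{prop: achievability} together show that $\frac1n D^{\mathbb P}_\alpha(\rho^{\otimes n}\|\sigma^{\otimes n})$ converges to $D_{\alpha,\alpha-1}(\rho\|\sigma)$. Strictly speaking, the lower bound holds for each $n$, and the upper bound in the proof of Proposition~\ref{prop: achievability} is $D_{\alpha,\alpha-1}+(d-1)\log(n+1)/n$, so the limit exists by squeezing. Combining this with the display above gives $\D_\alpha\le D_{\alpha,\alpha-1}$. If $\operatorname{supp}\rho\not\subseteq\operatorname{supp}\sigma$, the right-hand side is $+\infty$ and there is nothing to prove.

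\textbf{Case $\alpha=\infty$.} Lemma~\ref{lem: upper bound} gives $D^{\mathbb P}_{\max}=D_{\max}$ directly at $n=1$. By the definition via the limit, $D_{\infty,\infty}=D_{\max}$, so the same one-shot DPI argument applies.

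The only delicate point is the existence of the limit for $\alpha>2$, which is the squeeze just described. I do not anticipate further obstacles.
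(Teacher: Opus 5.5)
Your proposal is correct and follows the paper's route: you use the per-$n$ data-processing and additivity bound behind Eq.~\eqref{eq: sandwiched between regularized}, together with $D^{\mathbb{P}}_\alpha=\widehat{D}_\alpha$ on $[0,2]$, the achievability bound of Proposition~\ref{prop: achievability} for $\alpha\in(2,\infty)$, and $D^{\mathbb{P}}_{\max}=D_{\max}$ from Lemma~\ref{lem: upper bound} at $\alpha=\infty$. Because your inequality holds for every $n$, only the achievability direction is actually needed, and Proposition~\ref{prop: converse} serves only to show that the limit exists.
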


\section{Multi-copy and catalytic transformations}
As an application, we study transformations between pairs of quantum states. Such pairs, commonly referred to as dichotomies, constitute the basic objects of the resource theory of asymmetric distinguishability~\cite{buscemi2019information,wang2019resource}. We derive transformation criteria in both the many-copy and catalytic settings and subsequently discuss their implications for the transformations of quantum states under Gibbs-preserving maps.
To this end, we specialize the general results for preordered semirings~\cite{fritz2023abstract,fritz2023abstractII} to the present setting, thereby obtaining sufficient conditions for convertibility in both the many-copy and catalytic regimes. These comparison theorems are known as the \emph{Vergleichsstellensätze}.

To apply the general theory of preordered semirings to quantum-state transformations under quantum channels, we first construct the relevant semiring, namely, the quantum-majorization semiring. Our presentation follows~\cite[Section~2.1]{haapasalo2025barycentric}, where a more detailed account can be found.
\subsection{The quantum majorization semiring}
\label{sec: quantum majorization semiring}
A preordered semiring is a tuple $S = (\mathcal{S}, +, \cdot, 0, 1, \rleq)$, where $\mathcal{S}$ is a set equipped with binary operations of addition $+$ and multiplication $\cdot$, a zero element $0 \in \mathcal{S}$, a multiplicative unit $1 \in \mathcal{S}$, and a preorder relation $\rleq$  (a reflexive and transitive binary relation) defined on $\mathcal{S}$  satisfying
\begin{equation}
    x \rleq y \ \Rightarrow \ 
\begin{cases}
x + a \ \rleq\ y + a, \\[0.3em]
x a \ \rleq\ y a,
\end{cases}
\end{equation}
for all $a \in \mathcal{S}$.
Moreover, 
$(\mathcal{S}, +, 0)$ and $(\mathcal{S}, \cdot, 1)$ are commutative semigroups, and the multiplication distributes over the addition.
For compactness, we sometimes omit the multiplication symbol when writing products in the semiring.

We next define the quantum-majorization semiring. The construction is a specialization of that introduced in~\cite{haapasalo2025barycentric}, obtained by restricting from arbitrary tuples of quantum states to pairs.

\begin{enumerate}

\item \textbf{The set of elements ``$\mathcal{S}$"}. 
The elements of the semiring are essentially pairs of states
\((\rho,\sigma)\) with coinciding supports. However, this set must be enlarged
to include non-normalized states, that is, arbitrary positive semidefinite
operators. In addition, pairs that differ only by embedding into a larger space are identified
as equivalent.

We now formalize this construction. Let \(\mathcal{T}_d\) denote the set of positive semidefinite operators on \(\mathbb{C}^d\). For each
\(d \in \mathbb{N}\), define
\begin{align}
    \mathcal{Q}_d
    =
    \left\{
        (\rho,\sigma)\in \mathcal{T}_d \times \mathcal{T}_d
        \,\middle|\,
        \operatorname{supp}(\rho)=\operatorname{supp}(\sigma)
    \right\}.
\end{align}
We then set
\begin{align}
    \mathcal{Q}
    =
    \bigcup_{d=1}^{\infty} \mathcal{Q}_d,
\end{align}
so that \(\mathcal{Q}\) collects all such pairs of positive semidefinite
operators over arbitrary finite-dimensional Hilbert spaces.

Let \((\rho,\sigma)\in\mathcal{Q}_d\) and
\((\rho',\sigma')\in\mathcal{Q}_{d'}\). We write $(\rho,\sigma)\approx(\rho',\sigma')$
if there exist an integer \(\ell\geq \max\{d,d'\}\) and isometries $U\colon\mathbb{C}^d\to\mathbb{C}^{\ell}$, and $V\colon\mathbb{C}^{d'}\longrightarrow\mathbb{C}^{\ell}$
such that $U\rho U^\dagger = V\rho' V^\dagger$ and $ U\sigma U^\dagger = V\sigma' V^\dagger$.
We denote by \([(\rho,\sigma)]\) the equivalence class of \((\rho,\sigma)\) under \(\approx\). The semiring set $\mathcal{S}$ is then the set of equivalence classes of pairs of states under this relation, namely 
\begin{align}
    \mathcal{S}= \mathcal{Q}/{\approx}\,.
\end{align}

\item \textbf{The addition ``$+$"}.
Addition is defined componentwise through the direct sum of operators. Given
\((\rho,\sigma)\in\mathcal{Q}_d\) and
\((\tau,\eta)\in\mathcal{Q}_{d'}\), we set
\begin{align}
    (\rho,\sigma)\boxplus(\tau,\eta)
    =
    (\rho\oplus\tau,\sigma\oplus\eta).
\end{align}
This operation induces an addition on the quotient
\(\mathcal{Q}/{\approx}\) by
\begin{align}
    [(\rho,\sigma)]+[(\tau,\eta)]
    =
    [(\rho\oplus\tau,\sigma\oplus\eta)].
\end{align}
In other words, the sum of two equivalence classes is the equivalence class
of the componentwise direct sum of any pair of representatives.

\item \textbf{The multiplication ``\(\cdot\)''.}
Multiplication is defined componentwise through the tensor product of
operators. Given
\((\rho,\sigma)\in\mathcal{Q}_d\) and
\((\tau,\eta)\in\mathcal{Q}_{d'}\), we set
\begin{align}
    (\rho,\sigma)\boxtimes(\tau,\eta)
    =
    (\rho\otimes\tau,\sigma\otimes\eta).
\end{align}
This operation induces a multiplication on the quotient
\(\mathcal{Q}/{\approx}\) according to
\begin{align}
    [(\rho,\sigma)]\cdot[(\tau,\eta)]
    =
    [(\rho\otimes\tau,\sigma\otimes\eta)].
\end{align}
In other words, the product of two equivalence classes is the equivalence
class of the componentwise tensor product of any pair of representatives.

\item \textbf{The preorder ``\(\rgeq\)''.}
The preorder is defined by convertibility of pairs under quantum
channels. Given
\((\rho,\sigma)\in\mathcal{Q}_d\) and
\((\rho',\sigma' )\in\mathcal{Q}_{d'}\), we write $(\rho,\sigma)\succeq(\rho',\sigma')$
if there exists a completely positive trace-preserving map $\mathcal{E}:\mathcal{T}_d\to \mathcal{T}_{d'} $
such that $\mathcal{E}(\rho)=\rho'$ and $ \mathcal{E}(\sigma)=\sigma'$.

It induces a preorder on the quotient \(\mathcal{Q}/{\approx}\) by
\begin{align}
    [(\rho,\sigma)]\rgeq[(\rho',\sigma')]
    \quad\Longleftrightarrow\quad
    (\rho,\sigma)\succeq(\rho',\sigma').
\end{align}

\item \textbf{The zero element ``\(0\)''.}
The zero element is defined as
\begin{align}
    0 = [(0,0)],
\end{align}
where \((0,0)\) is the pair of zero operators on the one-dimensional
Hilbert space.

\item \textbf{The unit element ``\(1\)''.}
The unit element is defined as
\begin{align}
    1 = [(1,1)],
\end{align}
where \((1,1)\) is the pair of identity operators on the one-dimensional
Hilbert space.
\end{enumerate}

The  \emph{quantum majorization semiring} is therefore
\begin{align}
    \bigl(
        \mathcal{Q}/{\approx},+,
        \cdot,
        [(0,0)],
        [(1,1)],
        \lgeq
    \bigr).
\end{align}

We conclude this subsection by introducing a few additional concepts needed to apply the results on many-copy and catalytic transformations.

We say that a preordered semiring $S$ is a preordered semidomain if
\begin{equation}
    \begin{aligned}
&xy = 0 \ \Rightarrow\ x = 0 \ \text{or} \ y = 0, \\
&0 \rleq x \rleq 0 \ \Rightarrow\ x = 0.
\end{aligned}
\end{equation}
In addition, $S$ is zerosumfree if $x + y = 0$ implies $x = 0 = y$. 
It is easy to verify that the quantum majorization semiring is a zerosumfree preordered semidomain.

\subsubsection{Power universals}
A preordered semiring $S$ has polynomial growth if it admits a power universal element $u \in \mathcal{S}$, that is,
\begin{equation}
    x \rleq y \quad \Rightarrow \quad \exists\, k \in \mathbb{N}:\ y \rleq x u^k.
\end{equation}
The existence of a power universal element is a key requirement for applying
Theorem~\ref{thm:Fritz2022}, stated below, to many-copy and catalytic
transformations of pairs of states. The following lemma provides an explicit
characterization of such elements. In particular, it gives a concrete
condition on the input pair under which the sufficiency direction of
Theorem~\ref{thm:Fritz2022} can be invoked to establish the existence of a
quantum channel realizing the desired transformation in the many-copy
regime. The following lemma, established in~\cite[Lemma 16]{haapasalo2025barycentric}, characterizes the power-universal elements of the quantum-majorization semiring.
\begin{lemma}
\label{lem:power-universal}
The quantum majorization semiring \(S\) is of polynomial growth. Moreover,
an element \([(\rho,\sigma)]\in S\) is power universal if and only if $ \Tr[\rho]=\Tr[\sigma]=1$ and $\rho\neq \sigma$.
\end{lemma}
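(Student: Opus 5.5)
We must prove Lemma~\ref{lem:power-universal}: the quantum majorization semiring has polynomial growth, and $[(\rho,\sigma)]$ is power universal if and only if $\Tr[\rho]=\Tr[\sigma]=1$ and $\rho\neq\sigma$.

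\emph{Necessity.} We show that if either condition fails, there is a comparable pair $x\rleq y$ with $y\not\rleq x u^k$ for every $k$.
\begin{itemize}[leftmargin=*]
\item If $\rho=\sigma$, then every power $u^k$ has equal components. Take $x=[(1,1)]$ and $y=[(\ketbra{0}{0}+\ketbra{1}{1}/2,\ \ketbra{0}{0}/2+\ketbra{1}{1})]$, suitably normalized, with $x\rleq y$. Then $x u^k$ has equal components, so any channel maps it to a pair with equal components, and $y$ cannot be reached.
\item If a trace differs from one, traces are multiplicative under tensor products and preserved by channels, so they obstruct the relation. For example, take $x=1$ and $y=[(1/2\oplus 1/2,\ 1/2\oplus1/2)]\approx 1$ up to the appropriate comparison, and $x'=[(c,c)]$ for a suitable scalar $c$.
\end{itemize}
In both cases the key observation is that $\Tr$ of each component and the property ``components are equal'' are monotone invariants, and together they rule out the existence of a channel.

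\emph{Sufficiency.} Assume $\Tr\rho=\Tr\sigma=1$ and $\rho\neq\sigma$. Let $x=[(\rho_1,\sigma_1)]\rleq y=[(\rho_2,\sigma_2)]$; by the convention of the displayed definition we need $y\rleq x u^k$.

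First, a relation in either direction forces the traces to agree: $\Tr\rho_1=\Tr\rho_2$ and $\Tr\sigma_1=\Tr\sigma_2$. We may therefore normalize so that both pairs are states up to common scalars.

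Next, since $\rho\neq\sigma$, a two-outcome measurement maps $(\rho,\sigma)$ to a classical pair $(p,q)$ with $p\neq q$. Taking tensor powers of $(p,q)$, $(p,q)^{\otimes k}$ can be processed (Blackwell/hypothesis-testing argument) into a pair approximating $(\ketbra{0}{0},\ketbra{1}{1})$ with errors exponentially small in $k$. More to the point, it can produce any pair $(\tau,\eta)$ whose ``distinguishability'' is bounded, i.e.\ $D_{\max}$ is finite in both directions. Because the supports of $\rho_2$ and $\sigma_2$ coincide, $(\rho_2,\sigma_2)$ has finite $D_{\max}$ in both directions. Hence $(\rho_2,\sigma_2)$ can be prepared exactly from a classical pair $(p',q')$ with $p'_x/q'_x$ bounded above and below. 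Such a pair can in turn be obtained exactly from $(p,q)^{\otimes k}$ once $D_{\max}(p^{\otimes k}\|q^{\otimes k})$ and $D_{\max}(q^{\otimes k}\|p^{\otimes k})$ exceed the needed values, using two-outcome classical majorization.

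Finally, combining $x\otimes u^k$ by discarding $x$ (the partial trace is a channel) and then preparing $y$ gives $y\rleq xu^k$.

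\emph{Main obstacle.} The hardest step is the exact conversion $(p,q)^{\otimes k}\to(p',q')$, since for a binary classical pair the Lorenz curve must dominate. I would use Lemma~\ref{lem: upper bound}'s $D_{\max}$ preparation idea: $(p',q')$ is a mixture dominated by the pair $(\ketbra{0}{0},\ (1/c)\ketbra{0}{0}+(1-1/c)\ketbra{1}{1})$ and its reverse. Exact Lorenz dominance is then checked for large $k$ using the extreme likelihood ratios.

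The degenerate case of a one-dimensional $y$ and the zero element would be handled separately.
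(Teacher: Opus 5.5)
The paper does not prove this lemma itself; it imports it from the cited work on barycentric divergences. Your argument therefore has to stand on its own. Its architecture is sound. For sufficiency, $x\rleq y$ only serves to force $\Tr\rho_1=\Tr\rho_2=:a$ and $\Tr\sigma_1=\Tr\sigma_2=:b$. One then discards $x$, measures each copy of $u$ with $\{\Pi,I-\Pi\}$, where $\Pi$ is the projector onto the positive part of $\rho-\sigma$, converts classically, and prepares $(\tau,\eta)=(\rho_2/a,\sigma_2/b)$. The measurement gives a binary pair $p\neq q$ with coinciding supports, since $\supp\rho=\supp\sigma$.

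\textbf{The gap.} It sits exactly at the step you flag as the main obstacle, and both justifications you give for it fail.

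First, a lower bound on $D_{\max}(p^{\otimes k}\|q^{\otimes k})$ and $D_{\max}(q^{\otimes k}\|p^{\otimes k})$ does not imply exact convertibility. $D_{\max}$ only sees the most extreme outcome, which may carry negligible weight. For example, $P=(\delta,\delta/M,1-\delta-\delta/M)$ and $Q=(\delta/M,\delta,1-\delta-\delta/M)$ have $D_{\max}=\log M$ in both directions but trace distance below $\delta$. So they reach no target pair whose trace distance exceeds $\delta$.

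Second, routing the conversion through the $D_{\max}$ preparation of Lemma~\ref{lem: upper bound}, i.e.\ through the classical pair $((1,0),(1/c,1-1/c))$, is impossible for every $k$. Since $p^{\otimes k}$ and $q^{\otimes k}$ have the same support, a randomized test $T$ with $p^{\otimes k}(T)=1$ equals $1$ on that support. This forces $q^{\otimes k}(T)=1\neq 1/c$. That pair also violates your own, correct, requirement that the intermediate likelihood ratios be bounded above and below.

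\textbf{What is needed.} You need an explicit intermediate pair with coinciding supports, together with the right criterion for reaching it.

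For the intermediate pair, if $\tau\neq\eta$, use the anchor construction of Lemma~\ref{lem:prepared-coinciding-support}. For small $t>0$, $A=\tau+t(\tau-\eta)$ and $B=\eta-t(\tau-\eta)$ are states with $\tau=r_1A+r_2B$ and $\eta=s_1A+s_2B$, where $r=(1+t,t)/(1+2t)$ and $s=(t,1+t)/(1+2t)$. If $\tau=\eta$, just prepare $\tau$.

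For the criterion, a stochastic map taking $(P_k,Q_k)=(p^{\otimes k},q^{\otimes k})$ to the binary pair $(r,s)$ is the same as a randomized test $T$ with $P_k(T)=r_1$ and $Q_k(T)=s_1$. So you need $(s_1,r_1)$ to lie in the convex testing region $\{(Q_k(T),P_k(T))\}$. That region contains $(0,0)$, $(1,1)$, and the point $(\epsilon_k,1-\epsilon_k')$ of the test ``$\sum_i\log(p_{x_i}/q_{x_i})>0$''. Here $\epsilon_k,\epsilon_k'\to0$ by the law of large numbers, because $\log(p/q)$ has mean $D(p\|q)>0$ under $p$ and $-D(q\|p)<0$ under $q$. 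Since $0<s_1<r_1<1$, the triangle spanned by these three points contains $(s_1,r_1)$ for all large $k$. This two-sided vanishing of hypothesis-testing errors is what actually drives the proof; you mention it in passing and then replace it by a $D_{\max}$ condition.

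\textbf{Necessity.} Your second bullet does not work as written. For $x=1$ and $y=[(I/2,I/2)]$ one has $y\rleq xu^0$, so nothing is excluded when $k=0$ is allowed. Reuse the pair from your first bullet instead: $x=[(1,1)]\rleq y=[(\omega_1,\omega_2)]$ with states $\omega_1\neq\omega_2$. Then $k=0$ fails because $(1,1)$ only maps to equal outputs. And $k\geq1$ fails because $xu^k$ has traces $((\Tr\rho)^k,(\Tr\sigma)^k)\neq(1,1)$.
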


\subsubsection{A surjective homomorphism}

To apply Theorem~\ref{thm:Fritz2022}, stated below, we first identify a
surjective homomorphism with trivial kernel,
\begin{align}
    \lVert\,\cdot\,\rVert \colon S
    \longrightarrow
    \mathbb{R}_{>0}^{d} \cup \{(0,\ldots,0)\},
\end{align}
satisfying
\begin{align}
    &a \rgeq b
    \quad\Longrightarrow\quad
    \lVert a\rVert = \lVert b\rVert,
    \\
    &\lVert a\rVert = \lVert b\rVert
    \quad\Longrightarrow\quad
    a \sim b.
\end{align}
Here, \(x\sim y\) denotes the equivalence relation induced by the preorder
\(\rleq\). Explicitly, \(x\sim y\) if and only if there exist
\(z_1,\ldots,z_n\in S\) such that
\begin{align}
    x \rleq z_1 \rgeq z_2 \rleq \cdots \rgeq z_n \rleq y.
\end{align}
We denote the component homomorphisms of
\(\lVert\,\cdot\,\rVert\) by
\(\lVert\,\cdot\,\rVert_{(j)}\), for \(j=1,\ldots,d\).

For the quantum-majorization semiring of pairs of possibly unnormalized
quantum states, the required homomorphism is obtained by taking the trace of
each element of the pair. Explicitly,
\begin{align}
\bigl\lVert[(\rho,\sigma)]\bigr\rVert
    &=
    \bigl(
        \operatorname{Tr}[\rho],
        \operatorname{Tr}[\sigma]
    \bigr).
\end{align}
The kernel is trivial, since
\begin{align}
    \bigl\lVert[(\rho,\sigma)]\bigr\rVert=(0,0)
    &\quad\Longleftrightarrow\quad
    \operatorname{Tr}[\rho]=\operatorname{Tr}[\sigma]=0
    \\
    &\quad\Longleftrightarrow\quad
    \rho=\sigma=0.
\end{align}
Furthermore, if
\([(\rho,\sigma)]\rgeq[(\rho',\sigma')]\), then
\begin{align}
    \bigl\lVert[(\rho,\sigma)]\bigr\rVert
    &=
    \bigl\lVert[(\rho',\sigma')]\bigr\rVert,
\end{align}
because \(\rho'\) and \(\sigma'\) are obtained from \(\rho\) and \(\sigma\),
respectively, through a trace-preserving map.

It remains to verify the second condition. Suppose that
\begin{align}
    \bigl\lVert[(\rho,\sigma)]\bigr\rVert
    &=
    \bigl\lVert[(\rho',\sigma')]\bigr\rVert.
\end{align}
Equivalently,
\begin{align}
    \operatorname{Tr}[\rho]
    &=
    \operatorname{Tr}[\rho'],\quad 
    \operatorname{Tr}[\sigma]
    =
    \operatorname{Tr}[\sigma'].
\end{align}
Since the trace is a completely positive and trace-preserving map, we obtain 
\begin{align}
    [(\rho,\sigma)]
    &\rgeq
    \bigl[
        \bigl(\operatorname{Tr}[\rho],
              \operatorname{Tr}[\sigma]\bigr)
    \bigr]
    =
    \bigl[
        \bigl(\operatorname{Tr}[\rho'],
              \operatorname{Tr}[\sigma']\bigr)
    \bigr]
    \lgeq
    [(\rho',\sigma')].
\end{align}
Consequently,
\begin{align}
    \bigl\lVert[(\rho,\sigma)]\bigr\rVert
    =
    \bigl\lVert[(\rho',\sigma')]\bigr\rVert
    \quad\Longrightarrow\quad
    [(\rho,\sigma)]
    \sim
    [(\rho',\sigma')].
\end{align}
Thus, \(\lVert\,\cdot\,\rVert\) satisfies all the required properties. In the
present setting, it has two component homomorphisms, given by
\begin{align}
    \bigl\lVert[(\rho,\sigma)]\bigr\rVert_{(1)}
    =
    \operatorname{Tr}[\rho],
    \quad 
    \bigl\lVert[(\rho,\sigma)]\bigr\rVert_{(2)}
    =
    \operatorname{Tr}[\sigma].
\end{align}

\subsubsection{Monotone homomorphisms and derivations}
\label{subsec: homo and deri}
Within the framework of preordered semirings, transformation criteria for products of elements are formulated in terms of monotone homomorphisms and monotone derivations. In the specific case of the quantum majorization semiring, as we discuss below, these functionals correspond,  after taking logarithms and applying suitable normalizations, to R\'enyi relative entropies and govern transformations in both the many-copy and uncorrelated catalytic regimes. In the following, we first introduce monotone homomorphisms and derivations formally and then explain their relationship with the R\'enyi relative entropies.

Given preordered semirings $S$ and $T$, we say that a map $\Phi:\mathcal{S}\to \mathcal{T}$ is a {\it monotone homomorphism} if it satisfies the properties
\begin{enumerate}
\item Additivity: $\Phi(x+y)=\Phi(x)+\Phi(y)$ for all $x,y\in \mathcal{S}$,
\item Multiplicativity: $\Phi(xy)=\Phi(x)\Phi(y)$ for all $x,y\in S$,
\item Monotonicity: $x\rleq y$ $\Rightarrow$ $\Phi(x)\rleq\Phi(y)$, and
\item $\Phi(0)=0$ and $\Phi(1)=1$.
\end{enumerate}
We call a monotone homomorphism \textit{degenerate} if $ x \rleq y \quad \Rightarrow \quad \Phi(x) = \Phi(y)$. Otherwise, it is called \textit{nondegenerate}.
We need to consider monotone homomorphisms with values in certain special semirings, namely:
\begin{enumerate}
    \item $\mathbb{R}_+$: the half-line $[0,+\infty)$ with the usual addition, multiplication, and total order.
    \item $\mathbb{R}_+^{\mathrm{op}}$: the same set with reversed order. 
    \item $\mathbb{T}\mathbb{R}_+$: the half-line $[0,+\infty)$ with the usual multiplication, order, and the tropical sum $x + y = \max\{x,y\}$.
    \item $\mathbb{T}\mathbb{R}_+^{\mathrm{op}}$: the same as for $\mathbb{T}\mathbb{R}_+$, but with reversed order.
\end{enumerate}
The term \emph{temperate reals} refers to the pair $(\mathbb{R}_+, \mathbb{R}_+^{\mathrm{op}})$, while \emph{tropical reals} denotes the pair $(\mathbb{T}\mathbb{R}_+, \mathbb{T}\mathbb{R}_+^{\mathrm{op}})$.
Given a monotone homomorphism $\Phi: \mathcal{S} \to \mathbb{R}_+$, an additive map $\Delta: \mathcal{S} \to \mathbb{R}$ is called a derivation at $\Phi$ (or a $\Phi$-derivation) if it satisfies the Leibniz rule
\begin{equation}
    \Delta(xy) = \Delta(x)\,\Phi(y) + \Phi(x)\,\Delta(y)
\end{equation}
for all $x, y \in \mathcal{S}$.
In this work, we are interested in $\Phi$-derivations at degenerate homomorphisms that are also monotone, i.e., such that
\begin{equation}
    x \rleq y \quad \Rightarrow \quad \Delta(x) \leq \Delta(y).
\end{equation}

\bigskip

A complete characterization of the monotone homomorphisms and derivations of the quantum-majorization semiring is not currently known. Such a characterization is not required for our purposes, however, because we restrict our attention to transformations whose input pair is classical. As shown below, in this setting it suffices to consider the maximal quantum extensions of the corresponding classical functionals.
By Theorem~\ref{thm: main theorem}, these maximal extensions are the geometric R\'enyi relative entropies introduced in~\eqref{def: geometric relative entropy} for \(\alpha\in[0,2]\), and the \(\alpha\)-\(z\) R\'enyi relative entropies with \(z=\alpha-1\), introduced in~\eqref{def: alpha-z for z=alpha-1}, for \(\alpha>2\).
We note that, since these quantities are invariant under isometric embeddings, their values depend only on the equivalence class of the underlying pair. They therefore induce well-defined functionals on the quotient space by assigning to each equivalence class the value attained on any of its representatives.

From the defining properties above, it follows that monotone homomorphisms and derivations of the quantum-majorization semiring can be identified, up to taking logarithms and applying constant normalization factors, with relative entropies. Whenever no confusion can arise, we therefore identify each homomorphism or derivation with its corresponding relative entropy. For every relative entropy listed below, the functional obtained by exchanging its two arguments is also implicitly included. We now present the homomorphisms and derivations that are relevant for our purposes, grouped according to their codomains.
\begin{itemize}
    \item \(\mathbb{R}_{+}^{\mathrm{op}}\):
    the geometric R\'enyi relative entropy
    \(\widehat{D}_{\alpha}\) for
    \(\alpha\in(0,1)\).

    \item \(\mathbb{R}_{+}\):
    the geometric R\'enyi relative entropy
    \(\widehat{D}_{\alpha}\) for
    \(\alpha\in(1,2]\), and the
    \(\alpha\)-\(z\) R\'enyi relative entropy
    \(D_{\alpha,\alpha-1}\) for
    \(\alpha>2\).

    \item \(\mathbb{TR}_{+}\):
    the max-relative entropy \(D_{\max}\), obtained as the limit $\alpha\to\infty$ of $D_{\alpha,\alpha-1}$.
    \item \(\mathbb{TR}_{+}^{\mathrm{op}}\):
    there are no nondegenerate monotone homomorphisms into this
    target semiring; see~\cite[Lemma~17]{haapasalo2025barycentric}.

    \item Derivations: The Belavkin--Staszewski relative entropy $\widehat{D}$, obtained as the limit \(\alpha \to 1\) of the geometric Rényi relative entropy. When evaluated on the pair \((\rho,\sigma)\), it gives the derivation associated with the first component of the surjective homomorphism $\lVert\,\cdot\,\rVert_{(1)}$; exchanging the two arguments gives the derivation associated with the second component $\lVert\,\cdot\,\rVert_{(2)}$.
\end{itemize}
Finally, the above list is exhaustive in the classical setting, namely, for pairs of probability distributions with coinciding supports.
Indeed, the R\'enyi relative entropies for $\alpha\in (0,\infty]$ exhaust all monotone homomorphisms and all (extremal) monotone derivations of the corresponding classical semiring; see~\cite[Propositions~13 and~14]{farooq2024matrix}.

\subsection{Main theorem on multiple copies and catalytic transformation of states}
In this section, we derive our main results on the transformation of multiple copies of pairs of quantum states and on uncorrelated catalytic transformations. The key result that we will use is the following.

\begin{theorem}[Based on Theorem 8.6 in \cite{fritz2023abstractII}]\label{thm:Fritz2022}
Let $S$ be a zerosumfree preordered semidomain with a power universal element $u$. Assume that for some $d\in\mathbb N$ there is a surjective homomorphism $\|\cdot\|:\mathcal{S}\to \mathbb R_{>0}^d\cup\{(0,\ldots,0)\}$ with trivial kernel and such that
\begin{equation}\label{eq:surjectivehomomorphismproperties}
a\rgeq b\ \Rightarrow\ \|a\|=\|b\|\quad {\rm and} \quad \|a\|=\|b\|\ \Rightarrow\ a\sim b.
\end{equation}
Denote the component homomorphisms of $\|\cdot\|$ by $\|\cdot\|_{(j)}$, $j=1,\ldots,d$. Let $x,y\in S\setminus\{0\}$ with $\|x\|=\|y\|$. If 
\begin{itemize}
\item[(i)] for every $\mathbb K\in\{\mathbb R_+,\mathbb R_+^{\rm op},\mathbb T \mathbb R_+,\mathbb T\mathbb R_+^{\rm op}\}$ and every nondegenerate monotone homomorphism $\Phi : \mathcal{S} \to \mathbb K$ with trivial kernel, we have $\Phi(x) > \Phi(y)$, and
\item[(ii)] $\Delta(x) > \Delta(y)$ for every monotone $\|\cdot\|_{(j)}$-derivation $\Delta : \mathcal{S} \to \mathbb R$ with $\Delta(u) = 1$ for all component indices $j = 1,\ldots,d$,
\end{itemize}
then
\begin{enumerate}[label=(\alph*), ref=2(\alph*)]
    \item there exists a nonzero $c\in \mathcal{S}$ such that $cx\rgeq cy$, and 
\label{it: catalytic}
\item if additionally $x$ is power universal, then $x^n \rgeq y^n$ for all sufficiently large $n\in\mathbb N$. 
\label{it: asymptotic}
\end{enumerate}
Conversely, if either of these properties holds for some integer $n \geq 1$ or a catalyst $c$, then the inequalities in items (i) and (ii) above hold non-strictly. 
\end{theorem}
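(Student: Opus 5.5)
The forward direction will not be reproved from scratch. The plan is to obtain it by specializing the abstract Vergleichsstellensatz of~\cite[Theorem~8.6]{fritz2023abstractII}, and then to check the converse directly from the algebraic properties of monotone homomorphisms and derivations.

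\textbf{Step 1 (sufficiency).} I will match our hypotheses with those of~\cite[Theorem~8.6]{fritz2023abstractII}. There, $S$ is a zerosumfree preordered semidomain of polynomial growth, so it has a power universal $u$. The surjective homomorphism $\|\cdot\|$ onto $\mathbb R_{>0}^d\cup\{0\}$ must have trivial kernel and satisfy~\eqref{eq:surjectivehomomorphismproperties}. These conditions mean that the ``degenerate'' part of the spectrum of $S$ consists exactly of the components $\|\cdot\|_{(j)}$. Fritz's theorem then says the following: if $x,y\neq 0$ with $\|x\|=\|y\|$ are strictly separated by every nondegenerate monotone homomorphism into the temperate and tropical reals (and their opposites) and by every normalized monotone derivation at the degenerate homomorphisms, then $x$ catalytically dominates $y$. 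If moreover $x$ is power universal, then $x^n\rgeq y^n$ for all large $n$.

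The only bookkeeping I expect to need concerns two points. First, the normalization $\Delta(u)=1$ loses no generality, since derivations form a cone and $\Delta(u)>0$ for nonzero monotone derivations; I will verify this by comparing $u^k$ with elements that $u$ dominates. Second, restricting to homomorphisms with trivial kernel suffices in a semidomain. Both points are already built into Fritz's formulation, so this step is essentially citation.

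\textbf{Step 2 (converse, catalytic case).} Suppose $cx\rgeq cy$ with $c\neq 0$. Take any monotone homomorphism $\Phi$ into one of the four target semirings. Multiplicativity and monotonicity give $\Phi(c)\Phi(x)\geq\Phi(c)\Phi(y)$, with the order reversed for the ``op'' targets. Since $\Phi$ has trivial kernel and $c\neq0$, we have $\Phi(c)>0$, and dividing gives the non-strict version of (i).

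For a monotone $\|\cdot\|_{(j)}$-derivation $\Delta$, the Leibniz rule gives
\begin{align}
\Delta(c)\|x\|_{(j)}+\|c\|_{(j)}\Delta(x)\;\geq\;\Delta(c)\|y\|_{(j)}+\|c\|_{(j)}\Delta(y).
\end{align}
Because $\|x\|=\|y\|$, the first terms cancel. Because $c\neq0$ and the image of $\|\cdot\|$ is $\mathbb R_{>0}^d\cup\{0\}$ with trivial kernel, we have $\|c\|_{(j)}>0$, so $\Delta(x)\geq\Delta(y)$.

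\textbf{Step 3 (converse, multi-copy case).} Suppose $x^n\rgeq y^n$. Then $\Phi(x)^n\geq\Phi(y)^n$, in the appropriate order, and taking $n$-th roots of nonnegative reals gives $\Phi(x)\geq\Phi(y)$. For derivations, induction on the Leibniz rule gives $\Delta(x^n)=n\|x\|_{(j)}^{n-1}\Delta(x)$, and $\|x\|_{(j)}=\|y\|_{(j)}>0$ then yields $\Delta(x)\geq\Delta(y)$.

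\textbf{Main obstacle.} I expect the hardest part to be confirming in Step~1 that the exact hypotheses here are what~\cite{fritz2023abstractII} requires. In particular, the multi-component degenerate homomorphism $\|\cdot\|$ must play the role of Fritz's ``$\|\cdot\|$'' with all degenerate points accounted for by its components. If needed, I will reduce to the $d=1$ formulation by working in the subsemiring of elements with fixed norm ratios. I will also check that the catalytic conclusion needs no power universality of $x$ itself, only the existence of some power universal $u$.
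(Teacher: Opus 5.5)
Your proposal is correct and matches the paper, which also gives no independent proof and simply specializes Fritz's Theorem~8.6; your direct check of the converse, using multiplicativity, the Leibniz rule and $\|c\|_{(j)},\|x\|_{(j)}>0$, is the standard argument the paper leaves implicit. Drop the side claim that every nonzero monotone derivation has $\Delta(u)>0$, since it is false in general (for instance, $D\circ\|\cdot\|_{(j)}$, with $D$ a nonzero derivation of the field $\mathbb{R}$ vanishing at $\|u\|_{(j)}$, is monotone because comparable elements have equal norms), but nothing depends on it because the normalization $\Delta(u)=1$ is already built into hypothesis (ii), exactly as in Fritz.
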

Large-sample ordering as described in item \textit{(b)} of Theorem~\ref{thm:Fritz2022} implies catalytic ordering as in item \textit{(a)} of Theorem~\ref{thm:Fritz2022}, where the catalyst can be chosen as
\begin{equation}
    c = \sum_{\ell=0}^{n-1} x^\ell y^{n-1-\ell}
\end{equation}
for sufficiently large $n \in \mathbb{N}$. This implication was originally established in \cite{duan2005multiple} for the case where $x, y$ are probability vectors, but the argument extends naturally to the more general framework considered here.
The converse, however, generally fails to hold. In particular,~\cite[Theorem 3]{feng2006relation} presents a method for constructing explicit counterexamples within a specific context.

\bigskip

We now specialize the above theorem to the current quantum majorization semiring introduced above. 
Since in the following we consider classical input pairs, many of the general conditions simplify. Indeed, it turns out that it is enough to verify that the classical Rényi relative entropies of the input pair are larger than the corresponding maximal quantum extensions evaluated on the output pair. A similar discussion has also appeared in~\cite[Section~6]{verhagen2025matrix}.

\begin{theorem}
\label{thm: large sample and catalytic}
   Let $(p, q)$ be a pair of classical states and $(\rho, \sigma)$ be a pair of quantum states, each having coinciding support. If
    \begin{align}
    & D_\alpha(p\| q)> \widehat{D}_\alpha(\rho\| \sigma), \qquad\qquad\qquad   \forall \alpha\in [1/2,2], \\
    &D_\alpha(q\| p)> \widehat{D}_\alpha(\sigma\| \rho), \qquad\qquad\qquad   \forall \alpha\in [1/2,2], \\
    &D_\alpha(p\| q)> D_{\alpha,\alpha-1}(\rho\| \sigma), \qquad\;\qquad   \forall \alpha \in [2,\infty], \\
    &D_\alpha(q\| p)> D_{\alpha,\alpha-1}(\sigma\| \rho), \qquad\qquad\;   \forall \alpha \in [2,\infty]\,.
    \end{align}
    then
   \begin{enumerate}
    \item for sufficiently large \(n \in \mathbb{N}\), there exists a quantum channel $\mathcal{E}_n$ such that  
    \begin{align}
        \mathcal{E}_n(p^{\otimes n}) = \rho^{\otimes n} \,, \text{and} \quad \mathcal{E}_n(q^{\otimes n}) = \sigma^{\otimes n}\,.
    \end{align}
    \item there exist quantum states \(\nu,\eta\) with coinciding support and a channel $\mathcal{F}$ such that
    \begin{equation}
       \mathcal{F}(p \otimes \nu) = \rho \otimes \nu \,, \text{and} \quad \mathcal{F}(q \otimes \eta) = \sigma \otimes \eta\,.
    \end{equation}
\end{enumerate}
Conversely, the existence of such a channel $\mathcal{E}_n$ (for some $n \geq 1$) or $\mathcal{F}$ implies that the aforementioned inequalities hold non-strictly.
\end{theorem}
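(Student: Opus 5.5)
The plan is to apply Theorem~\ref{thm:Fritz2022} in the quantum majorization semiring with $x=[(p,q)]$ and $y=[(\rho,\sigma)]$. Lemma~\ref{lem:power-universal} and the surjective trace homomorphism of the previous subsection supply the structural hypotheses. Since both pairs are normalized, $\|x\|=\|y\|=(1,1)$. If $p=q$, the hypotheses force contradictions: for instance $D_2(p\|q)=0>\widehat D_2(\rho\|\sigma)\ge 0$ is impossible. Hence $p\neq q$, so $x$ is power universal and both conclusions (a) and (b) become available once conditions (i) and (ii) are verified. Conclusion (b) is then item~1. Conclusion (a) gives a nonzero catalyst $c=[(\nu',\eta')]$ with $cx\rgeq cy$. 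Because the trace homomorphism is preserved, we may normalize $\nu'$ and $\eta'$ separately, which rescales the two components by constants, and obtain states $\nu,\eta$ with coinciding support. This is item~2.

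The core step is verifying (i) and (ii) for every monotone homomorphism and derivation, without a full classification of them. The key observation is that $x$ is classical. Take a nondegenerate monotone homomorphism $\Phi$ into $\mathbb{R}_+$. Its restriction to classical pairs is a monotone homomorphism of the classical semiring, so by~\cite[Propositions~13 and~14]{farooq2024matrix} it equals, after logarithm and normalization, $D_\alpha$ in one of the two argument orders for some $\alpha$, here with $\alpha>1$. Taking logarithms turns $\Phi$ into an additive, data-processing quantity on all quantum pairs, that is, a quantum relative entropy extending $D_\alpha$. Theorem~\ref{thm: main theorem} (equivalently Corollary~\ref{cor: minimality}) then gives $\log\Phi(y)/(\alpha-1)\le \widehat D_\alpha(\rho\|\sigma)$ or $D_{\alpha,\alpha-1}(\rho\|\sigma)$, or the swapped versions. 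Combined with the strict hypothesis and $\Phi(x)=\exp((\alpha-1)D_\alpha(p\|q))$, this yields $\Phi(x)>\Phi(y)$.

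For $\mathbb{R}_+^{\mathrm{op}}$ the same argument applies with $\alpha\in(0,1)$, where the sign of $\alpha-1$ is absorbed by the reversed order. Two gaps need attention here. The hypothesis covers only $\alpha\ge 1/2$, so I will use the symmetry $(1-\alpha)D_\alpha(p\|q)=\alpha D_{1-\alpha}(q\|p)$, and the analogous identity for $\widehat D_\alpha$, to cover $\alpha\in(0,1/2)$ through the swapped inequalities. The case $\alpha=0$ is degenerate because the supports coincide. For $\mathbb{TR}_+$ the classical restriction is $D_{\max}$, and the $\alpha=\infty$ hypothesis together with Lemma~\ref{lem: upper bound} ($D^{\mathbb P}_{\max}=D_{\max}$) suffices. $\mathbb{TR}_+^{\mathrm{op}}$ has no nondegenerate homomorphisms. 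For (ii), a monotone derivation restricted to classical pairs is a positive multiple of the relative entropy $D(p\|q)$ or $D(q\|p)$, fixed by $\Delta(u)=1$. Maximality over derivations follows from the limit $\alpha\to1$ of the prepared bound: the derivation is dominated by the Belavkin--Staszewski quantity $\widehat D$. The strict inequality at $\alpha=1$ then follows from the hypothesis, since $\alpha=1$ lies in $[1/2,2]$.

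I expect the main obstacle to be this upper-bounding step for the derivation and for homomorphisms with nonstandard normalizations. To handle it, I will argue directly by data processing through preparation maps. Any monotone functional $\Phi$ satisfies $\Phi(y)\le\Phi([(r,s)])$ for every classical pair $(r,s)$ preparing $(\rho,\sigma)$, because preparations are channels. Multiplicativity then lets us pass to tensor powers and regularize, and Propositions~\ref{prop: achievability} and the known $\alpha\le2$ formula give the result. For derivations the same regularization works because the Leibniz rule at a degenerate homomorphism makes $\Delta$ additive on powers.

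The converse direction is immediate. If $\mathcal{E}_n$ or $\mathcal{F}$ exists, data processing and additivity of $\widehat D_\alpha$ and $D_{\alpha,\alpha-1}$ (DPI for $\alpha\ge2$ by~\cite{zhang2020wigner}), applied together with the classical identities $D_\alpha(p\|q)=\widehat D_\alpha(p\|q)=D_{\alpha,\alpha-1}(p\|q)$, give the non-strict inequalities. In the catalytic case the catalyst term cancels, since it is finite because $\nu$ and $\eta$ have coinciding support.
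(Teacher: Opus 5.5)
Your overall architecture is the paper's. You apply Theorem~\ref{thm:Fritz2022} to $x=[(p,q)]$ and $y=[(\rho,\sigma)]$, identify the test functionals on the classical input with $D_\alpha$ via the classical classification, dominate their values on $y$ by the regularized prepared quantity, get power universality from $p\neq q$, and prove the converse by data processing and additivity. Your additional details are correct: the identity $\widehat Q_\alpha(\rho\|\sigma)=\widehat Q_{1-\alpha}(\sigma\|\rho)$ for $\alpha<1/2$, the relation $\Delta(y^n)=n\Delta(y)$ on normalized pairs, and normalizing the catalyst by linearity.

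\textbf{The gap.} The central upper-bounding step does not go through as written. Monotone homomorphisms and derivations are defined only on the quantum majorization semiring, whose elements are pairs with \emph{coinciding} supports. So the bound $\Phi(y)\le\Phi([(r,s)])$ (reversed for $\mathbb{R}_+^{\mathrm{op}}$) is available only for classical preparations of $(\rho,\sigma)$ with $\supp(r)=\supp(s)$. For any other $(r,s)$ the right-hand side is undefined. For the same reason, $\frac{1}{\alpha-1}\log\Phi$ is not a quantum relative entropy in the sense of Section~\ref{sec: relative entropies}, so Theorem~\ref{thm: main theorem} cannot be used as a black box.

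What your argument actually yields is a bound by the infimum of $D_\alpha(r\|s)$ over \emph{coinciding-support} preparations, which is a priori larger than $D^{\mathbb P}_\alpha$. This difference is not cosmetic. The preparations you rely on for $\alpha>2$ and $\alpha=\infty$ violate the constraint:
\begin{itemize}
\item The map of Lemma~\ref{lem: upper bound} (Remark~\ref{rem: optimal preparation map}) has a residual outcome with $p_0=0<q_0$.
\item The $D_{\max}$ preparation in Lemma~\ref{lem: upper bound}, which you cite for the $\mathbb{TR}_+$ case, uses $p=(1,0)$.
\end{itemize}
For $\alpha\le 2$, Matsumoto's optimal preparation of a faithful pair does have coinciding supports, since all eigenvalues of $\sigma^{-1/2}\rho\sigma^{-1/2}$ are positive. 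You would still need to say this explicitly.

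\textbf{The missing ingredient.} What you need is Lemma~\ref{lem:prepared-coinciding-support}: for $\rho,\sigma$ with coinciding supports and every $\alpha\in(0,\infty]$, imposing $\supp(p)=\supp(q)$ does not change the prepared infimum. This is not a one-line perturbation, because changing $p$ changes $\mathcal{F}(p)$. The paper's construction is:
\begin{itemize}
\item Mix the distributions as $p_\delta=(1-\delta)p+\delta q$ and $q_\delta=(1-\delta)q+\delta p$.
\item Compensate the resulting barycenter shifts $\mp\delta(\rho-\sigma)$ with corrected anchor states, built from the faithful states $\rho+t(\rho-\sigma)$ and $\sigma-t(\rho-\sigma)$ and prepared by full-support binary distributions.
\item Let $\delta\to0$, and then let the anchor weight $\lambda\to0$.
\item At $\alpha=\infty$, use a separate full-support two-outcome preparation that attains $D_{\max}(\rho\|\sigma)$ exactly.
\end{itemize}
Applying this lemma to $\rho^{\otimes n},\sigma^{\otimes n}$, which still have coinciding supports, makes your regularization argument valid for the $\mathbb{R}_+$, $\mathbb{R}_+^{\mathrm{op}}$ and $\mathbb{TR}_+$ homomorphisms and for the derivations.
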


\begin{proof}
By Theorem~\ref{thm:Fritz2022}, the strict-ordering conditions must be verified
for every monotone homomorphism and every derivation of the
quantum-majorization semiring. However, when restricted to commuting pairs, these functionals depend only on the
corresponding classical probability distributions and do not depend on the choice of
the joint eigenbasis. Indeed, two representations of the same classical pair are
related by a simultaneous unitary conjugation, possibly after an isometric
embedding. Monotonicity under the corresponding unitary or isometric channels,
together with monotonicity under their inverses on the relevant supports, then
implies that the two representations have the same value.

Moreover, on commuting pairs, the semiring operations and the channel-induced
preorder reduce to their classical counterparts: direct sums and tensor
products become the corresponding operations on probability vectors, while
quantum channels reduce to stochastic maps. Consequently, the restrictions of
the quantum homomorphisms and derivations coincide with those of the
matrix-majorization semiring studied in~\cite{farooq2024matrix}; see
also~\cite[Theorem~1]{mu2021blackwell}.

For pairs of probability distributions with coinciding supports, these
classical functionals have been completely characterized. Up to taking a
logarithm and multiplying by a positive constant, they are precisely
\begin{align}
    D_\alpha(p\|q)
    \qquad\text{and}\qquad
    D_\alpha(q\|p),
    \qquad
    \alpha\in[1/2,\infty];
\end{align}
see~\cite[Propositions~13 and~14 and Corollary~23]{farooq2024matrix}.
The case \(\alpha=1\) corresponds to the two derivations, whereas
\(\alpha=\infty\) corresponds to homomorphisms with values in the tropical
reals. The remaining parameter values correspond to homomorphisms with values
in \(\mathbb{R}_{+}\) or \(\mathbb{R}_{+}^{\mathrm{op}}\); see also
Section~\ref{subsec: homo and deri}.

Accordingly, every quantum extension \(\mathbb{D}_{\alpha}\) of the classical Rényi relative entropy \(D_{\alpha}\), with \(\alpha \in [1/2,\infty]\), arising from a monotone homomorphism or derivation must satisfy
\begin{align}
   \mathbb  D_\alpha(p\|q)
    >
    \mathbb{D}_\alpha(\rho\|\sigma)\,,\qquad \mathbb D_\alpha(q\|p)
    &>
    \mathbb{D}_\alpha(\sigma\|\rho)\,.
\end{align}

For a fixed value of \(\alpha\), several inequivalent quantum extensions
may have the same classical restriction. Let \(\mathbb{D}_\alpha\) be
any additive quantum relative entropy whose restriction to commuting
states is \(D_\alpha\). Since the input pair \((p,q)\) is classical, its
value is independent of the chosen extension $\mathbb{D}_\alpha(p\|q)
    =
    D_\alpha(p\|q)$.
A minor support issue must be addressed when applying the
data-processing inequality. The test spectrum underlying
Theorem~\ref{thm:Fritz2022} is defined on the semiring of classical
pairs with coinciding supports or, equivalently, with full support
after common-zero outcomes have been removed. Hence, for a preparation
channel \(\mathcal{F}\) satisfying $\mathcal{F}(p)
    =
    \rho$, $\mathcal{F}(q)
    =
    \sigma$ and $\operatorname{supp}(p)
    =
    \operatorname{supp}(q)$
the data-processing inequality gives
\begin{align}
    \mathbb{D}_\alpha(\rho\|\sigma)
    &=
    \mathbb{D}_\alpha\bigl(
        \mathcal{F}(p)
        \big\|
        \mathcal{F}(q)
    \bigr)
    \leq
    \mathbb{D}_\alpha(p\|q)
    =
    D_\alpha(p\|q).
    \label{eq:spectral-extension-preparation-bound}
\end{align}
Taking the infimum over all such preparations yields
\begin{align}
    \mathbb{D}_\alpha(\rho\|\sigma)
    &\leq
    \inf_{\substack{
        p,q,\mathcal{F}\\
        \mathcal{F}(p)=\rho,\,
        \mathcal{F}(q)=\sigma\\
        \operatorname{supp}(p)=\operatorname{supp}(q)
    }}
    D_\alpha(p\|q).
    \label{eq:spectral-extension-restricted-prepared}
\end{align}
This does not immediately give an upper bound in terms of the prepared relative entropy, since the latter is defined without the condition $\operatorname{supp}(p)=\operatorname{supp}(q)$
and therefore it generally gives only a smaller value.
However, Lemma~\ref{lem:prepared-coinciding-support} shows that the additional condition on the supports does not change the value of the infimum.
Dividing by \(n\) and regularizing therefore gives by Theorem~\ref{thm: main theorem}
\begin{align}
    \D_\alpha(\rho\Vert \sigma)
\leq 
    \begin{cases}
        \widehat{D}_\alpha(\rho\|\sigma),
        & \alpha\in[1/2,2],
        \\[1mm]
        D_{\alpha,\alpha-1}(\rho\|\sigma),
        & \alpha\in(2,\infty].
    \end{cases}
    \label{eq:maximal-extension-closed-form}
\end{align}
The same argument applies with the two arguments exchanged. Therefore,
because the classical input value \(D_\alpha(p\|q)\) is independent of
the chosen extension, it is sufficient to impose the inequalities for the maximal ones. By maximality, they then hold automatically for
every other admissible extension. The abstract conditions of
Theorem~\ref{thm:Fritz2022} consequently reduce to inequalities
involving the maximal relative entropies identified above.

In addition, according to~\cite[Theorem~8.6]{fritz2023abstractII}, we need to verify that the input pair $(p,q)$ is
power universal. This fact is already guaranteed by the above assumptions on the strict
ordering of entropies.
Indeed, all relative entropies are non-negative. Since the classical Rényi relative entropies are faithful for $\alpha>0$, the strict inequalities assumed in the theorem imply that $p\neq q$ and hence, by Lemma~\ref{lem:power-universal}, $(p,q)$ must be power universal.

Finally, all these quantities satisfy the data-processing inequality throughout their respective parameter ranges~\cite{matsumoto2018maxdivergence,zhang2020wigner}. Consequently, the existence of a channel \(\mathcal{E}_n\), for some \(n \geq 1\), or of a channel \(\mathcal{F}\) necessarily implies that the inequalities above hold in their non-strict form.
\end{proof}

\bigskip
\bigskip

The above result implies that the maximal relative entropies determine the optimal transformation rate between pairs of states, thereby providing them with an operational interpretation as characterizing the number of output copies obtainable per input copy.

Specifically, given two pairs $(p,q)$ and $(\rho,\sigma)$, we seek the largest $r \in \mathbb{R}$ for which there exists a channel $ \mathcal{E}_n$ such that $\mathcal{E}_n (p^{\otimes n})=\rho^{\otimes \lfloor r n \rfloor}$ and $\mathcal{E}_n (q^{\otimes n})=\sigma^{\otimes \lfloor r n \rfloor}$ hold for all sufficiently large $n$, i.e., we want to determine the value
\begin{equation}
    R((p,q)\rightarrow (\rho,\sigma)) =\; \sup\left\{ r \ge 0 \;\middle|\; \mathcal{E}_n (p^{\otimes n})=\rho^{\otimes \lfloor r n \rfloor}, \mathcal{E}_n (q^{\otimes n})=\sigma^{\otimes \lfloor r n \rfloor} \;\; \text{for large } n \right\}.
\end{equation}
It is known that this problem is equivalent to the large sample problem of transforming multiple copies (see e.g.~\cite[Theorem 3.11]{Jensen_Kjaerulf_2019},~\cite[Section 5]{verhagen2025matrix}). In particular, we use the necessary and sufficient conditions for a large sample in Theorem~\ref{thm: large sample and catalytic} to obtain
\begin{corollary}
\label{cor: rate}
Let $(p,q)$ be a pair of classical states and $(\rho,\sigma)$ a pair
of quantum states, each with coinciding supports. Then,
\begin{align}
R\bigl((p,q)\to(\rho,\sigma)\bigr)
=
\min\Biggl\{&
\min_{\alpha\in[1/2,2]}
\left\{
\frac{D_\alpha(p\|q)}
     {\widehat D_\alpha(\rho\|\sigma)},
\frac{D_\alpha(q\|p)}
     {\widehat D_\alpha(\sigma\|\rho)}
\right\},
\min_{\alpha\in[2,\infty]}
\left\{
\frac{D_\alpha(p\|q)}
     {D_{\alpha,\alpha-1}(\rho\|\sigma)},
\frac{D_\alpha(q\|p)}
     {D_{\alpha,\alpha-1}(\sigma\|\rho)}
\right\}
\Biggr\}.
\end{align}
\end{corollary}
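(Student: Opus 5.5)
The plan is to obtain Corollary~\ref{cor: rate} from Theorem~\ref{thm: large sample and catalytic}, applied to the pairs $(p^{\otimes k},q^{\otimes k})$ and $(\rho^{\otimes m},\sigma^{\otimes m})$ with $m/k\approx r$. Throughout we use additivity of $D_\alpha$, $\widehat D_\alpha$ and $D_{\alpha,\alpha-1}$ under tensor products.

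Write $R^\star$ for the right-hand side of the corollary, with the convention that a ratio with vanishing denominator equals $+\infty$. If $\rho=\sigma$, every denominator vanishes and the rate is trivially infinite (prepare the outputs by a replacement channel), so from now on we assume $\rho\neq\sigma$. Faithfulness then makes all denominators positive.

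\emph{Converse ($R\le R^\star$).} Suppose $r$ is achievable, so that for large $n$ some channel $\mathcal E_n$ maps $(p^{\otimes n},q^{\otimes n})$ to $(\rho^{\otimes\lfloor rn\rfloor},\sigma^{\otimes\lfloor rn\rfloor})$. By data processing, which holds for all listed quantities as recalled in the converse part of Theorem~\ref{thm: large sample and catalytic}, and by additivity we get
\[
nD_\alpha(p\|q)\ge\lfloor rn\rfloor\,\widehat D_\alpha(\rho\|\sigma),
\]
together with the analogous inequalities for the reversed ordering and for $D_{\alpha,\alpha-1}$. Dividing by $n$ and letting $n\to\infty$ gives $r\le D_\alpha(p\|q)/\widehat D_\alpha(\rho\|\sigma)$ for each functional, and hence $r\le R^\star$.

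\emph{Achievability ($R\ge R^\star$).} Fix a rational $r=a/b<R^\star$. Then
\[
bD_\alpha(p\|q)>a\,\widehat D_\alpha(\rho\|\sigma)
\]
for every $\alpha\in[1/2,2]$, and likewise for the other three families. Strictness is uniform in $\alpha$ once we argue that each ratio is continuous in $\alpha$ on the compact set $[1/2,\infty]$ (including the limit $\alpha\to\infty$ to $D_{\max}$); the minimum is therefore attained, so it strictly exceeds $r$. By additivity these strict inequalities state exactly the hypotheses of Theorem~\ref{thm: large sample and catalytic} for $x=(p^{\otimes b},q^{\otimes b})$ and $y=(\rho^{\otimes a},\sigma^{\otimes a})$. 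The pair $x$ is still classical, $y$ is a quantum pair, and both have coinciding supports. Item~1 of that theorem then yields, for all large $N$, a channel taking $x^{\otimes N}$ to $y^{\otimes N}$, i.e.\ $p^{\otimes bN}\mapsto\rho^{\otimes aN}$ and likewise for $q$. For general $n$ we write $n=bN+s$ with $0\le s<b$, discard the $s$ leftover copies, and partially trace the output down to $\lfloor r n\rfloor\le aN+a$ copies. To absorb the boundary term we run the argument with a slightly larger rational $r'\in(r,R^\star)$, so that $\lfloor rn\rfloor\le aN$ holds for large $n$. Hence every $r<R^\star$ is achievable. Finally we take the supremum.

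\emph{Main obstacle.} The delicate step is passing from pointwise strict inequalities to hypotheses valid for all $\alpha$, i.e.\ the compactness and continuity argument, in particular the behaviour at the endpoint $\alpha=\infty$ and at $\alpha=2$, where the two families meet and agree. A second, smaller point is the degenerate case of vanishing denominators. We would handle it through faithfulness and the $+\infty$ convention, using the fact that on classical inputs with $p\neq q$ all numerators are positive.
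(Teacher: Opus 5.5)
Your proposal is correct and follows essentially the paper's route. The paper just cites the standard equivalence between optimal rates and large-sample convertibility (Jensen--Kj\ae rulf; Verhagen et al.) and combines it with Theorem~\ref{thm: large sample and catalytic}; your converse (data processing plus additivity) and your achievability (applying that theorem to $(p^{\otimes b},q^{\otimes b})\to(\rho^{\otimes a},\sigma^{\otimes a})$ and absorbing leftover copies with a slightly larger rational rate) spell this equivalence out. One remark: the compactness/continuity step you flag as the main obstacle is not needed, because the theorem's hypotheses are pointwise in $\alpha$ and $r<R^\star$ already gives every strict inequality directly; continuity only justifies writing $\min$ rather than $\inf$.
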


\subsection{Catalytic generation of coherence under Gibbs-preserving maps}
\label{sec:quantum-thermodynamics}
In this section, we establish necessary and sufficient conditions for the
catalytic conversion of energy-incoherent states into energy-coherent
states under Gibbs-preserving maps. Such maps are known to generate
coherence between distinct energy levels from initially energy-incoherent
states~\cite{faist2015gibbs}. The conditions derived here completely
characterize this capability in the catalytic setting.

Given a Hamiltonian $H$, the Gibbs state is defined as
\begin{align}
    \gamma=\frac{1}{Z}e^{-\beta H},
\end{align}
where $\beta$ is the inverse temperature and
$Z=\Tr[e^{-\beta H}]$ is the partition function. In the following, we always assume that $\beta>0$.
A state is called \emph{energy-incoherent} if it is block diagonal with
respect to the energy eigenspaces, or equivalently, if it commutes with
the Hamiltonian.
Since every energy-incoherent state commutes with $H$, the two operators
can be simultaneously diagonalized. Thus, we may choose a common
orthonormal eigenbasis $\{\lvert E_i\rangle\}_{i=1}^d$ and write
$H=\sum_{i=1}^d E_i\lvert E_i\rangle\langle E_i\rvert$ and the state as
$\sum_{i=1}^d p_i\lvert E_i\rangle\langle E_i\rvert$, for some
probability distribution $p=\{p_i\}_{i=1}^d$. Throughout this section,
we identify $p$ with the corresponding diagonal state and use the same
symbol for both. Likewise, we occasionally identify the Gibbs state
$\gamma$ with the probability distribution
$\gamma=\{\gamma_i\}_{i=1}^d$. With this convention,
$D_\alpha(p\|\gamma)$ and $D_\alpha(\gamma\|p)$ denote the classical
R\'enyi divergences between the corresponding probability distributions.
A state is called \emph{energy-coherent} if it is not diagonal in any
energy eigenbasis, or equivalently, if it does not commute with $H$.

A quantum channel $\mathcal{F}$ is called Gibbs-preserving if it leaves the Gibbs state invariant, i.e., $\mathcal{F}(\gamma) = \gamma$. Gibbs-preserving operations can generate coherence; that is, they can map an energy-incoherent state to an energy-coherent state~\cite{faist2015gibbs}. Consequently, they constitute a strictly larger set than the physically motivated thermal operations, which are incapable of generating coherence among energy levels~\cite{janzing2000thermodynamic,horodecki2013fundamental,brandao2013resource}.

We begin by formally defining state transformations assisted by an uncorrelated catalyst.
\begin{definition}
Let $\rho$ and $\rho'$ be quantum states. We say that $\rho$ can be transformed into $\rho'$ by a catalytic operation if, for every $\varepsilon>0$, there exist a Gibbs-preserving operation $\mathcal{F}_\varepsilon$ and a catalyst state $\nu_\varepsilon$ and a quantum state $\rho_\varepsilon'$ such that $\mathcal{F}_\varepsilon\bigl(\rho\otimes\nu_{\varepsilon}\bigr) = \rho'_\varepsilon\otimes\nu_{\varepsilon}$ and $\frac{1}{2}\bigl\|\rho'_\varepsilon-\rho'\bigr\|_1 \leq \varepsilon$.
\end{definition}

The following result arises as a direct consequence of Theorem~\ref{thm: large sample and catalytic} by setting $q=\sigma=\gamma$.

\begin{theorem}
\label{thm:catalytic-gibbs-preserving}
Let $p$ be a full-rank state that is diagonal in the energy eigenbasis, and let $\rho$ be a full-rank quantum state. Then, $p$ can be transformed into $\rho$ by a catalytic operation if and only if
\begin{align}
    D_\alpha(p\|\gamma) &\geq \widehat{D}_\alpha(\rho\|\gamma), &&\forall\,\alpha\in[1/2,2], \label{eq:gpo-forward-geometric}\\
    D_\alpha(\gamma\|p) &\geq \widehat{D}_\alpha(\gamma\|\rho), &&\forall\,\alpha\in[1/2,2], \label{eq:gpo-reverse-geometric}\\
    D_\alpha(p\|\gamma) &\geq D_{\alpha,\alpha-1}(\rho\|\gamma), &&\forall\,\alpha\in[2,\infty], \label{eq:gpo-forward-alpha-z}\\
    D_\alpha(\gamma\|p) &\geq D_{\alpha,\alpha-1}(\gamma\|\rho), &&\forall\,\alpha\in[2,\infty]. \label{eq:gpo-reverse-alpha-z}
\end{align}
\end{theorem}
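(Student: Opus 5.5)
We will derive both directions from Theorem~\ref{thm: large sample and catalytic} specialized to $q=p$'s reference $\gamma$ and $\sigma=\gamma$. The Gibbs state $\gamma$ is full rank (since $\beta$ is finite), and $p$ and $\rho$ are full rank by assumption, so the pairs $(p,\gamma)$ and $(\rho,\gamma)$ both have coinciding (full) supports. Moreover $(p,\gamma)$ is classical, since $p$ is diagonal in the energy eigenbasis. The key observation is that a Gibbs-preserving channel $\mathcal{F}$ with $\mathcal{F}(p\otimes\nu)=\rho'\otimes\nu$ is exactly a pair transformation. To see this, take the catalyst pair $(\nu,\gamma_C)$, where $\gamma_C$ is the Gibbs state of the catalyst, and apply the combined Gibbs state rule $\gamma\otimes\gamma_C\mapsto\gamma\otimes\gamma_C$. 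This requires setting things up so that the catalyst's second component is its own Gibbs state, or, more simply, letting the channel act on the pair $(p\otimes\nu,\gamma\otimes\eta)$ and defining the catalyst Hamiltonian so that $\eta$ is its Gibbs state. Since $\eta$ is full rank, such a Hamiltonian always exists. Under this identification, item 2 of Theorem~\ref{thm: large sample and catalytic} is precisely catalytic Gibbs-preserving convertibility.

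\emph{Necessity.} Suppose that for every $\varepsilon$ there are $\mathcal{F}_\varepsilon,\nu_\varepsilon$ with $\mathcal{F}_\varepsilon(p\otimes\nu_\varepsilon)=\rho'_\varepsilon\otimes\nu_\varepsilon$ and $\mathcal{F}_\varepsilon$ Gibbs-preserving. We apply the data-processing inequality and additivity of each listed divergence $\D$ ($\widehat D_\alpha$ or $D_{\alpha,\alpha-1}$, in both orderings). Cancelling the finite catalyst term $\D(\nu_\varepsilon\|\gamma_C)$ gives $D_\alpha(p\|\gamma)=\D(p\|\gamma)\ge\D(\rho'_\varepsilon\|\gamma)$, and similarly for the reversed arguments. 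For this cancellation to be valid we must ensure the catalyst term is finite; we can arrange this by restricting to the support of the catalyst, or simply by noting that for a Gibbs catalyst with full-rank $\gamma_C$ all quantities are finite when $\nu_\varepsilon$ is full rank, perturbing if needed. Letting $\varepsilon\to0$ and using continuity of these divergences at full-rank arguments ($\rho$ and $\gamma$ are full rank, so the functions are continuous near $\rho$) yields the non-strict inequalities. The endpoint $\alpha=\infty$ follows by taking limits.

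\emph{Sufficiency.} The obstacle is that the assumed inequalities are non-strict, while Theorem~\ref{thm: large sample and catalytic} needs strict ones; this is the reason the definition allows $\varepsilon$-approximation. We perturb the target by setting $\rho_\lambda=(1-\lambda)\rho+\lambda\gamma$ for small $\lambda>0$. By joint convexity in the appropriate sense, or more cleanly by data processing under the Gibbs-preserving channel $X\mapsto(1-\lambda)X+\lambda\Tr[X]\gamma$, each divergence does not increase. We then need strict decrease. Here $\rho\neq\gamma$ can be assumed, since otherwise the transformation is trivial via the replacement channel $X\mapsto\Tr[X]\gamma$. For the strict inequality, we use that each of $Q$ is jointly convex or concave along the segment and faithful, so that $\D(\rho_\lambda\|\gamma)<\D(\rho\|\gamma)$ strictly. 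A more robust route uses the uniform statement: by compactness of $\alpha\in[1/2,\infty]$ and continuity in $\alpha$, a strict gap must be uniform. We handle this as follows. Since the divergences of $\rho_\lambda$ behave like $O((1-\lambda)^2)$ scalings near the classical Rényi values of $(p,\gamma)$, which are positive when $p\neq\gamma$, and since the case $p=\gamma$ forces $\rho=\gamma$ by faithfulness, we obtain the strict inequalities for all $\alpha$, including near $\alpha=1$ via the derivation condition. Theorem~\ref{thm: large sample and catalytic} then produces a catalyst pair $(\nu,\eta)$ and a channel converting $p\otimes\nu\mapsto\rho_\lambda\otimes\nu$ and $\gamma\otimes\eta\mapsto\gamma\otimes\eta$, which is Gibbs-preserving for the Hamiltonian $-\beta^{-1}\log\eta$ on the catalyst. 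Choosing $\lambda$ with $\tfrac12\|\rho_\lambda-\rho\|_1\le\varepsilon$ finishes the argument. The step we expect to be hardest is establishing strict inequalities uniformly in $\alpha$ after the perturbation.
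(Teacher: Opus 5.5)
Your overall strategy matches the paper's. For sufficiency, you perturb to $\rho_\lambda=(1-\lambda)\rho+\lambda\gamma$, get strict inequalities from joint convexity/concavity of $\mathbb{Q}_\alpha$ plus faithfulness (given $\rho\neq\gamma$), apply Theorem~\ref{thm: large sample and catalytic} with $q=\sigma=\gamma$, and make $\eta$ the catalyst's Gibbs state. For necessity, you use data processing, additivity, cancellation of the catalyst term, and continuity as $\varepsilon\to0$.

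\textbf{The gap: cancellation for the reversed orderings.} There the catalyst term is $\D_\alpha(\gamma_C\|\nu_\varepsilon)$. It is $+\infty$ (e.g.\ for every $\alpha\ge1$) as soon as $\nu_\varepsilon$ is not full rank, and the definition of a catalytic operation gives you no control over the rank of $\nu_\varepsilon$.

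Your fallback ``perturbing if needed'' fails. Mixing $\nu_\varepsilon$ with a full-rank state destroys the exact product relation $\mathcal{F}_\varepsilon(p\otimes\nu_\varepsilon)=\rho'_\varepsilon\otimes\nu_\varepsilon$.

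Restricting to $M=I\otimes P_\varepsilon$, with $P_\varepsilon$ the support projection of $\nu_\varepsilon$, is the right idea, but it is not automatic. You must show that $X\mapsto M\mathcal{F}_\varepsilon(X)M$ is a channel on $\mathcal{L}(\supp M)$ that fixes $\gamma\otimes P_\varepsilon\gamma_C P_\varepsilon/\Tr[P_\varepsilon\gamma_C]$.
\begin{itemize}
\item Trace preservation is the easy part. Since $p$ is full rank, $p\otimes\nu_\varepsilon$ is faithful on $\supp M$ and is mapped into $\supp M$. This forces $M^\perp K_jM=0$ for every Kraus operator.
\item Invariance of the compressed Gibbs state additionally requires $MK_jM^\perp=0$. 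Otherwise the off-support blocks of $\gamma\otimes\gamma_C$ feed into the $M$-block of its image.
\end{itemize}
The paper obtains the second condition in Lemma~\ref{lem:support-reduction} from the faithful fixed point $\gamma\otimes\gamma_C$. The operator $\mathcal{F}_\varepsilon^\dagger(M)-M=\sum_j(MK_jM^\perp)^\dagger(MK_jM^\perp)\ge0$ has zero expectation in a faithful state, so it vanishes. You need this argument, or an equivalent one, to complete the necessity direction.

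\textbf{A secondary point: uniformity in $\alpha$ is not needed.} You single out a uniform strict gap in $\alpha$ as the hardest step, but Theorem~\ref{thm: large sample and catalytic} (via Theorem~\ref{thm:Fritz2022}) only asks for a strict inequality at each $\alpha$ separately. Your pointwise joint-convexity argument already supplies this. For example, for $\alpha>1$, $\mathbb{Q}_\alpha(\rho_\lambda\|\gamma)\le(1-\lambda)\mathbb{Q}_\alpha(\rho\|\gamma)+\lambda<\mathbb{Q}_\alpha(\rho\|\gamma)$ because $\mathbb{Q}_\alpha(\rho\|\gamma)>1$. Analogous one-line arguments cover $\alpha<1$, $\alpha=1$, $\alpha=\infty$ and the reversed ordering. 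Your remark about ``$O((1-\lambda)^2)$ scalings'' has no basis and should be dropped.
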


\begin{proof}
We first prove sufficiency. If $\rho=\gamma$, the claim follows trivially via the Gibbs-replacement channel with a trivial catalyst. We may therefore assume that $\rho\neq\gamma$.
For any $\varepsilon>0$, we define
\begin{equation}
    \rho_{\varepsilon} = (1-\varepsilon)\rho+\varepsilon\gamma.
    \label{eq:gibbs-perturbed-target}
\end{equation}
By construction, this state satisfies $\frac{1}{2}\bigl\|\rho_\varepsilon-\rho\bigr\|_1 \leq\varepsilon$. We next demonstrate that replacing $\rho$ with $\rho_{\varepsilon}$ renders all inequalities in \eqref{eq:gpo-forward-geometric}--\eqref{eq:gpo-reverse-alpha-z} strict. Let $\mathcal{D}$ denote the collection of relative entropies $\widehat{D}_{\alpha}$ for $\alpha\in[1/2,2]$ and $D_{\alpha,\alpha-1}$ for $\alpha\in[2,\infty]$, with both possible orderings of the arguments included.
For any $\mathbb{D}_\alpha \in \mathcal{D}$, we obtain
\begin{align}
    \mathbb{D}_\alpha(\rho \| \gamma) > \mathbb{D}_\alpha(\rho_\varepsilon \| \gamma).
\end{align}
This strict inequality follows from the joint concavity of the map $\mathbb{Q}_\alpha(\rho\|\sigma) = \exp\!\bigl((\alpha-1)\mathbb{D}_\alpha(\rho\|\sigma)\bigr)$ for $\alpha\in (0,1)$ and its joint convexity for $\alpha \in (1,\infty)$, combined with the assumption that $\rho\neq \gamma$ and the faithfulness of the relative entropy for all $\alpha>0$.  At $\alpha=1$, the conclusion follows
directly from the joint convexity and faithfulness of the
Belavkin--Staszewski relative entropy. At $\alpha=\infty$ one considers the trace term in the definition of the max-relative entropy.

To illustrate this explicitly, consider the case $\alpha>1$. Then, joint convexity gives
\begin{align}
    \mathbb{Q}_\alpha(\rho_\varepsilon\|\gamma)
    &\leq
    \left(1-\varepsilon \right)
    \mathbb{Q}_\alpha\!\left(\rho \middle\| \gamma \right)
    + \varepsilon <
    \mathbb{Q}_\alpha\!\left(\rho \middle\| \gamma \right),
\end{align}
where we used that the assumption $\rho\neq \gamma$ and the faithfulness of the relative entropy imply $ Q_\alpha\!\left(\rho \middle\| \gamma \right)
>1$.
By taking the logarithm, we obtain the desired strict inequality.

The assumed non-strict inequalities \eqref{eq:gpo-forward-geometric}--\eqref{eq:gpo-reverse-alpha-z}, together with the preceding argument, ensure that the conditions of Theorem~\ref{thm: large sample and catalytic} are satisfied for the input pair $(p,\gamma)$ and the output pair $(\rho_{\varepsilon},\gamma)$. Consequently, for any $\varepsilon>0$, there exist full-rank states $\nu_{\varepsilon}$ and $\eta_\varepsilon$, along with a channel $\mathcal{F}_\varepsilon$, such that
\begin{equation}
    \mathcal{F}_\varepsilon(p\otimes\nu_\varepsilon) = \rho_{\varepsilon}\otimes\nu_{\varepsilon}, \quad \text{and} \quad \mathcal{F}_\varepsilon(\gamma\otimes\eta_\varepsilon) = \gamma\otimes\eta_\varepsilon.
    \label{eq:gpo-catalytic-gibbs-exact}
\end{equation}
By choosing the catalyst's Hamiltonian such that its corresponding Gibbs state is $\eta_\varepsilon$, we obtain the desired Gibbs-preserving map.

Conversely, suppose that the approximate conversion is possible, and fix
$\mathbb{D}_{\alpha}\in\mathcal{D}$. Since $\eta_{\varepsilon}$ is a
full-rank Gibbs state,
$\mathbb{D}_{\alpha}(\nu_{\varepsilon}\|\eta_{\varepsilon})$ is finite.
Thus, data processing and additivity give
\begin{align}
 \mathbb{D}_{\alpha}(p\|\gamma)
 +\mathbb{D}_{\alpha}(\nu_{\varepsilon}\|\eta_{\varepsilon})
 =
 \mathbb{D}_{\alpha}
 \bigl(p\otimes\nu_{\varepsilon}
 \big\|
 \gamma\otimes\eta_{\varepsilon}\bigr)
 \geq
 \mathbb{D}_{\alpha}
 \bigl(\rho_{\varepsilon}\otimes\nu_{\varepsilon}
 \big\|
 \gamma\otimes\eta_{\varepsilon}\bigr)=
 \mathbb{D}_{\alpha}(\rho_{\varepsilon}\|\gamma)
 +\mathbb{D}_{\alpha}(\nu_{\varepsilon}\|\eta_{\varepsilon}),
\end{align}
and hence
\begin{align}
 \mathbb{D}_{\alpha}(p\|\gamma)
 \geq
 \mathbb{D}_{\alpha}(\rho_{\varepsilon}\|\gamma).
 \label{eq:forward-necessary}
\end{align}

For the reverse ordering,
$\mathbb{D}_{\alpha}(\eta_{\varepsilon}\|\nu_{\varepsilon})$ may be
infinite when $\nu_{\varepsilon}$ is not faithful. Let
$P_{\varepsilon}$ denote the support projection of
$\nu_{\varepsilon}$ and set
$M_{\varepsilon}=I\otimes P_{\varepsilon}$. 
Defining 
\begin{align}
 \eta_{P,\varepsilon}
 =
 \frac{
 P_{\varepsilon}\eta_{\varepsilon}P_{\varepsilon}
 }{
 \Tr[P_{\varepsilon}\eta_{\varepsilon}]
 },
\end{align}
By Lemma~\ref{lem:support-reduction}, the restricted map $ \mathcal{F}_{\varepsilon,M}(X)=M_\varepsilon \mathcal{F}_\varepsilon(X) M_\varepsilon$ is CPTP on $\supp(M_\varepsilon)$. Moreover, it preserves the projected Gibbs state. Hence, 
\begin{align}
 \mathcal{F}_{\varepsilon,M}
 \bigl(p\otimes\nu_{\varepsilon}\bigr)
 =
 \rho_{\varepsilon}\otimes\nu_{\varepsilon},
 \quad 
 \mathcal{F}_{\varepsilon,M}
 \bigl(\gamma\otimes\eta_{P,\varepsilon}\bigr)
 =
 \gamma\otimes\eta_{P,\varepsilon}.
\end{align}
Since the states $\nu_{\varepsilon}$ and $\eta_{P,\varepsilon}$  have coinciding supports, the relative entropies are finite. Therefore,
\begin{align}
 \mathbb{D}_{\alpha}(\gamma\|p)
 +\mathbb{D}_{\alpha}
   (\eta_{P,\varepsilon}\|\nu_{\varepsilon})
 =
 \mathbb{D}_{\alpha}
 \bigl(\gamma\otimes\eta_{P,\varepsilon}
 \big\|
 p\otimes\nu_{\varepsilon}\bigr)
 \geq
 \mathbb{D}_{\alpha}
 \bigl(\gamma\otimes\eta_{P,\varepsilon}
 \big\|
 \rho_{\varepsilon}\otimes\nu_{\varepsilon}\bigr)
 =
 \mathbb{D}_{\alpha}(\gamma\|\rho_{\varepsilon})
 +\mathbb{D}_{\alpha}
   (\eta_{P,\varepsilon}\|\nu_{\varepsilon}),
\end{align}
which yields
\begin{align}
 \mathbb{D}_{\alpha}(\gamma\|p)
 \geq
 \mathbb{D}_{\alpha}(\gamma\|\rho_{\varepsilon}).
 \label{eq:reverse-necessary}
\end{align}

Finally, since $\rho_{\varepsilon}\to\rho$ and the divergences in
$\mathcal{D}$ are continuous on faithful pairs, taking
$\varepsilon\to0$ in
\eqref{eq:forward-necessary} and \eqref{eq:reverse-necessary} gives
\begin{align}
 \mathbb{D}_{\alpha}(p\|\gamma)
 \geq
 \mathbb{D}_{\alpha}(\rho\|\gamma),
 \quad 
 \mathbb{D}_{\alpha}(\gamma\|p)
 \geq
 \mathbb{D}_{\alpha}(\gamma\|\rho).
\end{align}
These are precisely the conditions in
\eqref{eq:gpo-forward-geometric}--\eqref{eq:gpo-reverse-alpha-z}.
\end{proof}

\section{Acknowledgments}
We thank Marco Tomamichel for discussions. R.R. acknowledges financial support from the ERC grant GIFNEQ 101163938.

\section{Statement on the use of artificial intelligence}
R.R. conceived the framework, formulated the research question, conjectured the main result, and developed its applications. ChatGPT 5.6 Sol assisted in writing the manuscript and, under R.R.’s direction and explicit technical guidance, in deriving the optimal preparation map used in the proof of Lemma~\ref{lem: upper bound} and the full-rank approximation constructed in the proof of Lemma~\ref{lem:prepared-coinciding-support}.

\bibliographystyle{ultimate}
\bibliography{library}

\newpage

\appendix

\section{Useful lemmas}
The following lemma states that Schatten norms are contractive under pinching.

\begin{lemma}[Schatten contractivity of pinching]
\label{lem: Schatten contractivity}
Let \(q\ge1\) and let $L,X$ be Hermitian operators. Then,
\begin{align}
    \|\mathcal{P}_L(X)\|_q\le \|X\|_q.
\end{align}
In particular, if \(X\ge0\), then
\begin{align}
    \Tr [\mathcal{P}_L(X)^q]\le \Tr [X^q].
\end{align}
\end{lemma}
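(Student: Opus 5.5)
The statement to prove is the contractivity of Schatten norms under pinching, $\|\mathcal{P}_L(X)\|_q\le\|X\|_q$ for $q\ge1$ and Hermitian $L,X$. I would write the pinching as an average of unitary conjugations and then use convexity and unitary invariance of the norm. Let $P_1,\dots,P_m$ be the spectral projectors of $L$, so that $\mathcal{P}_L(X)=\sum_j P_jXP_j$ and $\sum_j P_j=I$. For $\omega=e^{2\pi i/m}$, define the unitaries
\begin{align}
U_k=\sum_{j=1}^m \omega^{jk}P_j,\qquad k=0,\dots,m-1.
\end{align}
The identity I will rely on is
\begin{align}
\frac1m\sum_{k=0}^{m-1}U_kXU_k^\dagger=\sum_{j,l}\Big(\frac1m\sum_k\omega^{(j-l)k}\Big)P_jXP_l=\sum_j P_jXP_j,
\end{align}
which follows from the orthogonality of characters: the inner sum over $k$ is $1$ when $j=l$ and $0$ otherwise.

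Given this representation, the triangle inequality for the Schatten $q$-norm, which is a genuine norm for $q\ge1$, together with its unitary invariance yields
\begin{align}
\|\mathcal{P}_L(X)\|_q\le\frac1m\sum_k\|U_kXU_k^\dagger\|_q=\|X\|_q.
\end{align}

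For the second claim, suppose $X\ge0$. Then $\mathcal{P}_L(X)=\sum_jP_jXP_j$ is also positive semidefinite, being a sum of terms of the form $A X A^\dagger$ with $A=P_j$. For a positive semidefinite operator $Y$ we have $\|Y\|_q^q=\Tr[Y^q]$. Raising both sides of the norm inequality to the power $q$, which is a monotone operation on nonnegative reals, gives $\Tr[\mathcal{P}_L(X)^q]\le\Tr[X^q]$.

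There is no real obstacle in this argument. The only point needing care is the character-sum identity that turns the pinching into a unitary mixture. An alternative route is majorization: the diagonal blocks of $X$ are majorized by $X$, and every unitarily invariant norm is Schur-convex, which gives the same inequality.
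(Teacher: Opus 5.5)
Your proof is correct and follows essentially the same route as the paper: you write the pinching as the uniform average of conjugations by $U_k=\sum_j \omega^{jk}P_j$, use convexity and unitary invariance of $\|\cdot\|_q$, and then use $\|Y\|_q^q=\Tr[Y^q]$ for $Y\ge 0$. Your version also explicitly verifies the character-sum identity and the positivity of $\mathcal{P}_L(X)$, both of which the paper leaves implicit.
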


\begin{proof}
Let us denote by $m=|\text{spec}(L)|$ and $P_j$ the orthogonal projectors onto the eigenspace of $L$ corresponding to the $j$-th eigenvalue.
Let us define the unitaries
\begin{align}
    U_\ell=\sum_{j=1}^{m}e^{\frac{2\pi i\ell j}{m}}P_j,
    \qquad
    \ell=0,\ldots,m-1.
\end{align}
Then,
\begin{align}
    \mathcal{P}_L(X)
    =
    \frac1m
    \sum_{\ell=0}^{m-1}
    U_\ell XU_\ell^\dagger.
\end{align}
Since the Schatten \(q\)-norm is convex for \(q\ge1\) and unitarily
invariant,
\begin{align}
     \|\mathcal{P}_L(X)\|_q
    \le
    \frac1m
    \sum_{\ell=0}^{m-1}
    \|U_\ell XU_\ell^\dagger\|_q
    =
    \|X\|_q.
\end{align}
If \(X\ge0\), then $\|X\|_q^q=\Tr X^q,$
and the lemma follows.
\end{proof}

In the following lemma, we show that whenever the states have coinciding supports, the optimization in the preparation map can be restricted to preparation distributions having the same support.
\begin{lemma}
\label{lem:prepared-coinciding-support}
Let \(\rho\) and \(\sigma\) be quantum states with coinciding supports. Then, for every \(\alpha\in(0,\infty]\),
\begin{align}
    D_{\alpha}^{\mathbb{P}}(\rho\|\sigma)
    =
    \inf_{\substack{p,q,\mathcal{F}\\
    \mathcal{F}(p)=\rho,\ \mathcal{F}(q)=\sigma\\
    \operatorname{supp}(p)=\operatorname{supp}(q)}}
    D_{\alpha}(p\|q).
    \label{eq:prepared-coinciding-support}
\end{align}
\end{lemma}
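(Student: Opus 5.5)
The inequality ``$\leq$'' is immediate: the right-hand side is an infimum over a subset of the preparations admitted in the definition of $D_\alpha^{\mathbb{P}}$. For ``$\geq$'' the plan is to take an arbitrary preparation $(p,q,\mathcal{F})$ with $\mathcal{F}(p)=\rho$ and $\mathcal{F}(q)=\sigma$ and finite cost $D_\alpha(p\|q)$. If the cost is infinite there is nothing to show. We then modify it into one with $\operatorname{supp}(p)=\operatorname{supp}(q)$ whose cost is at most $D_\alpha(p\|q)+\delta$, for arbitrary $\delta>0$. As in Lemma~\ref{lem: optimization problem}, write $\tau_x=\mathcal{F}(e_x)$, so that $\rho=\sum_x p_x\tau_x$ and $\sigma=\sum_x q_x\tau_x$. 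Outcomes with $p_x=q_x=0$ are simply deleted. For $\alpha>1$ (including $\alpha=\infty$), finiteness already forces $\operatorname{supp}(p)\subseteq\operatorname{supp}(q)$. The remaining outcomes are then those with $q_x>0=p_x$, and for $\alpha<1$ also those with $p_x>0=q_x$.

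The key step is a full-rank mixing. Since $\rho$ and $\sigma$ have coinciding supports, restrict to that support; there both are invertible, so $\rho\geq\lambda\sigma$ and $\sigma\geq\lambda\rho$ for some $\lambda>0$. Fix $\varepsilon\in(0,1)$ and define the new distributions
\begin{align}
    p^\varepsilon=(1-\varepsilon)p+\varepsilon q,\qquad q^\varepsilon=(1-\varepsilon)q+\varepsilon p .
\end{align}
These have identical supports, namely $\operatorname{supp}(p)\cup\operatorname{supp}(q)$. However, $\mathcal{F}(p^\varepsilon)=(1-\varepsilon)\rho+\varepsilon\sigma\neq\rho$, so a correction is needed. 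We append one extra outcome $x_\ast$ and rescale so that the outputs are exactly $\rho$ and $\sigma$. The choice is to solve
\begin{align}
    \rho=a\,\mathcal{F}(p^\varepsilon)+(1-a)\omega_1,\qquad \sigma=a\,\mathcal{F}(q^\varepsilon)+(1-a)\omega_2 .
\end{align}
A single outcome cannot output two different states, so instead we use two extra outcomes $y_1,y_2$ with $\mathcal{F}(e_{y_i})=\omega_i$. Both new distributions put positive weight on both $y_1$ and $y_2$: $\rho=a\mathcal{F}(p^\varepsilon)+b\,\omega_1+c\,\omega_2$ and $\sigma=a\mathcal{F}(q^\varepsilon)+c'\omega_1+b'\omega_2$, with all weights strictly positive. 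Feasibility amounts to requiring that $\rho-a\mathcal{F}(p^\varepsilon)$ and $\sigma-a\mathcal{F}(q^\varepsilon)$ be positive combinations of two fixed states. Taking $a=1-\eta$ with $\eta\gg\varepsilon$ small, we have $\rho-a\mathcal{F}(p^\varepsilon)=\eta\rho+(1-\eta)\varepsilon(\rho-\sigma)$, which is $\geq(\eta-\varepsilon C)\rho$ with $C$ depending on $\lambda$. A convenient concrete choice is $\omega_1=\rho$, $\omega_2=\sigma$, with weights solved from a $2\times2$ linear system; positivity of the weights is exactly guaranteed by the mutual domination $\rho\geq\lambda\sigma$ and $\sigma\geq\lambda\rho$.

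Finally, by continuity of $(p,q)\mapsto Q_\alpha(p\|q)$ on pairs with common support, the cost of the new preparation tends to $Q_\alpha(p\|q)$ as $\varepsilon,\eta\to0$. The extra outcomes carry weights $O(\eta)$ and bounded likelihood ratios, so they contribute $O(\eta)$. Mixing perturbs the existing terms continuously, because where $p_x=0<q_x$ we get $(\varepsilon q_x)^\alpha q_x^{1-\alpha}\to0$, and symmetrically. At $\alpha=\infty$ the maximal ratio changes by $O(\varepsilon+\eta)$. The main obstacle is making the correction step rigorous: we must check that the two-outcome completion has strictly positive weights in both distributions, and control the ratio terms near vanishing entries uniformly for $\alpha<1$ versus $\alpha>1$. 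If the $2\times 2$ system becomes delicate, the fallback is to use $\omega_1=\omega_2=(\rho+\sigma)/2$ with suitably chosen unequal weights, absorbing the discrepancy into the mixing parameter.
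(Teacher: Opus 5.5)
Your architecture is the same as the paper's: mix $p$ and $q$ to equalize supports, then append two extra outcomes with strictly positive weights in both distributions to restore the exact barycentres. The correction step as you instantiate it fails, however, and that step is the heart of the proof. Write $a=1-\eta$ and
$R_1=\rho-a\mathcal{F}(p^\varepsilon)=(\eta+a\varepsilon)\rho-a\varepsilon\sigma$, $R_2=\sigma-a\mathcal{F}(q^\varepsilon)=(\eta+a\varepsilon)\sigma-a\varepsilon\rho$.
Since $\rho\neq\sigma$ are states, they are linearly independent (apply the trace to $x\rho+y\sigma=0$). So with $\omega_1=\rho$, $\omega_2=\sigma$ your $2\times2$ system has a unique solution. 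That solution puts weight $-a\varepsilon<0$ on $\sigma$ in the decomposition of $R_1$, and symmetrically on $\rho$ in that of $R_2$. Mutual domination $\rho\geq\lambda\sigma$, $\sigma\geq\lambda\rho$ makes $R_1,R_2\geq0$ as operators when $\varepsilon\ll\eta$. It does not make them nonnegative combinations of $\rho$ and $\sigma$, so your claim that it guarantees positive weights is false. More generally, no pair of anchors lying in the segment between $\rho$ and $\sigma$ can work, because nonnegative combinations of such states have nonnegative coefficients on both $\rho$ and $\sigma$. Your fallback $\omega_1=\omega_2$ fails for the same reason, and it is your own observation that one outcome cannot output two different states. $R_1$ and $R_2$ would have to be nonnegative multiples of a single state. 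Yet their coefficient patterns in the basis $\{\rho,\sigma\}$ are $(+,-)$ and $(-,+)$, so they are never proportional.

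The missing idea is to place the anchors on the line through $\rho$ and $\sigma$ but outside that segment. For instance, take $\omega_1=\bigl((1+\kappa)R_1-\kappa R_2\bigr)/\eta$ and $\omega_2=\bigl((1+\kappa)R_2-\kappa R_1\bigr)/\eta$ with a fixed $\kappa\in\bigl(0,\lambda/(1-\lambda)\bigr)$. These have unit trace, and for $\varepsilon/\eta$ small enough they are faithful states; this is where mutual domination is genuinely used. Then $R_1=w_+\omega_1+w_-\omega_2$ and $R_2=w_-\omega_1+w_+\omega_2$, with $w_+=\eta(1+\kappa)/(1+2\kappa)$ and $w_-=\eta\kappa/(1+2\kappa)$, all strictly positive. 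The extra outcomes add exactly $\eta$ times a constant depending only on $(\kappa,\alpha)$ to $Q_\alpha$ (or to $D_1$). So for finite $\alpha$ your limit, $\varepsilon\to0$ followed by $\eta\to0$, goes through. This is precisely the paper's device: anchors $A=\rho+t\Delta$, $B=\sigma-t\Delta$ with $\Delta=\rho-\sigma$, full-support weights $r=(1+t,t)/(1+2t)$ and $s=(t,1+t)/(1+2t)$, then shifted by a small multiple of $\Delta$ to absorb the mixing error.

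Separately, your $\alpha=\infty$ claim does not follow. There the extra outcomes enter through a maximum of likelihood ratios rather than being weighted by $\eta$. With symmetric anchors that ratio is $(1+\kappa)/\kappa$, which is independent of $\eta$ and can exceed $\exp D_{\max}(p\|q)$. The paper instead argues directly. Any preparation has $D_{\max}(p\|q)\geq D_{\max}(\rho\|\sigma)$, because $\rho\leq M\sigma$ with $M=\max_x p_x/q_x$. The asymmetric full-support preparation with anchors $\rho+t(\rho-\sigma)$ and $(c\sigma-\rho)/(c-1)$, where $c=\exp D_{\max}(\rho\|\sigma)$, then attains exactly $\log c$.
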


\begin{proof}
The left-hand side of
\eqref{eq:prepared-coinciding-support} is no larger than the
right-hand side, since the latter infimum is taken over the smaller
class of preparations whose classical distributions have coinciding
supports. It therefore remains to prove the reverse inequality.

We may restrict the underlying Hilbert space to the common support of
\(\rho\) and \(\sigma\), so that both states are faithful. If
\(\rho=\sigma\), the claim follows from the one-outcome preparation
\(p=q=(1)\). We may therefore assume that \(\rho\neq\sigma\).
Fix a preparation of finite cost,
\begin{align}
    \rho
    =
    \sum_x p_x\tau_x,
    \quad 
    \sigma
    =
    \sum_x q_x\tau_x.
\end{align}
After removing any outcomes for which \(p_x=q_x=0\), we construct the
distributions
\begin{align}
    \widetilde{p}_{\lambda,\delta}
    =
    (1-\lambda)p_\delta\oplus\lambda r,
    \quad 
    \widetilde{q}_{\lambda,\delta}
    =
    (1-\lambda)q_\delta\oplus\lambda s,
    \label{eq:equal-support-preparation-distributions}
\end{align}
together with the preparation ensemble
\begin{align}
\label{eq: ensamble}
    \{\tau_x\}_x
    \cup
    \{A_{\lambda,\delta},B_{\lambda,\delta}\}.
\end{align}
We now show that these distributions have coinciding supports, prepare
\(\rho\) and \(\sigma\) exactly, and recover the cost of the original
preparation as the parameters tend to zero.

Set $\Delta
    =
    \rho-\sigma,$
and choose \(t>0\) sufficiently small that $A
    =
    \rho+t\Delta$ and $
    B
    =
    \sigma-t\Delta$
are faithful states. Define the full-rank binary distributions
\begin{align}
    r
    =
    \frac{1}{1+2t}(1+t,t),
    \quad 
    s
    =
    \frac{1}{1+2t}(t,1+t).
    \label{eq:faithful-anchor-distributions}
\end{align}
A direct calculation gives
\begin{align}
    r_1A+r_2B
    &=
    \frac{(1+t)(\rho+t\Delta)+t(\sigma-t\Delta)}
         {1+2t}
    =
    \rho,
    \\
    s_1A+s_2B
    &=
    \frac{t(\rho+t\Delta)+(1+t)(\sigma-t\Delta)}
         {1+2t}
    =
    \sigma.
\end{align}
For \(\delta\in(0,1)\), define
\begin{align}
    p_\delta
    =
    (1-\delta)p+\delta q,
    \quad 
    q_\delta
    =
    (1-\delta)q+\delta p.
    \label{eq:classical-support-perturbation}
\end{align}
These distributions satisfy $\operatorname{supp}(p_\delta)
    =
    \operatorname{supp}(q_\delta)
    =
    \operatorname{supp}(p)\cup\operatorname{supp}(q)$
and 
\begin{align}
    \sum_x p_{\delta,x}\tau_x
    &=
    (1-\delta)\rho+\delta\sigma
    =
    \rho-\delta\Delta,
    \\
    \sum_x q_{\delta,x}\tau_x
    &=
    (1-\delta)\sigma+\delta\rho
    =
    \sigma+\delta\Delta.
    \label{eq:perturbed-barycentres}
\end{align}
Fix \(\lambda\in(0,1)\) and define
\begin{align}
    A_{\lambda,\delta}
    =
    A
    +(1+2t)\frac{(1-\lambda)\delta}{\lambda}\Delta,
    \quad 
    B_{\lambda,\delta}
    =
    B
    -(1+2t)\frac{(1-\lambda)\delta}{\lambda}\Delta.
    \label{eq:corrected-anchor-states}
\end{align}
For every fixed \(\lambda\), these operators converge to \(A\) and
\(B\), respectively, as \(\delta\to0\). Since \(A\) and \(B\) are
faithful and \(\operatorname{Tr}\Delta=0\), the operators
\(A_{\lambda,\delta}\) and \(B_{\lambda,\delta}\) are states for all
sufficiently small \(\delta>0\). Moreover,
\begin{align}
    r_1A_{\lambda,\delta}+r_2B_{\lambda,\delta}
    &=
    \rho
    +\frac{(1-\lambda)\delta}{\lambda}\Delta,
    \\
    s_1A_{\lambda,\delta}+s_2B_{\lambda,\delta}
    &=
    \sigma
    -\frac{(1-\lambda)\delta}{\lambda}\Delta.
    \label{eq:corrected-anchor-barycentres}
\end{align}
Combining the above relations, we obtain
\begin{align}
    &(1-\lambda)\sum_x p_{\delta,x}\tau_x
    +\lambda
    \bigl(
        r_1A_{\lambda,\delta}
        +r_2B_{\lambda,\delta}
    \bigr)
    =
    \rho,
    \\
    &(1-\lambda)\sum_x q_{\delta,x}\tau_x
    +\lambda
    \bigl(
        s_1A_{\lambda,\delta}
        +s_2B_{\lambda,\delta}
    \bigr)
    =
    \sigma.
\end{align}
Thus, the distributions in
\eqref{eq:equal-support-preparation-distributions}, together with the
ensemble in~\eqref{eq: ensamble}, prepare \(\rho\) and \(\sigma\) exactly. They also
have coinciding supports because \(p_\delta\) and \(q_\delta\) have
coinciding supports and \(r\) and \(s\) both have full support.

For \(\alpha\neq1\), additivity over direct sums and joint homogeneity
give
\begin{align}
    Q_\alpha\bigl(
        \widetilde{p}_{\lambda,\delta}
        \big\|
        \widetilde{q}_{\lambda,\delta}
    \bigr)
    &=
    (1-\lambda)Q_\alpha(p_\delta\|q_\delta)
    +
    \lambda Q_\alpha(r\|s).
    \label{eq:equal-support-Q-cost}
\end{align}
For \(\alpha\in (0,1)\), continuity yields
\begin{align}
    \lim_{\delta\to0}
    Q_\alpha(p_\delta\|q_\delta)
    &=
    Q_\alpha(p\|q).
\end{align}
For \(\alpha>1\), finite cost implies $\operatorname{supp}(p)
    \subseteq
    \operatorname{supp}(q)$.
After the common-zero outcomes have been removed, \(q\) therefore has
full support, and the same continuity argument applies.
At \(\alpha=1\), the direct-sum decomposition instead gives
\begin{align}
    D_1\bigl(
        \widetilde{p}_{\lambda,\delta}
        \big\|
        \widetilde{q}_{\lambda,\delta}
    \bigr)
    &=
    (1-\lambda)D_1(p_\delta\|q_\delta)
    +
    \lambda D_1(r\|s),
    \label{eq:equal-support-relative-entropy-cost}
\end{align}
with $\lim_{\delta\to0}
    D_1(p_\delta\|q_\delta)
    =
    D_1(p\|q)$.
Consequently, for every finite \(\alpha>0\),
\begin{align}
    \lim_{\lambda\to0}
    \lim_{\delta\to0}
    D_\alpha\bigl(
        \widetilde{p}_{\lambda,\delta}
        \big\|
        \widetilde{q}_{\lambda,\delta}
    \bigr)
    &=
    D_\alpha(p\|q).
    \label{eq:equal-support-cost-convergence}
\end{align}
Thus, every finite-cost preparation can be approximated arbitrarily
well by preparations whose classical distributions have coinciding
supports. Taking the infimum proves the reverse inequality in
\eqref{eq:prepared-coinciding-support}, and hence the claim.

It remains to consider \(\alpha=\infty\). Set $D_{\max}(\rho\|\sigma)
    =
    \log c$ where $c
    =
    \min\{c'>0:\rho\leq c'\sigma\}$.
For any preparation with distributions $p$ and $q$ and ensemble $\{\tau_x\}_x$, let
\begin{align}
    M
    &=
    \max_x\frac{p_x}{q_x}
    =
    \exp\bigl(D_{\max}(p\|q)\bigr).
\end{align}
Then,
\begin{align}
    \rho
    =
    \sum_xp_x\tau_x
    &\leq
    M\sum_xq_x\tau_x
    =
    M\sigma,
\end{align}
and hence $D_{\max}(p\|q)
    \geq
    D_{\max}(\rho\|\sigma)$.
Conversely, \(c>1\) because \(\rho\neq\sigma\), and
\begin{align}
    B_\infty
    &=
    \frac{c\sigma-\rho}{c-1}
    \label{eq:max-anchor-state}
\end{align}
is a state. With \(A=\rho+t(\rho-\sigma)\), define $L
    =
    1+t+\frac{1}{c-1}$ and 
\begin{align}
    p_\infty
    =
    \frac{1}{L}
    \left(
    1+\frac{1}{c-1},t
    \right),
    \quad 
    q_\infty
    =
    \frac{1}{L}
    \left(
    \frac{1}{c-1},1+t
    \right).
    \label{eq:max-anchor-distributions}
\end{align}
These distributions have full support and satisfy
\begin{align}
    \rho
&=
    (p_\infty)_1A+(p_\infty)_2B_\infty,
    \quad 
    \sigma
    =
    (q_\infty)_1A+(q_\infty)_2B_\infty.
\end{align}
Furthermore,
\begin{align}
D_{\max}(p_\infty\|q_\infty)
    &=
    \log\max\left\{
    c,\frac{t}{1+t}
    \right\}
    =
    \log c
    =
    D_{\max}(\rho\|\sigma).
    \label{eq:max-anchor-cost}
\end{align}
This proves \eqref{eq:prepared-coinciding-support} also for \(\alpha=\infty\).
\end{proof}

\section{Support-reduction lemma}
In this appendix, we establish a support-reduction lemma needed to rigorously prove the converse direction of Theorem~\ref{thm:catalytic-gibbs-preserving} for the divergences with reversed arguments. The only subtlety arises when the support of the catalyst state is strictly contained in that of its Gibbs state, in which case the relative entropies may be infinite. We resolve this issue by restricting the channel to the support of the catalyst state and replacing the catalyst Gibbs state with its normalized projection onto the same subspace. The restricted map is CPTP and preserves the corresponding total Gibbs state. The relevant catalyst relative entropies are then finite, so the standard argument based on additivity and the data-processing inequality yields the required converse inequalities.

We first introduce some notation and terminology used in the proof. Let
\(\mathcal{F}:\mathcal{L}(\mathcal{H})\to\mathcal{L}(\mathcal{H})\)
be a quantum channel with Kraus representation
\begin{align}
    \mathcal{F}(X)
    =\sum_j K_j X K_j^\dagger,
    \qquad
    \sum_jK_j^\dagger K_j=I.
    \label{eq:kraus-representation}
\end{align}
We denote by \(\mathcal{F}^\dagger\) its adjoint with respect to the
Hilbert--Schmidt inner product, defined by
\begin{align}
    \Tr\big[Y^\dagger\mathcal{F}(X)\big]
    =\Tr\big[\bigl(\mathcal{F}^\dagger(Y)\bigr)^\dagger X\big],
    \qquad
    \mathcal{F}^\dagger(Y)=\sum_jK_j^\dagger YK_j.
    \label{eq:adjoint-definition}
\end{align}

If \(M\) is an orthogonal projection, we write
\(\supp(M)=M\mathcal{H}\) and set \(M^\perp=I-M\). We say that
\(\supp(M)\) reduces \(\mathcal{F}\) if
\begin{align}
    M^\perp K_jM=0
    \quad\text{and}\quad
    MK_jM^\perp=0
    \qquad\forall j.
    \label{eq:reducing-subspace}
\end{align}
In this case, every Kraus operator is block diagonal with respect to
\(\mathcal{H}=\supp(M)\oplus\supp(M^\perp)\), so the channel cannot
transfer support between the two subspaces.

Assumption~\eqref{eq:image-supported-in-M}, together with the fact that
\(\mu\) is faithful on \(\supp(M)\), first shows that no component can
flow from \(\supp(M)\) to \(\supp(M^\perp)\), and hence that
\(\supp(M)\) is invariant under \(\mathcal F\). The existence of a faithful fixed state then
rules out flow in the opposite direction and establishes reducibility.
This stronger property is needed to ensure that the projected fixed
state remains invariant under the restricted channel.

We now state the lemma.

\begin{lemma}
\label{lem:support-reduction}
Let \(\mathcal{F}:\mathcal{L}(\mathcal{H})\to
\mathcal{L}(\mathcal{H})\) be a quantum channel admitting a faithful
fixed state \(\omega>0\), so that
\begin{align}
    \mathcal{F}(\omega)=\omega.
    \label{eq:faithful-fixed-state}
\end{align}
Let \(0\neq\mu\geq0\), let \(M\) be its support projection, and suppose
that
\begin{align}
    \supp\!\left(\mathcal{F}(\mu)\right)\subseteq \supp(M).
    \label{eq:image-supported-in-M}
\end{align}
Then the following statements hold.
\begin{enumerate}
    \item The subspace \(\supp(M)\) reduces \(\mathcal{F}\).
    \label{item: first}
    \item The restricted map \(\mathcal{F}_M:\mathcal{L}(\supp(M))\to\mathcal{L}(\supp(M))\), defined by \(\mathcal{F}_M(X)=M\mathcal{F}(X)M\), is completely positive and trace preserving.
    \label{item: second}
    \item The normalized projection
    \begin{align}
        \omega_M
        =\frac{M\omega M}{\Tr[M\omega]}
        \label{eq:compressed-fixed-state}
    \end{align}
    is faithful on \(\supp(M)\) and satisfies
    \begin{align}
        \mathcal{F}_M(\omega_M)=\omega_M.
        \label{eq:compressed-fixed-state-invariant}
    \end{align}
    \label{item: third}
\end{enumerate}
\end{lemma}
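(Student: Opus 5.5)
The plan is to prove the three items in order, working throughout with the Kraus representation \eqref{eq:kraus-representation} and writing every operator in block form with respect to \(\mathcal H=\supp(M)\oplus\supp(M^\perp)\).

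\emph{Step 1: no flow out of \(\supp(M)\).} Since \(\mu\) is faithful on \(\supp(M)\), there is \(c>0\) with \(M\le c\,\mu\). Positivity of \(\mathcal F\) then gives \(\mathcal F(M)\le c\,\mathcal F(\mu)\). Compressing by \(M^\perp\) and using \eqref{eq:image-supported-in-M} yields \(M^\perp\mathcal F(M)M^\perp=0\), that is, \(\sum_j (M^\perp K_jM)(M^\perp K_jM)^\dagger=0\). Hence \(M^\perp K_jM=0\) for all \(j\), so \(\supp(M)\) is invariant.

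\emph{Step 2: no flow into \(\supp(M)\).} This is the main obstacle and the place where the faithful fixed state \(\omega\) enters. From \(M^\perp K_j M=0\) we get \(K_jM=MK_jM\), and therefore
\begin{align}
\sum_j MK_j^\dagger M K_j M=\sum_j MK_j^\dagger K_jM=M .
\end{align}
Equivalently, \(M\mathcal F^\dagger(M)M=M\). We next use \(\mathcal F^\dagger(I)=I\) together with \(0\le\mathcal F^\dagger(M)\le I\). The \(M\)-block of \(\mathcal F^\dagger(M)\) equals \(M\), so the compression argument for positive contractions forces the off-diagonal blocks to vanish. Hence \(\mathcal F^\dagger(M)=M\oplus Y\) with \(0\le Y\le M^\perp\).

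Now apply the fixed-point relation:
\begin{align}
\Tr[M\omega]=\Tr[M\mathcal F(\omega)]=\Tr[\mathcal F^\dagger(M)\omega]=\Tr[M\omega]+\Tr[Y\omega].
\end{align}
Thus \(\Tr[Y\omega]=0\). Since \(\omega>0\), this gives \(Y=0\). But \(Y=\sum_j M^\perp K_j^\dagger MK_jM^\perp\), so \(MK_jM^\perp=0\) for all \(j\). Together with Step 1, this proves item~\ref{item: first}.

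\emph{Step 3: items~\ref{item: second} and~\ref{item: third}.} By reducibility, each \(K_j\) is block diagonal, so \(\mathcal F_M(X)=\sum_j (MK_jM)X(MK_jM)^\dagger\) for \(X\) supported on \(\supp(M)\). This map is completely positive. It is also trace preserving, because \(\sum_j MK_j^\dagger MK_jM=M\) by Step 2.

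For the fixed state, write \(\omega\) in block form. Block diagonality of the Kraus operators gives \(M\mathcal F(\omega)M=\mathcal F_M(M\omega M)\), since the off-diagonal blocks of \(\omega\) are mapped to off-diagonal blocks. Hence \(\mathcal F_M(M\omega M)=M\omega M\). Faithfulness of \(M\omega M\) on \(\supp(M)\) follows from \(\omega>0\). Normalizing by \(\Tr[M\omega]>0\) gives \eqref{eq:compressed-fixed-state-invariant}.
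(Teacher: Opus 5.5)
Your proof is correct and follows the same route as the paper. You first use the support condition on $\mathcal{F}(\mu)$ to get $M^\perp K_jM=0$. You then show $\mathcal{F}^\dagger(M)=M+Y$ with $Y=\sum_j (MK_jM^\perp)^\dagger(MK_jM^\perp)\ge 0$, and $\Tr[\omega Y]=0$ together with $\omega>0$ forces $Y=0$; items 2 and 3 then follow from block diagonality of the Kraus operators. The only differences are cosmetic: you use $M\le c\,\mu$ where the paper inverts $\mu^{1/2}$ on $\supp(M)$, and you get the vanishing off-diagonal blocks of $\mathcal{F}^\dagger(M)$ from $0\le\mathcal{F}^\dagger(M)\le I$ rather than from the paper's direct Kraus computation with trace preservation.
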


\begin{proof}
Set \(M^\perp=I-M\), and fix the Kraus representation
\eqref{eq:kraus-representation}. We prove the three statements in order.

\medskip
\noindent
We first prove statement~\ref{item: first}. We begin by showing that \(\supp(M)\) is
invariant. Since \(M\) is the support projection of \(\mu\), we have
\(\mu=M\mu M\), and the restriction of \(\mu\) to \(\supp(M)\) is
strictly positive. Assumption~\eqref{eq:image-supported-in-M} gives
\begin{align}
    0
    =M^\perp \mathcal{F}(\mu)M^\perp
    =\sum_jM^\perp K_jM\mu MK_j^\dagger M^\perp.
    \label{eq:no-flow-out-sum}
\end{align}
Every summand on the right-hand side is positive semidefinite. A sum of
positive semidefinite operators can vanish only if every summand
vanishes. Consequently,
\begin{align}
    M^\perp K_jM\mu MK_j^\dagger M^\perp=0
    \qquad\text{for every }j.
\end{align}
Equivalently,
\begin{align}
    M^\perp K_jM\mu^{1/2}=0
    \qquad\text{for every }j.
\end{align}
The operator \(\mu^{1/2}\) is invertible on \(\supp(M)\), and hence
\begin{align}
    M^\perp K_jM=0
    \qquad\text{for every }j.
    \label{eq:no-flow-out}
\end{align}
Thus, every state supported on \(\supp(M)\) is mapped to a state
supported on the same subspace. This proves that \(\supp(M)\) is
invariant.

We next show that no component can flow from \(\supp(M^\perp)\) into
\(\supp(M)\). We now verify that
invariance implies
\(\mathcal{F}^\dagger(M)\geq M\). First,
\begin{align}
    M\mathcal{F}^\dagger(M)M
    =\sum_jMK_j^\dagger MK_jM
    =\sum_jMK_j^\dagger K_jM
    =M,
    \label{eq:MM-block}
\end{align}
where the second equality uses \eqref{eq:no-flow-out}, and the last one
uses \(\sum_jK_j^\dagger K_j=I\). Similarly, trace preservation and
\eqref{eq:no-flow-out} give
\begin{align}
    M\mathcal{F}^\dagger(M)M^\perp
    =\sum_jMK_j^\dagger MK_jM^\perp
    =\sum_jMK_j^\dagger K_jM^\perp
    =MIM^\perp
    =0.
    \label{eq:MMperp-block}
\end{align}
Taking the adjoint shows that
\(M^\perp\mathcal{F}^\dagger(M)M=0\), while
\begin{align}
    M^\perp\mathcal{F}^\dagger(M)M^\perp
    =\sum_jM^\perp K_j^\dagger MK_jM^\perp
    =\sum_j(MK_jM^\perp)^\dagger(MK_jM^\perp)
    \geq0.
    \label{eq:MperpMperp-block}
\end{align}
Combining \eqref{eq:MM-block}--\eqref{eq:MperpMperp-block}, we obtain
\begin{align}
    \mathcal{F}^\dagger(M)-M
    =\sum_j(MK_jM^\perp)^\dagger(MK_jM^\perp)
    \geq0,
    \label{eq:subharmonicity-explicit}
\end{align}
where the operator on the right-hand side is supported on
\(\supp(M^\perp)\).

We now evaluate this positive operator in the faithful fixed state
\(\omega\). Using the defining relation for the adjoint and
\eqref{eq:faithful-fixed-state},
\begin{align}
    \Tr\big[\omega\bigl(\mathcal{F}^\dagger(M)-M\bigr)\big]
    =\Tr[\mathcal{F}(\omega)M]
    -\Tr[\omega M]
    =0.
    \label{eq:stationary-mass-balance}
\end{align}
Let \(A=\mathcal{F}^\dagger(M)-M\geq0\). Since \(\omega>0\) and the
Hilbert space is finite-dimensional, there exists \(m>0\) such that
\(\omega\geq mI\). Therefore,
\begin{align}
    0=\Tr[\omega A]\geq m\Tr[A]\geq0.
\end{align}
It follows that \(\Tr[A]=0\), and the positivity of \(A\) then implies
\(A=0\). Hence,
\begin{align}
    \mathcal{F}^\dagger(M)=M.
    \label{eq:harmonic-projection}
\end{align}
Taking the \(M^\perp\)-block in \eqref{eq:harmonic-projection} and using
\eqref{eq:MperpMperp-block}, we obtain
\begin{align}
    0
    =M^\perp\mathcal{F}^\dagger(M)M^\perp
    =\sum_j(MK_jM^\perp)^\dagger(MK_jM^\perp).
\end{align}
Every summand is positive semidefinite, so
\begin{align}
    MK_jM^\perp=0
    \qquad\text{for every }j.
    \label{eq:no-flow-in}
\end{align}
Together, \eqref{eq:no-flow-out} and \eqref{eq:no-flow-in} show that
\(\supp(M)\) reduces \(\mathcal{F}\), proving statement~\ref{item: first}.

\medskip
\noindent
We now prove statement~\ref{item: second}. Define
\begin{align}
    A_j=MK_jM.
\end{align}
For \(X\in\mathcal{L}(\supp(M))\), identified with an operator satisfying
\(X=MXM\), we have
\begin{align}
    \mathcal{F}_M(X)=\sum_jA_jXA_j^\dagger.
    \label{eq:restricted-kraus}
\end{align}
Thus, \(\mathcal{F}_M\) is completely positive. Moreover,
\begin{align}
    \sum_jA_j^\dagger A_j
    =\sum_jMK_j^\dagger MK_jM
    =\sum_jMK_j^\dagger K_jM
    =M.
    \label{eq:restricted-trace-preserving}
\end{align}
The projection \(M\) is the identity operator on the reduced Hilbert
space \(\supp(M)\). Equation~\eqref{eq:restricted-trace-preserving} is
therefore precisely the trace-preservation condition for
\(\mathcal{F}_M\).

\medskip
\noindent
We finally prove statement~\ref{item: third}. The reducing relations imply that every
\(K_j\) is block diagonal. Consequently, the \(M\)-block of
\(\mathcal{F}(\omega)\) depends only on the \(M\)-block of \(\omega\).
Explicitly,
\begin{align}
    \mathcal{F}_M(M\omega M)
    =\sum_jMK_jM\,(M\omega M)\,MK_j^\dagger M
    =M\mathcal{F}(\omega)M
    =M\omega M.
    \label{eq:compressed-state-fixed}
\end{align}
Because \(\omega>0\), the compression \(M\omega M\) is strictly
positive on \(\supp(M)\), and its trace is nonzero. Dividing
\eqref{eq:compressed-state-fixed} by \(\Tr[M\omega]\) proves
\eqref{eq:compressed-fixed-state-invariant}.
\end{proof}

Notice that \(X\mapsto M\mathcal{F}(X)M\) need not be trace preserving
when viewed as a map on the original full algebra
\(\mathcal{L}(\mathcal{H})\). It is a quantum channel because its domain
and codomain are restricted to \(\mathcal{L}(\supp(M))\), on which
\(M\) is the identity operator.

\end{document}